\newcommand{\sfrac}[2]{#1\nicefrac{}{#2}}
\renewcommand{\floatc@ruled}[2]{\vspace{2pt}{\@fs@cfont #1.\:} #2 \par
 \vspace{1pt}}
\theoremstyle{break}
\newtheorem{Thm}{Theorem}
\newtheorem{theorem}[Thm]{Theorem}
\newtheorem{Lem}[Thm]{Lemma}
\newtheorem{Cor}[Thm]{Corollary}
\newtheorem{claim}[Thm]{Claim}
\theoremstyle{plain}
\newtheorem{Def}{Definition}
\newenvironment{proof}{\noindent {\sc Proof:}}{$\Box$ 
        }
 {
        \begin{enumerate}}{\end{enumerate}}
\newcommand{\marginlabel}[1]%
{\mbox{}\marginpar{\it{\raggedleft\hspace{0pt}#1}}}
\newcommand\card[1]{\left| #1 \right|} 
\newcommand\set[1]{\left\{#1\right\}} 
\newcommand\poly{\mbox{poly}}  
\newcommand\R{\mathbb{R}}
\newcommand{\row}[2]{#1^{(#2)}}
\newcommand{\mat}[3]{#1^{(#2)}_{#3}}
\newcommand{\norm}[1]{\lVert #1 \rVert}
\newcommand{\tilT}{\tilde{T}}
\newcommand{\E}{\mathop{\mathbb E}\displaylimits}
\newcommand{\Exp}{\mathop{\mathbb E}\displaylimits}
\newcommand{\EExp}{\mathop{\hat{\mathbb E}}\displaylimits}
\newcommand{\Var}{\mathop{\mathbb{V}}\displaylimits}
\newcommand{\bias}{\hat{B}}
\def\shownotes{1}  
\newcommand{\authnote}[2]{{ $\ll$\textsf{\footnotesize #1 notes: #2}$\gg$}}
\newcommand{\authnote}[2]{}
\newcommand{\Tnote}[1]{{\authnote{Tengyu}{#1}}}
\title{More algorithms for provable dictionary learning}
\author{Sanjeev Arora\thanks{Princeton University, Computer Science Department and Center for Computational Intractability. Email: arora@cs.princeton.edu. This work is supported by the NSF grants CCF-0832797, CCF-1117309, CCF-1302518, DMS-1317308, and Simons Investigator Grant.} \and Aditya Bhaskara\thanks{Google Research NYC.  Email: bhaskara@cs.princeton.edu.  Part of this work was done while the author was a Postdoc at EPFL, Switzerland.}  \and Rong Ge\thanks{Microsoft Research. Email: rongge@microsoft.com. Part of this work was done while the author was a graduate student at Princeton University and was supported in part by NSF grants CCF-0832797, CCF-1117309, CCF-1302518, DMS-1317308, and Simons Investigator Grant.} \and Tengyu Ma\thanks{Princeton University, Computer Science Department and Center for Computational Intractability. Email: tengyu@cs.princeton.edu. This work is supported by the NSF grants CCF-0832797, CCF-1117309, CCF-1302518, DMS-1317308, and Simons Investigator Grant.}}
\begin{document}

\maketitle
\begin{abstract}
In {\em dictionary learning}, also known as {\em sparse coding}, the algorithm is given samples of the form $y = Ax$ where $x\in \mathbb{R}^m$ is an unknown random sparse vector and $A$ is an unknown dictionary matrix in $\R^{n\times m}$ (usually $m > n$, which is the {\em overcomplete} case). The goal is to learn $A$ and $x$. 
This problem has been studied in neuroscience, machine learning, visions, and image processing. In practice it is solved by heuristic algorithms and provable algorithms seemed 
hard to find. Recently, provable algorithms were found that work if the unknown feature vector $x$ is $\sqrt{n}$-sparse or even sparser. Spielman et al.~\cite{DBLP:journals/jmlr/SpielmanWW12} 
did this for dictionaries where $m=n$; Arora et al.~\cite{AGM} gave an algorithm for overcomplete ($m >n$) and incoherent matrices $A$; and
Agarwal et al.~\cite{DBLP:journals/corr/AgarwalAN13} handled a similar case but with weaker guarantees.

This raised the problem of designing provable algorithms that allow sparsity $\gg \sqrt{n}$ in the hidden vector $x$. The current paper designs algorithms that allow sparsity up to $n/poly(\log n)$.
It works for a class of matrices where features are {\em individually recoverable}, a new notion identified in this paper that may motivate further work. 

The algorithm runs in quasipolynomial time because they use limited enumeration.

\end{abstract}
\newpage

\section{Introduction}

Dictionary learning, also known as {\em sparse coding}, tries to understand the structure of observed samples $y$ by representing them as sparse linear combinations of ``dictionary'' elements. More precisely, there is an unknown dictionary matrix $A \in \R^{n\times m}$ (usually $m > n$, which is the {\em overcomplete} case), and the algorithm is given samples $y = Ax$ where $x$ is an unknown random sparse vector. (We say a vector is {\em $k$-sparse}  if it has at most $k$ nonzero coordinates.) The goal is to learn $A$ and $x$.
Such sparse representation was first studied in neuroscience, where Olshausen and Field~\cite{olshausen1997sparse} suggested that dictionaries fitted to real-life images have similar properties as the receptive fields of neurons in the first layer of visual cortex. Inspired by this neural analog, dictionary learning is widely used in machine learning for {\em feature selection}~\cite{DBLP:conf/nips/ArgyriouEP06}. More recently the idea of sparse coding has also influenced deep learning~\cite{boureau2007sparse}. In image processing, learned dictionaries have been successfully applied to image denoising~\cite{elad2006image}, edge detection~\cite{mairal2008discriminative} and super-resolution~\cite{yang2008image}. 

Provable guarantees for dictionary learning have seemed difficult because the obvious math programming formulation is nonconvex: both $A$ and the $x$'s are unknown. 
Even when the dictionary $A$ is known, it is in general NP-hard to get the sparse combination $x$ given {\em worst-case} $y$~\cite{davis1997adaptive}. This problem of decoding $x$ given $Ax$ with full knowledge of $A$ is called {\em sparse recovery} or {\em sparse regression}, and is closely related to {\em compressed sensing}. For many types of dictionary $A$, sparse recovery was shown to be tractable even on worst-case $y$, starting with such a result for {\em incoherent} matrices  by Donoho and Huo~\cite{donoho2001uncertainty}. However in most early works $x$ was constrained to be $\sqrt{n}$-sparse, until Candes, Romberg and Tao~\cite{candes2006robust} showed how to do sparse recovery even when the sparsity is $\Omega(n)$, assuming $A$ satisfies the {\em restricted isometry property} (RIP) (which random matrices do).

But dictionary learning itself (recovering $A$ given samples $y$) has proved much harder and heuristic algorithms are widely used. Lewicki and Sejnowski~\cite{lewicki2000learning} designed the first one, which was followed by the method of optimal directions (MOD)~\cite{engan1999method} and K-SVD~\cite{aharon2006img}. See~\cite{AharonThesis} for more references. However, until recently there were no algorithms that provably recovers the correct dictionary. Recently Spielman et al.~\cite{DBLP:journals/jmlr/SpielmanWW12} gave such an algorithm  for the
full rank case (i.e., $m=n$) and the unknown feature vector $x$ is $\sqrt{n}$-sparse. However, in practice overcomplete dictionaries ($m > n$) are
preferred.  Arora et al.~\cite{AGM} gave the first provable learning algorithm for overcomplete dictionaries that runs in polynomial-time; they required $x$ to be $n^{1/2 -\epsilon}$-sparse (roughly speaking)
and $A$ to be incoherent. Independently, Agarwal et al.~\cite{DBLP:journals/corr/AgarwalAN13} gave a weaker algorithm that also assumes $A$ is incoherent and allows $x$ to be $n^{1/4}$-sparse. 
Thus all three of these recent algorithms cannot handle sparsity more than $\sqrt{n}$, and this is a fundamental limitation of the technique: they require two random $x, x'$  to
intersect in no more than $O(1)$ coordinates with high probability, which fails to hold when sparsity $\gg \sqrt{n}$. Since sparse recovery (where $A$ is known) is possible even up to
sparsity $\Omega(n)$, this raised the question whether dictionary learning is possible in that regime. In this paper we will refer to feature vectors with sparsity $n/poly(\log n)$
 as {\em slightly-sparse}, since methods in this paper do not seem to allow density higher than that. 
 
 In our recent paper on deep learning~(\cite{DBLP:journals/corr/AroraBGM13}, Section 7) we showed how to solve dictionary learning in this regime for dictionaries which are adjacency matrices of 
  random weighted sparse bipartite graphs; these are known to allow sparse recovery albeit with a slight twist in the problem definition~\cite{DBLP:conf/soda/Indyk08, DBLP:journals/tit/JafarpourXHC09, IndykStrauss}.
  Since real-life dictionaries are probably
  not random, this raises the question whether dictionary learning is possible in the slightly sparse case for other dictionaries.
  The current paper gives quasipolynomial-time algorithms  for learning more such dictionaries. The running time is quasipolynomial time because it 
  uses limited enumeration (similarly, e.g., to algorithms for learning gaussian mixtures). Now we discuss this class of dictionaries. 
  
Some of our discussion below refers to nonnegative dictionary learning, which constrains matrices $A$ and hidden vector $x$ to have nonnegative entries. This is a popular variant proposed by Hoyer~\cite{hoyer2002non},  motivated again partly by the neural analogy. Algorithms like NN-K-SVD~\cite{aharon2005k} were then applied to image classification tasks. This version is also related to {\em nonnegative matrix factorization}~\cite{lee1999learning}, which has been observed to lead to factorizations that are usually sparser and more local than traditional methods like SVD.

 \subsection{How to define dictionary learning?}
 
 Now we discuss what versions of dictionary learning make more sense than others. For exposition purposes we refer to the coordinates of the hidden vector $x$ as
 {\em features}, and those of the visible vector $y=Ax$ as {\em pixels}, even though the discussion applies to more than just computer vision.
 Dictionary learning as defined here ---which is the standard definition---assumes that features' effect on the pixels add linearly. 
 
But, the problem definition is somewhat arbitrary. On the one hand one could consider
 {\em more general} (and nonlinear) versions of this problem ---for instance in 
 vision,  dictionary learning is part of a system that has to deal with {\em occlusions} among objects that may hide part of a 
 feature, and to incorporate the fact that features may be present with an arbitrary translation/rotation. On the other hand, 
 one could consider more specific versions that place restrictions on the dictionary, since not all 
 dictionaries may make sense in applications. We consider this latter possibility now, with the usual caveat that
 it is nontrivial to cleanly formalize properties of real-life instances. 
 
 One reasonable property of real-life dictionaries is that each feature does not involve most pixels.  This implies that column vectors of $A$ are
 {\em relatively sparse}. Thus  matrices with RIP property ---at least if they are dense--- do not seem a good match\footnote{By contrast, in the usual setting
 of compressed sensing, the matrix provides a basis for making measurements, and its density is a nonissue.}.

 Another intuitive property is that features are {\em individually recoverable}, which means, roughly speaking, that to an observer who knows the dictionary, the presence of
 a particular feature should not be confusable with the effects produced by the usual distribution of other features
 (this is an average-case condition, since $x$ satisfies stochastic assumptions). In particular, one should be able to detect its presence
 by looking only at the pixels it would affect. 
 
 Thus it becomes clear that not all matrices that allow sparse recovery are of equal interests. The paper of
 Arora et al.~\cite{AGM} restricts attention to {\em incoherent} matrices, where the columns have pairwise inner product at most
 $\mu/\sqrt{n}$ where $\mu$ is small, like $\poly(\log n)$. These make sense on both the above counts. First, they can have fairly sparse
 columns. Secondly they satisfy $A^T A \approx I$, so given $Ax$ one can take its inner product with the $i$th column $A_{i}$ to roughly determine the extent to which feature
 $i$ is present.  But incoherent matrices restrict sparsity to $O(\sqrt{n})$, so one is tempted by RIP matrices but, as mentioned, their columns are fairly dense. 
 Furthermore, RIP matrices were designed to allow sparse recovery for worst-case feature vectors whereas in dictionary learning these are stochastic.
As mentioned, sparse random graphs (with random edges weights in $[-1,1]$) check all the right boxes (and were handled in our recent paper on deep learning) but require positing
that the dictionary has no structure. The goal in the current paper is to move beyond random graphs.
 
 \paragraph{Dictionaries with individually recoverable features.} Let us try to formulate the property that features are {\em individually recoverable}. We hope this definition and discussion will 
 stimulate further work (similar, we hope, to Dasgupta's formalization of  separability for gaussian mixtures~\cite{DBLP:conf/focs/Dasgupta99}). Let us assume  that the coordinates of $x$ are pairwise independent. Then the presence of the $i$th feature (i.e., $x_i \neq 0$) changes the 
 conditional distribution of those pixels involved in $A_{i}$, the $i$th column of $A$.  Features are said to be {\em individually recoverable} if this change in conditional distribution is
 not obtainable from other combinations of features that arise with reasonable probability.  
 This statistical property is hard to work with and below we suggest some (possibly too strong) combinatorial properties of the support of $A$ that imply it.
 Better formalizations seem quite plausible and are left for future work.

\section{Definitions and Results}
\label{sec:prelim}

The dictionary is an unknown matrix $A \in \R^{n\times m}$. We are given i.i.d samples $y$ that are generated by $y = Ax$, where $x \in \R^m$ is chosen from some distribution.
We have $N$ samples $y^i = Ax^i$ for $i = 1,\dots, N$.  
As in the introduction,
we will refer to coordinates of $x$ as {\em features} and those of $y$ as {\em pixels}, even though vision isn't the only intended application. 
For most of the paper we assume the entries of $x$ are independent Bernoulli variables: each $x_i$ is 1 with probability $\rho$ and $0$ with probability $1-\rho$; we refer to this  as $\rho$-Bernoulli distribution. 
This assumption can be relaxed somewhat: we only need that entries of $x$ are pairwise independent, and $e^Tx$ should satisfy concentration bounds for reasonable vectors $e$. 
The nonzero entries of $x$ can also be in $[1,c]$ instead of being exactly 1.

The $j$th column of $A$ is denoted by  $A_j$, the $i$th row of $A$ by $\row{A}{i}$, and the entries of $A$ are denoted by $\mat{A}{i}{j}$. 

For ease of exposition we first describe our learning algorithm for nonnegative dictionaries (i.e., $\mat{A}{i}{j} \geq 0$, for all $i\in [n], j\in [m]$) and then in Section~\ref{sec:general} describe the generalization to
the general case. Note that the subcase of nonnegative dictionaries is also of practical interest.

\subsection{Nonnegative dictionaries} 
By normalizing, we can assume without loss of generality that the expected value of each pixel is $1$, that is, $\Exp[y_i] = \E[(Ax)_i] = 1$. 
We also assume that  $|\mat{A}{i}{j}| \leq \Lambda$ for some constant $\Lambda$ \footnote{Though the absolute value notations are redundent in the nonnegative case, we keep them so that they can be adapted to the general case. Similarly for the definition of $G_b$ later.}: no entry of $A$ is too large. 
Let $G_{b}$ be the bipartite graph defined by entries of $A$ that have magnitudes larger than or equal to $b$, that is, $G_b = \{(i,j) : |\mat{A}{i}{j}| \ge b, i\in [n],j\in [m]\}$. 
We make two assumptions about this graph (the parameters $d$, $\sigma$ and $\kappa$ will be chosen later).

\begin{description}
\item[Assumption 1:] (Every feature has significant effect on pixels) There are at least $d$ edges with weights larger than $\sigma$ for every feature $j$. That is, the degree of $G_{\sigma}$ on the feature side is always larger than $d$. 
\item[Assumption 2:] (Low pairwise intersections among features) In $G_{\tau}$ the neighborhood of each feature (that is, $\{i\in [n] : \mat{A}{i}{j} \ge \tau \}$) has intersection up to $d/10$ (with total weight $< d\sigma/10$) with each of at most $o(1/\sqrt{\rho})$ other features, and intersection at most $\kappa$ with the neighborhood of each remaining features. Here $\tau$ is 
$O_{\theta}(1/\log n)$ as explained below and $\kappa = O_{\theta}(d/\log^2 n)$.
\end{description}

\noindent{\bf Guideline through notation:} We will think of $\sigma \le 1$ as a small constant, $\Lambda \ge 1$ a constant, and $\Delta$ a sufficiently large constant which is used to control the assumption. 
Let $\theta = (\sigma, \Lambda, \Delta)$ and we use the notation $O_{\theta}(\cdot)$ to hide the dependencies of $\sigma,\Lambda,\Delta$. Also, we think of $m$ as not much larger than $n$, $\rho < 1/\poly(\log n)$, and $d \ll n$.
 The normalization assumption implies (for all practical purposes in the algorithm) that $m d \rho \in [n/\Lambda, n/\tau]$. 
 We typically think of $d$ as $1/\rho$, hence
 a running time
 of $m^d$ would be bad (though it is unclear a priori how to even achieve that). 
 
 Precisely, for our algorithms to work, 
 we need $d \ge \Delta\Lambda \log^2 n/\sigma^2$, $\tau = O(\sigma^4/\Delta \Lambda^2 \log n) = O_\theta(1/\log n)$ and 
$\kappa = O(\sfrac{\sigma^8 d}{\log^2 n \Delta^2 \Lambda^6}) = O_{\theta}(d/\log^2 n)$ for some sufficiently large constant $\Delta$ and the density $\rho = o(\sfrac{\sigma^5}{\Lambda^{6.5}\log^{2.5} n}) = o_{\theta}(1/\log^{2.5} n)$.
Note that if $G_{\tau}$ were like a random graph (i.e. if features affect random
sets of $d$ pixels) when $d^2 \ll n$, the pairwise intersection $\kappa$ between the neighborhoods of two features in $G_{\tau}$ would be $O(1)$. However, we allow these intersections to be $\kappa = O_{\theta}(d/\log^2 n)$. 

Now we give a stronger version of Assumption 2 which will allow a stronger algorithm.

\begin{description}
\item[Assumption 2':]  In $G_{\tau}$, the pairwise intersection of the neighborhoods of any two features $j,k$ is less than $\kappa$, where $\tau = O_{\theta}(1/\log n)$ and $\kappa = O_{\theta}(d/\log^2 n)$.
\end{description}

The algorithm can only learn the real-valued matrix approximately.  Two dictionaries are {\em close} if they satisfy the following definition:

\begin{Def}[$\epsilon$-equivalent] Two dictionaries $A$ and $\hat{A} \in R^{n\times m}$ are $\epsilon$-equivalent, if for a random vector $x\in \R^m$ with independent $\rho$-Bernoulli components, with high probability $Ax$ and $\hat{A}x$ are entry-wise $\epsilon$-close.
\end{Def}

\begin{theorem}[Nonneg Case]
\label{thm:nonneg:main}
Under Assumptions 1 and 2, when $\rho = o(\sigma^5/\Lambda^{6.5}\log^{2.5} n) = o_{\theta}(1/\log^{2.5} n)$ , Algorithm~\ref{alg:main} runs in $n^{O(\Lambda \log^2 n/\sigma^4)}$ time, uses $\poly(n)$ samples and outputs a matrix that is $o(\rho)$-equivalent to the true dictionary $A$. Furthermore, under Assumptions 1 and 2' the same algorithm returns a dictionary that is  $n^{-C}$-equivalent to the true dictionary, while using $n^{4C+3}$ samples, where $C$ is a large constant depending on $\Delta$. \footnote{Recall that $\Delta$ is a sufficiently large constant that controls the parameters of the assumptions.}
\end{theorem}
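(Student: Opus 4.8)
\medskip
\noindent\emph{Proof proposal.}
Both the algorithm and the analysis will revolve around the notion of a \emph{signature} of a feature: a small ``seed'' set of pixels, together with a condition on the corresponding entries of $y$, whose occurrence is --- up to an event of negligible probability --- equivalent to the presence of that feature; this is the operational form of ``individual recoverability.'' Fix a seed size $t=\Theta_\theta(\log^2 n)$; the parameter constraints give $t\le d/10$, and the running-time exponent $\Lambda\log^2 n/\sigma^4$ will be $O(t)$. For a pixel set $S$ I use the firing condition $\sum_{i\in S}(y_i-1)\ge\tfrac{\sigma}{4}|S|$ (a more refined, profile-sensitive condition is needed to handle Assumption~2, see below). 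The technical heart is a \emph{signature lemma}: for every feature $j$ there is a (randomly chosen) $t$-subset $S_j$ of $\{i:\mat{A}{i}{j}\ge\sigma\}$ --- nonempty of size $\ge d$ by Assumption~1 --- such that \textbf{(a)} $\pr[\,S_j\text{ fires}\mid x_j=1\,]\ge 1-n^{-\Omega_\theta(1)}$ and \textbf{(b)} $\pr[\,S_j\text{ fires}\mid x_j=0\,]\le\rho\cdot n^{-C}$ under Assumption~2$'$, while under the weaker Assumption~2 the false-firing probability is only $o(\rho)$.

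\smallskip
\noindent For \textbf{(a)}, conditioned on $x_j=1$ write $\sum_{i\in S_j}(y_i-1)=\sum_k c_k x_k-t$ with $c_k=\sum_{i\in S_j}\mat{A}{i}{k}$; the normalization forces $\sum_k\rho c_k=t$, Assumption~1 gives $c_j\ge\sigma t$, so the sum has mean $(1-\rho)c_j\ge(1-\rho)\sigma t$ and variance $\sum_k\rho c_k^2\le t\max_k c_k$. The point of taking $S_j$ \emph{at random} is that then $\max_k c_k$ is small: by Chernoff and a union bound over all features (those sharing $\ge d/10$ pixels with $N_{G_\tau}(j)$ --- call this set $B_j$, of size $o(1/\sqrt\rho)$ --- contribute $O_\theta(t)$ but are down-weighted by $\rho$; the rest contribute $O_\theta(t\kappa/d)+\tau t=o_\theta(t)$), and with the parameter settings $\tau=O_\theta(1/\log n)$, $\kappa=O_\theta(d/\log^2 n)$, $\rho=o_\theta(\log^{-2.5}n)$ the variance becomes $o(\sigma^2 t^2/\log n)$, so the deviation $\tfrac34\sigma t$ is $\omega(\sqrt{\log n})$ standard deviations and Bernstein gives \textbf{(a)}. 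For \textbf{(b)}, condition on $x_j=0$: firing $S_j$ requires the $\Theta(\rho m)$ present features to raise $\sum_{i\in S_j}(y_i-1)$ by $\ge\tfrac{\sigma}{4}t$; outside $B_j$ each raises it by only $o(1)\cdot t$, so $\Omega_\theta(\log^2 n)$ of them must conspire, and a union bound over which of the $\Theta(\rho m)$ present features these are, weighted by the $\rho$-Bernoulli probabilities, bounds the event by $\rho^{\Omega_\theta(\log^2 n)}\le\rho\cdot n^{-C}$ --- this is the whole bound under Assumption~2$'$, where $B_j=\emptyset$. Under Assumption~2 the sets $B_j$ collectively cover too many pixels of $\{i:\mat{A}{i}{j}\ge\sigma\}$ to be avoided by any seed, so a plain ``all bright'' seed can be fired by a single present member of $B_j$; the fix is to strengthen the firing condition so that it also checks that the profile $(y_i)_{i\in S_j}$ matches a single feature's column shape (discretized, so the enumeration grows by only an $n^{o(t)}$ factor) --- and, when this still does not suffice, to aggregate over several independent random seeds and subtract off already-recovered columns --- bringing the false-firing probability down to $o(\rho)$.

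\smallskip
\noindent Given the signature lemma, the algorithm is limited enumeration. It enumerates all $n^{O(t)}=n^{O(\Lambda\log^2 n/\sigma^4)}$ candidate seeds $S$ (with their discretized target profiles) --- the source of the quasipolynomial running time --- and for each estimates from the samples $\widehat p(S)=\pr[S\text{ fires}]$ and the conditional biases $\widehat b_i(S)=\E[\,y_i\mid S\text{ fires}\,]-1$ for all $i\in[n]$, keeping $S$ iff $\widehat p(S)\approx\rho$ and $(\widehat b_i(S))_i$ has $\ge d$ coordinates above $\sigma/2$. A genuine seed $S_j$ passes: conditioning on ``$S_j$ fires'' coincides with $\{x_j=1\}$ up to a negligible event, so $\widehat p(S_j)\approx\rho$ and $\widehat b_i(S_j)\approx(1-\rho)\mat{A}{i}{j}$, which exceeds $\sigma/2$ on the $\ge d$ pixels of $\{i:\mat{A}{i}{j}\ge\sigma\}$; a set that is not essentially one feature's seed fails a test (e.g.\ a seed shared, with matching profile, by two features fires with probability $\approx 2\rho$; a generic set does not brighten $d$ pixels coherently). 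The survivors are clustered by closeness of their bias vectors, one representative $S_j$ per cluster is kept, and the output is $\mat{\hat A}{i}{j}=\widehat b_i(S_j)/(1-\rho)$, reconstructing column $j$. Correctness of the empirical step is routine concentration: ``$S$ fires'' is $0/1$ and each $y_i$ is nonnegative, $\poly(n)$-bounded with $O_\theta(1)$ variance, so Bernstein/Chernoff plus a union bound over the $n^{O(t)}$ candidates (logarithm $O(t\log n)=\poly\log n$) make every estimate accurate to $o(1/n)$ from $\poly(n)$ samples (enough under Assumption~2) and to $n^{-(2C+1)}$ from the stated $n^{4C+3}$ samples (needed under Assumption~2$'$, the extra powers absorbing both the union bound and the $1/\rho$ loss from conditioning on the probability-$\rho$ event). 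To finish, write $\mat{\hat A}{i}{j}-\mat{A}{i}{j}=\beta_{ij}+\eta_{ij}$ (bias $+$ estimation error) and bound $\norm{(\hat A-A)x}_\infty$ for a random $\rho$-Bernoulli $x$: the estimation part contributes $\sum_j\eta_{ij}x_j\le |\eta|_{\max}\cdot\Theta(\rho m)=o(\rho)$ (resp.\ $O(n^{-2C})$); the bias part $\beta_{ij}$ is $O_\theta(n^{-C})$ and supported on $O_\theta(1)$ present features per pixel under Assumption~2$'$, and under Assumption~2 is supported on $i\in N_{G_\tau}(j)\cup\bigcup_{k\in B_j}N_{G_\tau}(k)$ with magnitude governed by the residual firing probability, so that --- using $\kappa=O_\theta(d/\log^2 n)$, pairwise independence of $x$, and the refined firing condition --- $\sum_j\beta_{ij}x_j=o(\rho)$ with high probability. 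Combining, $\norm{(\hat A-A)x}_\infty$ is $n^{-C}$ (resp.\ $o(\rho)$) with high probability, which is exactly the two claimed equivalences.

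\smallskip
\noindent\emph{Main obstacle.} The hard step is part \textbf{(b)} of the signature lemma, especially under Assumption~2. Unlike the $\sqrt n$-sparse regime of prior work, here $\Theta(\rho m)$ features are present at once, so one must union-bound over \emph{every} sub-collection of them that could jointly trigger the $t$-pixel seed --- this is what simultaneously forces the seed size $t=\Theta_\theta(\log^2 n)$ (more than what part \textbf{(a)} alone needs) and the overlap budget $\kappa=O_\theta(d/\log^2 n)$. Worse, the $o(1/\sqrt\rho)$ ``partner'' features $B_j$ cover too many pixels of $\{i:\mat{A}{i}{j}\ge\sigma\}$ to be avoided by any seed, so a naive ``all bright'' signature is unavoidably fired by them; circumventing this (by a profile-sensitive signature, aggregation over seeds, and/or subtracting recovered columns) while keeping the enumeration at $n^{O(t)}$ and the final per-pixel bias at $o(\rho)$ is the crux --- and is precisely why Assumption~2 yields only $o(\rho)$-equivalence while the partner-free Assumption~2$'$ yields $n^{-C}$. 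A secondary obstacle is the variance control inside part \textbf{(a)}, which rests delicately on $S_j$ being random so that no other feature falls heavily on it, together with the hypothesis $\rho=o_\theta(\log^{-2.5}n)$.
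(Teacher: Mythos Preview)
Your proposal shares the paper's core ingredients --- small ``signature'' seeds, existence via random selection from $\{i:\mat{A}{i}{j}\ge\sigma\}$, enumeration of all size-$t$ sets, column recovery via conditional biases $\hat b_i(S)=\E[y_i\mid S\text{ fires}]-1$ --- and your analysis of a \emph{genuine} seed (parts (a)/(b)) is close to the paper's Lemmas~2--4, though you use a combinatorial union bound where the paper applies Bernstein directly.

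The genuine gap is what you do \emph{after} enumeration. You keep every $S$ with $\hat p(S)\approx\rho$ and $\ge d$ bright biases, then ``cluster by closeness of bias vectors,'' asserting that ``a set that is not essentially one feature's seed fails a test'' and that ``a generic set does not brighten $d$ pixels coherently.'' This is precisely the step the paper flags as hard and your tests do not settle it: even under Assumption~2$'$, nothing in the hypotheses prevents a size-$t$ set $T$ from having several features $k$ with moderately large $\beta_{k,T}$, and the resulting conditional-bias vector --- a convex combination $\sum_k c_k A_k$ with $c_k=\E[x_k\mid T\text{ fires}]-\rho$ --- can easily have $\ge d$ entries above $\sigma/2$ while being far from every single column (e.g.\ an even mixture of two columns on nearly disjoint $G_\sigma$-supports). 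Your $\hat p\approx\rho$ test rules out some such mixtures but not all, and without a proof that every survivor is column-close, the clustering step produces spurious columns. The paper's fix --- which you do not have --- is the \emph{expansion} to size-$d$ sets followed by the \emph{max-bias} argument (Algorithm~2 and Lemma~6 with Claims~7--8): at size $d$ the intersection bound $\kappa=o(d)$ forces at most $O_\theta(\log n)$ features to have large $\beta_{k,\tilde T}$, the empirical bias tracks $\max_k\beta_{k,\tilde T}$, and an exchange argument shows the expanded set of \emph{largest} bias must be a genuine expanded signature set; the algorithm then iterates, excluding already-found columns. This identification step is the technical heart of the paper and is missing from your outline.

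A secondary divergence: your handling of Assumption~2 (profile-sensitive firing, aggregating over independent seeds, subtracting recovered columns) is quite different from the paper's (Section~3.4), which instead relaxes the signature-set definition to allow $o(1/\sqrt\rho)$ moderately large coefficients and propagates the weaker error through the same expansion/max-bias machinery. Your description here is too schematic to evaluate, and you do not bound the added enumeration cost of the discretized profiles.
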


The theorem is proved in Section~\ref{sec:nonneg}. 

\noindent{\em Remark:} Assumption 2 is {\em existentially} close to optimal, in the sense that if it is significantly violated: e.g., if there are $poly(1/\rho)$ features that intersect 
the neighborhood of feature $j$ using edges of total weight  $\Omega(\mbox{$\ell_1$-norm of $A_j$})$ then feature $j$ is no longer individually recoverable: its effect can be duplicated whp by combinations of these other features.
But a more precise characterization of individual recoverability would be nice, as well as a matching algorithm.


\subsection{Dictionaries with negative entries} 

When the edges can be both positive and negative, it is no longer valid to assume the expectation of $y_i$'s are equal to $1$. Instead, we choose a different normalization: the 
variances of $y_i$'s are 1. We still assume magnitude of edge weights are at most $\Lambda$,  and that features don't overlap a lot as described in Assumption 1 and 2. 
We also need one more assumption to bound the variance contributed by the small entries. 

\begin{description}
\item[Assumption G1:] The degree of $G_{\sigma}$ on side of $x$ is always larger than $2d$. 
\item[Assumption G2':] 
 In $G_{\tau}$, the pairwise intersection of the neighborhoods of any two features $j,k$ is less than
$\kappa$,
 where $\tau = O_\theta(1/\log n)$  and $\kappa = O_\theta(d/\log^2 n)$.



\item[Assumption G3:] (small entries of $A$ don't cause large effects) $\rho ||A^{(i)}_{\le \tau}||_2^2 \le \gamma$, where $A^{(i)}_{\le \delta}$ be the vector that only contains the entries of $A^{(i)}$ that are at most $\delta$, and $\gamma =  \sfrac{\sigma^4}{2\Delta\Lambda^2\log n}$. 
\end{description}
Note that Assumption G1 differs from Assumption 1 by a constant factor 2 just to simplify some notations later. Assumption G2' is the same as before. 

This assumption G3 intuitively says that for each $y_i  = \sum_k \mat{A}{i}{k} x_k$, the smaller $\mat{A}{i}{k}$'s should not contribute too much to the variance of $y_i$. This is automatically satisfied for nonnegative dictionaries because there can be no cancellations. Notice that this assumption is talking about rows of matrix $A$ (corresponding to pixels), whereas the earlier assumptions talk about columns of $A$ (corresponding to features). Also, consider $\tau$ to be the smallest number between what is required by Assumption G2' and G3.

In term of parameters, we still need $d \ge \Delta\Lambda \log^2 n/\sigma^2$, and 
$\kappa = O(\sigma^8 d/\log^2 n \Delta^2 \Lambda^6) $$= O(d/\log^2 n)$. 
As before $\Delta$ is a large enough constant. 

%

\begin{theorem}
\label{thm:general:main}
Under Assumptions G1, G2' and G3, when $\rho = o(\sigma^5/\Lambda^{6.5}\log^{2.5} n) = o_{\theta}(1/\log^{2.5} n)$ there is an algorithm that runs in $n^{O(\Delta\Lambda \log^2 n/\sigma^2)}$ time, uses $n^{4C+5}m$ samples  and outputs a matrix that is $n^{-C}$-equivalent to the true dictionary $A$, where $C$ is a constant depending on $\Delta$. 
\end{theorem}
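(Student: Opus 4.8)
}
The plan is to run the same three-phase scheme that underlies Theorem~\ref{thm:nonneg:main} --- (i) discover, by limited enumeration over small sets of pixels, a \emph{signature} for every feature; (ii) cluster the signatures so that each cluster is associated with a single feature, thereby reading off $\supp(A_j)$ and the sign pattern of $A_j$ on it; (iii) recover the real entries of each $A_j$ from conditional first moments --- and to redo each step that in the nonnegative case implicitly relied on the absence of cancellations. The two changes forced by negative entries are that we normalize by $\Var[y_i]=1$ rather than $\E[y_i]=1$, so the baseline $\E[y]$ must now be estimated from samples, and that Assumption~G3 must be invoked wherever we previously used nonnegativity to argue that a strong entry $A^{(i)}_j$ ``survives'' in $y_i$ when $x_j=1$.

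\paragraph{Phases (i)--(ii): finding and clustering signatures.}
First I would call a pixel $i$ \emph{active with sign $s$} in a sample $y$ if $s\cdot y_i$ exceeds a threshold of order $\sigma$. By Assumption~G1 each feature $j$ has at least $2d$ pixels with $|A^{(i)}_j|\ge\sigma$, and I will show that when $x_j=1$ a $(1-o(1))$ fraction of these are active with sign $\sgn(A^{(i)}_j)$: this is a Bernstein estimate in which Assumption~G3 controls the variance contributed by the below-$\tau$ entries of $A^{(i)}$, while $\rho=o_\theta(1/\log^{2.5}n)$ together with Assumption~G2' controls the contribution of the other above-$\tau$ entries, so the event $|y_i-A^{(i)}_j x_j|\ge\sigma/2$ is rare. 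Conversely, Assumption~G2' (pairwise intersections in $G_\tau$ below $\kappa=O_\theta(d/\log^2 n)$) implies that no set of features occurring with non-negligible probability can light up, with the correct signs, a whole signed set of $\ell=O_\theta(\log^2 n)$ of feature $j$'s strong pixels. Hence the probability that a fixed signed set of $\ell$ pixels is simultaneously active is $\approx\rho$ precisely when the set lies in the signed support of one feature, and is smaller by a $\rho^{\Omega(1)}$ factor otherwise. Enumerating all $n^\ell=n^{O(\Delta\Lambda\log^2 n/\sigma^2)}$ signed sets and estimating these probabilities from samples then yields, for every feature, at least one valid signature; the ``growing'' step (the analogue of the signature-expansion lemma of the nonnegative case) enlarges each to a signature that the feature uniquely satisfies, and two valid signatures are placed in the same cluster iff their activation events co-occur with probability $\approx\rho$ rather than $\approx\rho^2$. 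Transitivity of this relation, hence correctness of the clustering, again follows from the low intersection of Assumption~G2'.

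\paragraph{Phase (iii) and wrap-up.}
For a fixed feature $j$, I would condition on the event $E_j$ that its signature is active. Because the coordinates of $x$ are pairwise independent, $\E[y\mid x_j=1]-\E[y]=(1-\rho)A_j$; and $E_j$ differs from $\{x_j=1\}$ only through the few features that share pixels with the signature, whose conditional marginals move away from $\rho$ by an amount controlled by Assumption~G2'. Combining this with Assumption~G3 to bound the second moment of the residual $\sum_{k\ne j}A^{(i)}_k x_k$, a Bernstein bound shows that the empirical version of $\EExp[y\mid E_j]-\EExp[y]$ is within $n^{-C}$ of $A_j$ entrywise once $n^{4C+5}m$ samples are used; the extra factor over the nonnegative case is exactly the cost of estimating $\E[y]$ rather than using $\E[y]=\One$. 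Matching the recovered columns to those of $A$ by a permutation and assembling them gives $\hat A$ with $\|\hat A x-Ax\|_\infty\le n^{-C}$ with high probability over $x$, i.e.\ $\hat A$ is $n^{-C}$-equivalent to $A$, and the running time is dominated by the $n^\ell$ enumeration, i.e.\ $n^{O(\Delta\Lambda\log^2 n/\sigma^2)}$.

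\paragraph{Main obstacle.}
The hard part will be Phase (i) in the signed regime: certifying that ``the signature of $j$ is active'' is a near-perfect proxy for ``$x_j=1$'' even though other features can partially \emph{cancel} $A_j$ and, conversely, can partly mimic a signature. This is where Assumption~G3, the quantitative form of Assumption~G2', and the density bound $\rho=o_\theta(1/\log^{2.5}n)$ are all needed simultaneously, and squeezing the error probabilities below what the union bound over all $n^\ell$ enumerated sets can tolerate is what pins down the stated relations among $d$, $\tau$, $\kappa$ and $\rho$.
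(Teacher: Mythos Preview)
Your Phase~(iii) and the overall shape of Phase~(ii) are close to what the paper does, but Phase~(i) has a real gap, and it stems from proposing a \emph{per-pixel} test rather than the paper's \emph{sum-over-a-set} test.

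You define pixel $i$ to be ``active'' when $s\cdot y_i$ exceeds a threshold of order $\sigma$, and you claim that when $x_j=1$ a $(1-o(1))$ fraction of $j$'s strong pixels are active with the right sign, arguing that Assumption~G3 controls the below-$\tau$ variance and Assumption~G2' handles the above-$\tau$ part. But in the general case the normalization is $\Var[y_i]=1$, so the residual $\sum_{k\ne j}A^{(i)}_k x_k$ has variance $1-\rho(A^{(i)}_j)^2\approx 1$ at \emph{every} pixel, regardless of G3. Assumption~G3 only caps the \emph{small-entry} portion of that variance at $\gamma=O_\theta(1/\log n)$; the remaining $\approx 1-\gamma$ comes from the above-$\tau$ entries of the row, and Assumption~G2' says nothing about those --- it bounds intersections between \emph{columns}, not the large-entry count in a \emph{row}. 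With per-pixel noise of standard deviation $\approx 1$ and signal $\sigma\le 1$, Bernstein gives only a constant (not $o(1)$) probability that $|y_i-A^{(i)}_j x_j|<\sigma/2$, so the ``$(1-o(1))$ fraction active'' claim fails, and with it the statement that simultaneous activity of a signed $\ell$-set has probability $\approx\rho$ iff the set sits inside one feature.

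The paper's fix is exactly to avoid per-pixel tests: it thresholds the \emph{sum} $\beta_T=\sum_{i\in T}y_i$. The signal in $\beta_T$ is $\beta_{j,T}\ge\sigma t$, and the noise is $\sum_{k\ne j}\beta_{k,T}x_k$. The point of moving to the general case is that the nonnegative argument for bounding $\Var[\sum_{k\ne j}\beta_{k,T}x_k]$ (which used $\E[\beta_T]=t$) no longer applies; the paper therefore adds an explicit variance condition~(c), $\nu_{-j,T}\le\sigma t/\sqrt{\Delta\log n}$, to the definition of a (general) signature set, and proves such sets exist via a new variance lemma (bounding $\Var[\beta_{S,T}]\le 2tW+2t^2\gamma$ when every $k\in S$ has at most $W$ large edges into $T$) that is precisely where Assumption~G3 enters. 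This is how the column-intersection assumption G2' and the small-entry assumption G3 get converted into control of the aggregate noise; neither can control single-pixel noise. If you rewrite Phase~(i) around $\beta_T$ with this extra variance condition, the rest of your outline (expansion, empirical bias, refinement via conditional means) lines up with the paper's argument.
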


The algorithm and the proof of Theorem~\ref{thm:general:main} are sketched in Section~\ref{sec:general}.



\section{Nonnegative Dictionary Learning}\label{sec:nonneg}

Recall that dictionary learning seems hard because both $A$ and $x$ are unknown.
To get around this problem, previous works (e.g. \cite{AGM}) try to extract information about the assignment $x$ without first learning $A$ (but assuming nice properties of $A$). After finding $x$, recovering $A$ becomes easy. In \cite{AGM} the unknown $x$'s were recovered via an overlapping
clustering procedure. The procedure relies on incoherence of $A$, as when $A$ is incoherent it is possible to test whether the support of $x^1$, $x^2$ intersect.
This idea fails when $x$ is only slightly sparse, because in this setting the supports of $x^1, x^2$ always have a large intersection.

Our algorithm here relies on correlation among {\em pixels}. The key observation is: if the $j$th bit in $x$ is 1, then $A x = A_j + \sum_{k\ne j} A_k x_k$. Pixels with high values in $A_j$ tend to be {\em elevated} above their mean values (recall $A$ is nonnegative). At first it is unclear how
this simultaneous elevation can be spotted, since $A_j$ is unknown and these elevations/correlations among pixels are
much smaller than the standard deviation of individual pixels. 
Therefore we look for local regions ---small subsets of pixels--- in $A_j$ where this effect is  significant in the aggregate (i.e., sum of pixel values),
 and can be used to consistently predict the value of $x_j$. 
These are called the {\em signature sets} (see  Definition~\ref{def:signature_set}). If we can identify signature sets, they can  give us a good estimation of whether the feature $x_j$
is present.

Since the signature sets are small, in quasi-polynomial time we can afford to enumerate all sets of that size, and check if the pixels in these sets are likely to be elevated together. However, this does not solve the problem, because there can be many sets ---called {\em correlated sets} below--- that show similar {\em correlations} and look similar to signature sets. 
It is hard to separate signature sets from other correlated sets when the size of the set is small.
This leads to the next idea: try to {\em expand} a signature set by first estimating the corresponding column of $A$, and then picking large entries in that column. 
 The resulting sets are called 
{\em expanded signature sets}; these have size $d$ (and hence could not have been found by exhaustive guessing alone).
If the set being expanded is indeed a signature set, this expansion process can correctly estimate the column of $A$. 
 We give algorithms that can find expanded signature sets, and using these sets we can get a rough estimation for the matrix $A$. Finally, we also give a procedure that leverages the {\em individually recoverable} properties of the features, and refines the solution to be inverse polynomially equivalent to the true dictionary.

The high level algorithm is described in Algorithm~\ref{alg:main} (the concepts such as correlated sets, and empirical bias are defined later).
To simplify the proof the algorithm description uses Assumption 2'; we summarize later (in Section~\ref{subsec:assumption2}) what changes with Assumption 2. 

The main algorithm has three main steps. Section~\ref{subsec:signaturesets} explains how to test for correlated sets and expand a set (1-2 in Algorithm~\ref{alg:main}); Section~\ref{subsec:expandedsignatureset} shows how to find expanded signature sets and a rough estimation of $A$ (3-6 in Algorithm~\ref{alg:main}); finally Section~\ref{subsec:refine} shows how to refine the solution and get $\hat{A}$ that is inverse polynomially equivalent to $A$ (7-10 in Algorithm~\ref{alg:main}).


\subsection{Correlated Sets, Signature Sets and Expanded Sets}

\label{subsec:signaturesets}


We consider a set $T$ of size $t = \Omega(\poly\log n)$ (to be specified later), and denote by $\beta_T$ the random variable representing the
 sum of all pixels in $T$, i.e., $\beta_T = \sum_{i\in T} y_i$. We can expand $\beta_T$ as 

\[\beta_T = \sum_{i\in T} y_i = \sum_{i\in T} \left(\sum_{j=1}^m A_{j}^{(i)}x_j\right) = \sum_{j=1}^m \left(\sum_{i\in T} \mat{A}{i}{j}\right) x_j.\]

Let $\beta_{j,T} =  \left(\sum_{i\in T} \mat{A}{i}{j}\right)$ be the contribution of $x_j$ to the sum $\beta_T$, then $\beta_T$ is just 

\begin{equation}
\beta_T = \sum_{j=1}^m \beta_{j,T}x_j \label{eqn:1}
\end{equation}

Note that by the normalization of $\E[y_i]$, we have $\Exp[\beta_T] = \sum_{i\in T} \Exp[y_i] = t$. Intuitively, if for all $j$, $\beta_{j,T}$'s are relatively small, $\beta_T$ should concentrate around its mean. On the other hand, if there is some $j$ whose coefficient $\beta_{j,T}$ is significantly larger than other $\beta_{k,T}$, then $\beta_T$ will be {\em elevated} by $\beta_{j,T}$ precisely when $x_j = 1$. That is, with probability roughly $\rho$ (corresponding to when $x_j=1$), we should observe $\beta_T$ to be roughly $\beta_{j,T}$ larger than its expectation. 

Now we make this precise by defining such set $T$ with only one large coefficient $\beta_{k,T}$ as signature sets. 

\begin{Def}[Signature Set] \label{def:signature_set} A set $T$ of size $t$ is  a {\em signature set} for $x_j$, if $\beta_{j,T} \ge \sigma t$, and for all $k\ne j$, the contribution $\beta_{k,T} \le \sfrac{\sigma^2t}{\Delta\log n}$. Here $\Delta$ is a large enough constant.
\end{Def}

The following lemma formalizes the earlier intuition that if $T$ is a signature set for $x_j$, then a large $\beta_T$ is highly correlated with the event $x_j = 1$.

\begin{Lem}\label{lem:symmetric_diff} 
Suppose $T$ of size $t$ is a signature set for $x_j$ with $t =\omega(\sqrt{\log n})$. Let $E_1$ be the event that $x_j =1$ and $E_2$ be the event that $\beta_T \ge \Exp[\beta_T]+ 0.9\sigma t$. Then for large constant $C$ (depending on the $\Delta$ in Definition~\ref{def:signature_set})
\begin{enumerate}
\item $\Pr[E_1] + n^{-2C}\ge \Pr[E_2] \ge \Pr[E_1] - n^{-2C}$. 
\item $\Pr[E_2|E_1] \ge 1 - n^{-2C}$, and $\Pr[E_2|E_1^c] \le n^{-2C}$.
\item $\Pr[E_1|E_2]\ge 1 - n^{-C}$.
\end{enumerate} 
\end{Lem}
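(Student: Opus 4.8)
The plan is to decompose $\beta_T$ into the contribution of $x_j$ and the contribution of all other features, and then use concentration of the latter part. Write $\beta_T = \beta_{j,T}x_j + S$ where $S = \sum_{k\ne j}\beta_{k,T}x_k$. Since the coordinates of $x$ are (pairwise, or for the Bernoulli case fully) independent, $S$ is independent of $x_j$, and $\E[S] = \rho\sum_{k\ne j}\beta_{k,T}$. Note $\E[\beta_T] = t$ and $\E[\beta_T] = \beta_{j,T}\rho + \E[S]$, so $\E[S] = t - \rho\beta_{j,T}$. The signature-set hypothesis gives $\beta_{j,T}\ge \sigma t$ and $\beta_{k,T}\le \sigma^2 t/(\Delta\log n)$ for all $k\ne j$; I will use these two facts to (a) control the variance of $S$ and (b) ensure the "elevation" $\beta_{j,T}$ beats the fluctuations of $S$ by a wide margin.

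First I would establish the concentration statement: $\Pr\big[\,|S - \E[S]| > 0.05\sigma t\,\big]\le n^{-2C}$. Here is where the assumption $t = \omega(\sqrt{\log n})$ and the bound on individual $\beta_{k,T}$ enter. Using $\Var[S] = \rho(1-\rho)\sum_{k\ne j}\beta_{k,T}^2 \le \rho\cdot \max_k\beta_{k,T}\cdot\sum_k\beta_{k,T} \le \rho\cdot\frac{\sigma^2 t}{\Delta\log n}\cdot t$ (using $\sum_k\beta_{k,T}\le\E[\beta_T]/\rho \cdot$ something $\le t/\rho$ roughly — more precisely $\sum_{k\ne j}\beta_{k,T} = (t - \rho\beta_{j,T})/\rho \le t/\rho$), we get $\Var[S] \le \sigma^2 t^2/(\Delta\log n)$. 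A Bernstein/Chernoff bound for the sum of bounded independent terms each of size $\le \sigma^2 t/(\Delta\log n) = o(\sigma t/\sqrt{\log n})$ then gives the desired tail $n^{-2C}$ for deviation $0.05\sigma t$, provided $\Delta$ is a large enough constant and $t = \omega(\sqrt{\log n})$; the small per-term size is exactly what makes the Bernstein exponential term dominate. This is the technical heart of the argument.

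Granting that concentration, the three conclusions follow by casework on $x_j$. On $E_1$ ($x_j = 1$): $\beta_T = \beta_{j,T} + S \ge \sigma t + \E[S] - 0.05\sigma t$, and since $\E[\beta_T]\! =\! \E[S] + \rho\beta_{j,T}$ with $\rho\beta_{j,T} = o(\sigma t)$ (as $\rho = o(1/\log^{2.5}n)$ and $\beta_{j,T}\le \Lambda t$), we have $\beta_T \ge \E[\beta_T] + 0.9\sigma t$ except on the bad event, giving $\Pr[E_2\mid E_1]\ge 1 - n^{-2C}$. On $E_1^c$ ($x_j = 0$): $\beta_T = S$, and $\E[\beta_T] - \E[S] = \rho\beta_{j,T} = o(\sigma t)$, so $\beta_T \ge \E[\beta_T] + 0.9\sigma t$ would force $|S - \E[S]| \ge 0.9\sigma t - o(\sigma t) > 0.05\sigma t$, which has probability $\le n^{-2C}$; hence $\Pr[E_2\mid E_1^c]\le n^{-2C}$. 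Part 1 then follows by $\Pr[E_2] = \Pr[E_2\mid E_1]\Pr[E_1] + \Pr[E_2\mid E_1^c]\Pr[E_1^c]$, which lies within $n^{-2C}$ of $\Pr[E_1]$. For part 3, Bayes' rule gives $\Pr[E_1^c\mid E_2] = \Pr[E_2\mid E_1^c]\Pr[E_1^c]/\Pr[E_2]\le n^{-2C}/\Pr[E_2]$, and since $\Pr[E_2]\ge \Pr[E_1] - n^{-2C} = \rho - n^{-2C} = \Omega(\rho)$ with $\rho \ge n^{-o(1)}$ (indeed $\rho$ is only polylogarithmically small), this is at most $n^{-C}$ after adjusting constants.

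The main obstacle I expect is getting the concentration tail bound to land at $n^{-2C}$ with the right constant $C$ tied to $\Delta$: one must check that the per-term magnitude $\sigma^2 t/(\Delta\log n)$ is small enough relative to the target deviation $0.05\sigma t$ that the Bernstein exponent $\exp(-\Omega(\min(\text{(dev)}^2/\Var,\ \text{dev}/\text{max-term})))$ is $\le n^{-2C}$; this forces $\Delta$ to be chosen large in terms of $C$, consistent with the paper's running convention that $\Delta$ is "sufficiently large." The rest is bookkeeping with the $o(\sigma t)$ error terms coming from $\rho\beta_{j,T}$.
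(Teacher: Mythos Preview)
Your proposal is correct and follows essentially the same route as the paper: decompose $\beta_T$ into $\beta_{j,T}x_j$ plus the remainder $S=\sum_{k\ne j}\beta_{k,T}x_k$, bound $\Var[S]\le \rho\cdot\max_k\beta_{k,T}\cdot\sum_k\beta_{k,T}\le \sigma^2 t^2/(\Delta\log n)$, apply Bernstein to get $\Pr[|S-\E S|>\sigma t/20]\le n^{-2C}$, then do the casework on $x_j$ and use Bayes for part~3. The paper's proof is exactly this, with the same deviation threshold $\sigma t/20$ and the same variance estimate $Mt$ where $M=\sigma^2 t/(\Delta\log n)$.
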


\begin{proof}
We can write $\beta_T$ as 
\begin{equation}
\beta_T = \beta_{j,T}x_j + \sum_{k\neq j}\beta_{k,T}x_k \label{eqn:2}
\end{equation}
The idea is that since $\beta_{k,T} < \sigma^2 t/(\Delta \log n)$ for all $k \ne j$, the summation in the RHS above is highly concentrated around its mean, which is actually very close to $\Exp[\beta_T] = t$. Therefore since $\beta_{j,T} > \sigma t$, we know $\beta_T > t+ 0.9 \sigma t$ essentially iff $x_j=1$. 

Formally, observe that $\E[ \beta_{j,T}x_j] = \rho \beta_{j,T} \le \rho \Lambda t = o(\sigma t) $, and recall that $\E[\beta_T] = t$, we have $\Exp[\sum_{k\neq j}\beta_{k,T}x_k] = (1-o(\sigma))t$. Let $M =  \sigma^2t/(\Delta\log n)$ be the upper bound for $\beta_{k,T}$, and then the variance of the sum $\sum_{k\neq j}\beta_{k,T}x_k$ is bounded by $\rho M \sum_{k\neq j}\beta_{k,T} \le Mt$. Then by calling Bernstein inequality (see Theorem~\ref{thm:bernstein_ineq}, but note that $\sigma$ there is the standard deviation), we have 
\begin{eqnarray*}
\Pr\left[\left| \sum_{k\neq j}\beta_{k,T}x_k - \Exp[\sum_{k\neq j}\beta_{k,T}x_k] \right| > \sigma t/20\right] &\le &
2\exp(-\frac{\sigma^2 t^2/400}{2Mt+ \frac{2}{3}\frac{M}{\sqrt{Mt}}\sigma t/20}) \le n^{-2C}.
\end{eqnarray*}
where $C$ is a large constant depending $\Delta$. 

Part (2) immediately follows: if $x_j=1$, then $\beta_T < t+ 0.9 \sigma t$ iff the sum deviates from its expectation by more than $\sigma t/20$, which happens with probability $< n^{-2C}$.  So also if $x_j=0$, $E_2$ occurs with probability $<n^{-2C}$.

This then implies part (1), since the probability of $E_1$ is precisely $\rho$.

Combining the (1) and (2), and using Bayes' rule $\Pr[ E_1|E_2] = \Pr[ E_2 | E_1 ] \Pr[E_1] / \Pr[E_2]$, we obtain (3).
\end{proof}

Thus if we can find a signature set for $x_j$, we would roughly know the samples in which $x_j=1$.  The following lemma shows that assuming the low pairwise assumptions among
features, there {\em exists} a signature set for every feature $x_j$.

\begin{Lem}\label{lem:exists_sign}
Suppose $A$ satisfies Assumptions 1 and 2, let $t = \Omega(\Lambda\Delta\log^2 n/\sigma^2)$, then for any $j \in [n]$,  there exists a signature set of size $t$ for node $x_j$.
\end{Lem}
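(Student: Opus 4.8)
The plan is to exhibit a signature set by the probabilistic method. Write $N_j := \{\, i : \mat{A}{i}{j} \ge \sigma \,\}$ for the high-weight pixels of feature $j$; Assumption~1 gives $|N_j| \ge d$, and I fix an arbitrary $d$-element subset, still called $N_j$. Every candidate set $T$ will be a size-$t$ subset of $N_j$, so that $\beta_{j,T} = \sum_{i\in T}\mat{A}{i}{j} \ge \sigma t$ holds for free; the entire content of the lemma is to arrange $\beta_{k,T} \le \sigma^2 t/(\Delta \log n)$ for all $k\ne j$. Split $\beta_{k,T}$ into the contribution of entries $\mat{A}{i}{k} < \tau$ and of entries $\mat{A}{i}{k}\ge\tau$. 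The former is at most $\tau |T| = \tau t$ deterministically, and since $\tau = O(\sigma^4/(\Delta\Lambda^2 \log n)) \le \sigma^2/(2\Delta\log n)$ (recall $\sigma \le 1 \le \Lambda$) this already consumes only half the budget. So it remains to control $\gamma_{k,T} := \sum_{i\in T:\, \mat{A}{i}{k}\ge\tau} \mat{A}{i}{k} \le \Lambda\,|T\cap N^\tau_k|$, where $N^\tau_k = \{\, i : \mat{A}{i}{k}\ge\tau \,\}$; and since $T\subseteq N_j \subseteq N^\tau_j$ (as $\sigma > \tau$) only the overlap $N_j\cap N^\tau_k$ enters: $|T\cap N^\tau_k| = |T\cap (N_j\cap N^\tau_k)|$.

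First take $T$ uniformly at random among the $t$-subsets of $N_j$. For every feature $k$ other than the $o(1/\sqrt\rho)$ exceptions singled out by Assumption~2, that assumption gives $|N_j\cap N^\tau_k| \le |N^\tau_j\cap N^\tau_k| \le \kappa$, so $|T\cap(N_j\cap N^\tau_k)|$ is hypergeometric with mean at most $\kappa t/d = O_\theta(1)$ (using $\kappa = O_\theta(d/\log^2 n)$ and $t = \Theta(\Lambda\Delta\log^2 n/\sigma^2)$). A standard Chernoff-type tail bound for hypergeometric variables with constant mean then gives $\Pr[\,|T\cap(N_j\cap N^\tau_k)| > L\,] \le n^{-C'}$ with $L = \Theta(\log n/\log\log n)$ and $C'$ arbitrarily large (at the price of a larger constant in $t$), hence $\gamma_{k,T} \le \Lambda L \ll \sigma^2 t/(2\Delta\log n) = \Theta(\Lambda\log n)$ for large $n$. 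Union-bounding over all $m = \poly(n)$ features (take $C'$ bigger than the exponent of that polynomial), with probability $\ge 1 - n^{-1}$ this random $T$ satisfies the signature-set inequality for every non-exceptional $k$.

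The crux is the at most $o(1/\sqrt\rho)$ exceptional features $k$, where Assumption~2 only supplies $|N_j\cap N^\tau_k| \le d/10$ with total weight $\sum_{i\in N_j\cap N^\tau_k}\mat{A}{i}{k} < d\sigma/10$; for these a uniform $T$ would give $\gamma_{k,T}$ of order $t\sigma$, far above the budget. I would instead sample $T$ not from all of $N_j$ but from $N_j$ with the pixels carrying the bulk of the exceptional features' weight removed: delete those $i\in N_j$ on which the exceptional features jointly place more than a tiny amount of weight. Since there are only $o(1/\sqrt\rho)$ such features, and the total-weight and $d/10$-overlap bounds limit how much weight each can concentrate, the deleted set is a small fraction of $N_j$, so the residual set still has size $\ge t$; and on it each exceptional $k$ has low enough weight density that $\gamma_{k,T}$ stays under $\sigma^2 t/(\Delta\log n)$. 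Combined with the union bound above, some $T$ works, which proves existence. The delicate point — and the place I expect to spend the most effort — is checking that this deletion can simultaneously (i) leave $\ge t$ pixels and (ii) push every exceptional feature's residual contribution below the signature threshold, which is exactly where the precise relations among $\rho$, $\tau$, $\kappa$, $d$, $\sigma$, $\Delta$, $\Lambda$ in the parameter table get used; under the stronger Assumption~2' there are no exceptional features and the first two paragraphs already finish the proof.
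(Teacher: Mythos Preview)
Your first two paragraphs are essentially the paper's own argument: random $t$-subset of $N_j$, bound $\beta_{k,T}$ by $\tau t$ plus $\Lambda$ times the number of $\tau$-edges from $k$ into $T$, and control the latter by a Chernoff/hypergeometric tail plus union bound. The paper in fact proves the lemma under Assumption~2' (it says so explicitly in the proof), and under~2' your argument and theirs coincide.

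The gap is in your third paragraph, the treatment of the exceptional features allowed by Assumption~2. The deletion idea cannot work, because under Assumption~2 a signature set in the sense of Definition~\ref{def:signature_set} need not exist at all. Here is a configuration consistent with Assumption~2: take $r = o(1/\sqrt{\rho})$ exceptional features, and for each such $k$ let $N^\tau_k\cap N_j$ be a $(d/10)$-subset of $N_j$ on which $\mat{A}{i}{k}\approx\sigma$ (so the total weight is just under $d\sigma/10$). Arrange these $(d/10)$-subsets as a near-regular cover of $N_j$, so every pixel of $N_j$ lies in about $r/10$ of them. Then for \emph{any} $T\subseteq N_j$,
\[
\sum_{s=1}^{r}\bigl|T\cap N^\tau_{k_s}\bigr|\;=\;\sum_{i\in T}\deg(i)\;\ge\;\frac{r}{10}\,|T|,
\]
so by pigeonhole some $k_s$ has $|T\cap N^\tau_{k_s}|\ge |T|/10$, whence $\beta_{k_s,T}\gtrsim \sigma t/10$, far above the $\sigma^2 t/(\Delta\log n)$ threshold. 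No amount of pre-deletion from $N_j$ can avoid this, because the bound holds for every subset $T$ of $N_j$. Your intuition that ``the deleted set is a small fraction of $N_j$'' fails because the total overlap mass is $r\cdot d/10 = o(d/\sqrt{\rho})$, which can be $\gg d$.

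This is why the paper does \emph{not} try to keep Definition~\ref{def:signature_set} under Assumption~2. Instead (see Section~\ref{subsec:assumption2}) it relaxes the definition to permit $o(1/\sqrt{\rho})$ ``moderately large'' coefficients of size up to $\sigma t/10$; with that relaxed target, a random $T\subseteq N_j$ succeeds by essentially the same Chernoff argument, since for each exceptional $k$ the expected overlap $|T\cap N^\tau_k|$ is at most $t/10$. So your plan is fine for Assumption~2', but for Assumption~2 you should abandon deletion and instead weaken the signature-set definition as the paper does.
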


\begin{proof}
We show the existence by probabilistic method. By Assumption 1, node $x_j$ has at least $d$ neighbors in $G_{\sigma}$. Let $T$ be a uniformly random set of $t$ neighbors of $x_j$ in $G_{\sigma}$. Now by the definition of $G_{\sigma}$ we have $\beta_{j,T} \ge \sigma t$. 

Using a bound on intersection size (Assumption 2') followed by Chernoff bound, we show that $T$ is a signature set with good probability. 
For $k \ne j$, let $f_{k,T}$ be the number of edges from $x_k$ to $T$ in graph $G_{\tau}$. Then we can upperbound $\beta_{k,T}$ by $t\tau + f_{k,T}\Lambda$ since all edge weights are at most $\Lambda$ and there are at most $f_{j,T}$ edges with weights larger than $\tau$. Using simple Chernoff bound and union bound, we know that with probability at least $1-1/n$, for all $k\neq j$, $f_{k,T} \le 4\log n$. Therefore $\beta_{k,T}\le t\tau + f_{k,T}\Lambda \le \sigma^2t/(\Delta\log n)$ for $t \ge \Omega(\Lambda\Delta \log^2 n/\sigma^2)$, and $\tau = O(\sfrac{\sigma^2}{\Delta\log n})$.
\end{proof}

Although signature sets exist for all $x_j$, it is difficult to find them; even if we enumerate all subsets of size $t$, it is not clear how to know when we found a signature set.
Thus we first look for ``correlated'' sets, which are defined as follows:

\begin{Def}[Correlated Set]\label{def:correlated_set}
A set $T$ of size $t$ is called {\em correlated}, if with probability at least $\rho - 1/n^2$ over the choice of $x$'s, $\beta_T \ge 
\Exp[\beta_T] + 0.9\sigma t = t + 0.9\sigma t$. 
\end{Def}

It follows easily (Lemma~\ref{lem:symmetric_diff}) that signature sets must be correlated sets.
\begin{Cor}\label{lem:sign-correlated}
If $T$ of size $t$ is a signature set for $x_j$, and $t = \omega(\sqrt{\log n})$, then $T$ is a correlated set. 
\end{Cor}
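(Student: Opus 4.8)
The plan is to unwind the two definitions and quote part (1) of Lemma~\ref{lem:symmetric_diff}; no new probabilistic work is needed. Recall that, by Definition~\ref{def:correlated_set}, saying $T$ is a correlated set means exactly that $\Pr[\beta_T \ge \Exp[\beta_T] + 0.9\sigma t] \ge \rho - 1/n^2$, i.e.\ that $\Pr[E_2] \ge \rho - 1/n^2$ in the notation of Lemma~\ref{lem:symmetric_diff}. So it suffices to lower bound $\Pr[E_2]$.

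First I would check that the hypotheses line up: $T$ is assumed to be a signature set for $x_j$, and the size condition $t = \omega(\sqrt{\log n})$ is precisely the condition under which Lemma~\ref{lem:symmetric_diff} applies. Invoking part (1) of that lemma gives $\Pr[E_2] \ge \Pr[E_1] - n^{-2C}$, where $E_1$ is the event $x_j = 1$; since $x_j$ is a $\rho$-Bernoulli coordinate, $\Pr[E_1] = \rho$, hence $\Pr[E_2] \ge \rho - n^{-2C}$. Finally, since $C$ is a large constant (in particular $C \ge 1$), we have $n^{-2C} \le n^{-2}$, so $\Pr[E_2] \ge \rho - 1/n^2$, which is exactly what Definition~\ref{def:correlated_set} requires.

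There is no genuine obstacle in this corollary: all the real content --- the Bernstein-type concentration argument showing that $\beta_T$ is elevated by roughly $\sigma t$ essentially if and only if $x_j = 1$ --- has already been established in Lemma~\ref{lem:symmetric_diff}. This step is just the bookkeeping observation that ``signature set $\Rightarrow$ correlated set,'' which will later let the algorithm restrict its search among correlated sets found by enumeration.
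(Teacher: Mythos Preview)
Your proposal is correct and matches the paper's approach exactly: the paper simply remarks that the corollary ``follows easily'' from Lemma~\ref{lem:symmetric_diff}, and your write-up spells out precisely that deduction via part~(1) and the observation $n^{-2C}\le n^{-2}$.
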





Although signature sets are all correlated sets, the other direction is far from true.
There can be many correlated sets that are not signature sets .  A simple counterexample would be that there are $j$ and $j'$ such that both $\beta_{j,T}$ and $\beta_{j',T}$ are larger than $\sigma t$. This kind of counterexample seems inevitable for any test on a set $T$ of polylogarithmic size.


To resolve this issue, the idea is 
to {\em expand} any set of size $t$ into a much larger set $\tilde{T}$, which is called {\em expanded set for $T$}. If $T$ happened to be a signature set to start with, such an expansion would give good estimate of the corresponding column of $A$, and more importantly, $\tilde{T}$ will have a similar `signature' property as $T$, which we can now verify because $\tilde{T}$ is large. 

Algorithm~\ref{alg:expand} and Definition~\ref{def:expand} show how to expand $T$ to $\tilde{T}$. The empirical expectation $\EExp[f(y)]$ is defined to be $\frac{1}{N}\sum_{i=1}^N f(y^i)$.



\begin{algorithm}
\caption{$\tilde{T}$ = expand($T$, threshold)}\label{alg:expand}
\begin{algorithmic}[1]
\REQUIRE $T$ of size $t$, $d$, and $N$ samples $\set{y^1,\dots,y^N}$
\ENSURE vector $\tilde{A}_{T}\in \R^n$ and expanded set $\tilde{T}$ of size $d$ 
(when $T$ is a signature set $\tilde{A}_T$ is an estimation of $A_j$).
\STATE {\em Recovery Step: } Let $L$ be the set of samples whose $\beta_T$ values are larger than $\EExp[\beta_T] + threshold$
 \[ L = \set{y^k \mid \beta_T^k \ge \EExp[\beta_T] + threshold}\] 
\STATE {\em Estimation Step: } Compute the empirical mean of samples in $L$, and obtain $\tilde{A}_T$ by shifting and scaling
\[\EExp_{L}[y] = \frac{1}{|L|}\left(\sum_{y^k \in  L} y^k\right), \textrm{ and }\tilde{A}_{T}(i) = \max\{0, (\EExp_{L}[y_i] - \EExp[y_i])/(1-\rho)\}\]
\STATE {\em Expansion Step: } $\tilde{T} = \{\textrm{$d$ largest coordinates of $\tilde{A}_T$}\}$
\end{algorithmic}
\end{algorithm}

\begin{Def}\label{def:expand}
For any set $T$ of size $t$, the expanded set $\tilde{T}$ for $T$ is defined as the one output by Algorithm~\ref{alg:expand}. The estimation $\tilde{A}_T$ is the output at step 2.
\end{Def}


When $T$ is a signature set for $x_j$, then $\tilde{A}_T$ is already close to the true $A_j$, and the expanded set $\tilde{T}$ is close to the largest entries of $A_j$. 
\begin{Lem}\label{lem:extend}
If $T$ is a signature set for $x_j$ and the number of samples $N = \Omega(n^{2+\delta}/\rho^3)$, where $\delta$ is any positive constant, then with high probability $||\tilde{A}_T - A_j||_{\infty} \le 1/n$. 
Furthermore, $\beta_{j,\tilde{T}}$ is $d/n$-close to the sum of $d$ largest entries in $A_j$, and for all $i\in \tilde{T}$, $A_j^{(i)} \ge 0.9\sigma$.
\end{Lem}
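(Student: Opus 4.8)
The plan is to exploit one exact identity. Conditioning on $\{x_j=1\}$ and writing $y_i = A_j^{(i)}x_j + \sum_{k\ne j}A_k^{(i)}x_k$, independence of the coordinates of $x$ gives $\Exp[y_i\mid x_j=1] = A_j^{(i)} + \sum_{k\ne j}\rho A_k^{(i)} = \Exp[y_i] + (1-\rho)A_j^{(i)}$ (using $\Exp[y_i]=\sum_k\rho A_k^{(i)}=1$), so $(\Exp[y_i\mid x_j=1]-\Exp[y_i])/(1-\rho)=A_j^{(i)}\ge 0$. Thus the Estimation Step of Algorithm~\ref{alg:expand} would return exactly $A_j$ if the set $L$ equalled $S:=\{k:x_j^k=1\}$ and all empirical expectations were exact. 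I would run the expansion with the threshold $0.9\sigma t$ (the value for which Lemma~\ref{lem:symmetric_diff} applies) and then bound the two resulting perturbations: $L\ne S$, and the use of only $N$ samples.

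First I would control $L$ versus $S$. Since $\beta_T=\sum_j\beta_{j,T}x_j$ has mean $t$, $\sum_j\beta_{j,T}=\sum_{i\in T}\sum_jA_j^{(i)}=t/\rho$, and $\beta_{j,T}\le\Lambda t$ for all $j$, we get $\Var[\beta_T]\le\Lambda t^2$; hence $\EExp[\beta_T]$ lies within $O(t\sqrt{\Lambda\log n/N})=o(\sigma t)$ of $t$, so the data-dependent threshold only perturbs event $E_2$ of Lemma~\ref{lem:symmetric_diff} by $o(\sigma t)$. Sandwiching $L$ between the two events that replace $\EExp[\beta_T]$ by $t\pm o(\sigma t)$ (whose membership indicators are i.i.d.\ over the samples), Lemma~\ref{lem:symmetric_diff}(2)--(3) together with a Chernoff bound give, whp, $|L|=(1\pm o(1))\rho N$ and $|L\triangle S|\le n^{-2C}N+O(\log n)$.

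Second I would handle the empirical error. From $\Var[y_i]=\sum_k\rho(1-\rho)(A_k^{(i)})^2\le\Lambda\sum_k\rho A_k^{(i)}=\Lambda$ and a Bernstein tail bound, whp every $y_i^k=O(\Lambda\log n)$; then $|\EExp[y_i]-1|=O(\sqrt{\Lambda\log n/N})$ and the empirical mean $\EExp_S[y_i]$ over $S$ satisfies $|\EExp_S[y_i]-\Exp[y_i\mid x_j=1]|=O(\sqrt{\Lambda\log n/(\rho N)})$, while passing from the average over $S$ to the average over $L$ changes each coordinate by at most $\frac{1}{|L|}\sum_{k\in L\triangle S}y_i^k+O(|L\triangle S|/(\rho N))=O(n^{-2C}\Lambda\log n/\rho)=o(1/n)$ for $C$ large. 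Substituting $N=\Omega(n^{2+\delta}/\rho^3)$ and taking a union bound over the $n$ pixels, every coordinate of $(\EExp_L[y]-\EExp[y])/(1-\rho)$ is within $1/(2n)$ of $A_j^{(i)}$ (the truncation $\max\{0,\cdot\}$ only helps since $A_j^{(i)}\ge 0$), which gives $\|\tilde{A}_T-A_j\|_\infty\le 1/n$. For the two ``furthermore'' claims I would use this $\ell_\infty$ bound directly: letting $S^\ast$ be the set of the $d$ largest coordinates of $A_j$, Assumption~1 forces $A_j^{(i)}\ge\sigma$ on $S^\ast$, so the $d$-th largest entry of $\tilde{A}_T$ is $\ge\sigma-1/n$ and hence every $i\in\tilde{T}$ has $A_j^{(i)}\ge\sigma-2/n\ge 0.9\sigma$; and $\beta_{j,\tilde{T}}=\sum_{i\in\tilde{T}}A_j^{(i)}\le\sum_{i\in S^\ast}A_j^{(i)}$ trivially, while $\sum_{i\in\tilde{T}}A_j^{(i)}\ge\sum_{i\in\tilde{T}}\tilde{A}_T(i)-d/(2n)\ge\sum_{i\in S^\ast}\tilde{A}_T(i)-d/(2n)\ge\sum_{i\in S^\ast}A_j^{(i)}-d/n$, using that $\tilde{T}$ collects the $d$ largest entries of $\tilde{A}_T$.

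The step I expect to be the main obstacle is the empirical-error bound: the raw pixel values $y_i$ are unbounded, so a few misclassified samples in $L\triangle S$ could in principle dominate the empirical average of some coordinate; making this harmless requires the uniform truncation $y_i^k=O(\Lambda\log n)$ played off against the strong $n^{-2C}$ misclassification probability from Lemma~\ref{lem:symmetric_diff}, plus a little care with the data-dependent threshold. Everything else is routine concentration once $N=\Omega(n^{2+\delta}/\rho^3)$ is plugged in.
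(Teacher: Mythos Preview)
Your proposal is correct and follows essentially the same route as the paper: both hinge on the identity $\Exp[y\mid x_j=1] = \mathbf{1} + (1-\rho)A_j$ and on Lemma~\ref{lem:symmetric_diff} to argue that the thresholding event $E_2$ is close enough to $\{x_j=1\}$ that the resulting empirical average is $1/n$-close to $A_j$. The one place the paper is slicker is precisely the step you flag as the main obstacle: in the nonnegative case, normalization gives the \emph{deterministic} bound $y_i \le \sum_k A_k^{(i)} = 1/\rho$, so the paper simply invokes Lemma~\ref{lem:prob_ineq} with $M=1/\rho$ to obtain $\|\Exp[y\mid E_2]-\Exp[y\mid E_1]\|_\infty \le n^{-C}/\rho$ at the population level and then adds the sampling error on top; no probabilistic truncation $y_i^k = O(\Lambda\log n)$ is needed, and the obstacle you anticipate disappears. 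Your sample-by-sample accounting of $|L\triangle S|$ and your explicit treatment of the data-dependent threshold $\EExp[\beta_T]$ are points the paper glosses over, and your argument for the two ``furthermore'' claims is more detailed than the paper's (which essentially leaves them implicit), so in those respects your write-up is in fact more complete.
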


\begin{proof}
Let's first consider $\E[\tilde{A}_T] \triangleq (\Exp[y|E_2] - \mathbf{1})/(1-\rho)$ where $E_2$ is the event that $\beta_T \ge t + 0.9\sigma t$ defined in Lemma~\ref{lem:symmetric_diff}. 
Recall that because of normalization, we know for any $j$, $\sum_{i\in [n]}\mat{A}{i}{j} = 1/\rho$, so in particular $y_i \le 1/\rho$.
By Lemma~\ref{lem:symmetric_diff} and some calculations (see Lemma~\ref{lem:prob_ineq}), we have that $|\Exp[y|E_2] -\Exp[y|E_1]|_{\infty} \le n^{-C}/\rho$. Note that $\Exp[y|E_1] =  1 + (1-\rho)A_j$. Therefore we have that $|\E[\tilde{A}_T] - A_j|_{\infty} \le n^{-C}/\rho$.

Now by concentration inequalities when $N = \Omega(n^{2+\delta}/\rho^3)$ (notice that the variance of each coordinate is bounded by $\Lambda$), $\|\tilde{A}_T - \E[\tilde{A}_T]\|_\infty \le 1/n$ with very high probability ($\exp(-\Omega(n^\delta))$). This probability is high enough so we can apply union bound for all signature sets.
\end{proof}


%

%
%
%
%
\subsection{Identify Expanded Signature Sets}
\label{subsec:expandedsignatureset}
We will now see the advantage that the expanded sets $\tilde{T}$ provide. If $T$ happens to be a signature set, the expanded set $\tilde{T}$ for $T$ also has similar property.  But now $\tilde{T}$ is a much larger set (size $d$ as opposed to $t =polylog$), and we know (by Assumption 2') that different features have limited intersection, so if we see a large elevation it is likely to be caused by a single feature! We will leverage this in order to identify expanded signature sets among all the expanded sets.

If an expanded set $\tilde{T}$ also has essentially a unique large coefficient $\beta_{j,\tilde{T}}$, we call it an {\em expanded signature set}.

\begin{Def}[Expanded Signature Set]
An expanded set $\tilde{T}$ is an expanded signature set for $x_j$ if $\beta_{j,\tilde{T}} \ge 0.7\sigma d$ and for all $k\ne j$, $\beta_{k,\tilde{T}} \le 0.3\sigma d$.
\end{Def}

Note that an expanded signature set always has size $d$ and the gap between largest $\beta_{j,T}$ and the second largest is only constant as opposed to logarithmic in the definition of signature set. As its name suggests, a expanded set $\tilde{T}$ of a signature set $T$ for $x_j$ is an expanded signature set for $x_j$ as well.  On one hand, the Lemma~\ref{lem:extend} guarantees that $\tilde{T}$ connects to $x_j$ with large weights, and on the other hand, since the pairwise intersection of neighborhoods of $x_j$ and $x_k$ in $G_{\tau}$ is small, $\tilde{T}$ cannot also connect to other $x_k$ with too many large weights. 



\begin{Lem}
\label{lem:expandsignature}
If $T$ is a signature set for $x_j$, then the expanded set $\tilde{T}$ for $T$ is always an expanded signature set for $x_j$. In fact, the coefficient $\beta_{j,\tilde{T}}$ is at least $0.9\sigma d$.
\end{Lem}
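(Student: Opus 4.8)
The plan is to combine the structural guarantees of Lemma~\ref{lem:extend} about the recovered column $\tilde{A}_T$ with the low-intersection hypothesis (Assumption~2'), and then to check that the resulting error terms are small enough relative to $\sigma d$. Throughout I would condition on the high-probability event of Lemma~\ref{lem:extend}, so that $\|\tilde{A}_T - A_j\|_\infty \le 1/n$ and, more importantly, $A_j^{(i)} \ge 0.9\sigma$ for every $i \in \tilde{T}$, where $\tilde{T}$ (the set of the $d$ largest coordinates of $\tilde{A}_T$) has size exactly $d$.

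First I would dispatch the lower bound, which is the "in fact" clause. Since $\tilde{T}$ has size $d$ and every $i \in \tilde{T}$ has $A_j^{(i)} \ge 0.9\sigma$,
\[
\beta_{j,\tilde{T}} = \sum_{i\in\tilde{T}} A_j^{(i)} \ge 0.9\sigma d ,
\]
which in particular is at least $0.7\sigma d$ as required in the definition of an expanded signature set. (Alternatively one could use the other part of Lemma~\ref{lem:extend}, that $\beta_{j,\tilde{T}}$ is $d/n$-close to the sum of the $d$ largest entries of $A_j$, together with Assumption~1, which supplies $d$ entries of size $\ge \sigma$.)

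The substance is the upper bound on $\beta_{k,\tilde{T}}$ for $k \ne j$. The key observation is that, because $\tau = O_\theta(1/\log n)$ is much smaller than the constant $0.9\sigma$, the bound $A_j^{(i)} \ge 0.9\sigma$ on $i \in \tilde{T}$ says exactly that $\tilde{T}$ lies inside the neighborhood of $x_j$ in $G_\tau$. Assumption~2' then forces $\tilde{T}$ to meet the $G_\tau$-neighborhood of $x_k$ in fewer than $\kappa$ pixels. Splitting $\beta_{k,\tilde{T}} = \sum_{i\in\tilde{T}} A_k^{(i)}$ into coordinates with $A_k^{(i)} \ge \tau$ (at most $\kappa$ of them, each $\le \Lambda$) and coordinates with $A_k^{(i)} < \tau$ (at most $d$ of them, each $< \tau$) gives
\[
\beta_{k,\tilde{T}} \le \kappa\Lambda + d\tau .
\]
Substituting $\kappa = O_\theta(d/\log^2 n)$ and $\tau = O_\theta(1/\log n)$ with the implied constants fixed as in the parameter discussion of Section~\ref{sec:prelim}, each of the two terms is at most $0.15\sigma d$, so $\beta_{k,\tilde{T}} \le 0.3\sigma d$. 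Combined with the lower bound $\beta_{j,\tilde{T}} \ge 0.9\sigma d$, this shows $\tilde{T}$ is an expanded signature set for $x_j$.

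The only delicate point — and the one I would be most careful about — is verifying that $\kappa\Lambda$ and $d\tau$ genuinely fit under $0.3\sigma d$ for the concrete choices $\tau = O(\sigma^4/\Delta\Lambda^2\log n)$ and $\kappa = O(\sigma^8 d/\Delta^2\Lambda^6\log^2 n)$; this is a routine but slightly fiddly substitution into the parameter constraints. One should also note that the whole argument is "with high probability" inherited from Lemma~\ref{lem:extend}, and since that lemma's sample bound already supports a union bound over all signature sets, the statement holds simultaneously for every feature $x_j$.
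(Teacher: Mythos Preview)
Your proposal is correct and follows essentially the same approach as the paper: use Lemma~\ref{lem:extend} to get $A_j^{(i)}\ge 0.9\sigma$ on $\tilde{T}$ (yielding both $\beta_{j,\tilde{T}}\ge 0.9\sigma d$ and $\tilde{T}\subseteq$ neighborhood of $j$ in $G_\tau$), then invoke Assumption~2' to bound $\beta_{k,\tilde{T}}\le \Lambda\kappa + d\tau = o(\sigma d)$. Your explicit remark that $\tilde{T}$ sits inside the $G_\tau$-neighborhood of $x_j$ is exactly the bridge the paper uses, and your handling of the high-probability conditioning is a welcome clarification the paper leaves implicit.
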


\begin{proof}
Since we know there are at least $d$ weights $\mat{A}{i}{j}$ bigger than $\sigma$ for any column $A_j$, by Lemma~\ref{lem:extend} we know $\beta_{j,\tilde{T}} \ge \sigma d - o(1)d \ge 0.9\sigma d$.

Furthermore, Lemma~\ref{lem:extend} says $x_j$ connects to every node in $\tilde{T}$ with weights larger than $0.9\sigma$ (since by Assumption 1 there are more than $d$ edges of weight at least $\sigma$ from node $j$). By Assumption 2 on the graph, for any other $k\ne j$, the number of $y_i$'s that are connected to both $k$ and $j$ in $G_{\tau}$ is bounded by $\kappa$. In particular, the number of edges from $k$ to $\tilde{T}$ with weights more than $\tau$ is bounded by $\kappa$. Therefore the coefficient $\beta_{k,\tilde{T}} = \sum_{(i,k)\in G_{\tau}}\mat{A}{i}{k} + \sum_{(i,k)\not \in G_{\tau}}\mat{A}{i}{k} $ is bounded by  
$\Lambda\kappa + |\tilde{T}|\tau = o(d) \le 0.3d$. (Recall $\tau = o(1)$ and $\kappa = o(d)$)
\end{proof}



The following notion of empirical bias is a more precise way (compared to correlated set) to measure the simultaneous elevation effect.







\begin{Def}[Empirical Bias] The empirical bias $\bias_{\tilde{T}}$ of an expanded set $\tilde{T}$ of size $d$ is defined to be the largest $B$ that satisfies
$$\card{\set{k\in [p]:\beta^k_{\tilde{T}} \ge \EExp[\beta_{\tilde{T}}]+B}} \ge \rho N/2.$$

In other words, $\bias_{\tilde{T}}$ is the difference between the $\rho N/2$-th largest $\beta^k_{\tilde{T}}$ in the samples and $\EExp[\beta_{\tilde{T}}]$.
\end{Def}

%

The key lemma in this part shows the expanded set with largest empirical bias must be an expanded signature set:

\begin{Lem}\label{lem:maxbias}
Let $\tilde{T}^*$ be the set with largest empirical bias $\bias_{\tilde{T}^*}$ among all the expanded sets $\tilde{T}$. The set $\tilde{T}^*$ is an expanded signature set for some $x_j$.
\end{Lem}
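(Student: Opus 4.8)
I would prove the statement in two halves: (i) the maximum empirical bias over all expanded sets is large — say at least $0.8\sigma d$ (up to lower‑order terms) — and (ii) any expanded set whose empirical bias is that large must be an expanded signature set. Half (i) needs no union bound over expanded sets, while half (ii) reduces to a deterministic statement about one fixed set, which is then amplified to all sets by a Chernoff bound.

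\textbf{Lower bound (half i).} Fix any feature, say $x_1$; by Lemma~\ref{lem:exists_sign} it has a signature set $T$ of size $t$, and by Lemma~\ref{lem:expandsignature} the expansion $\tilde T$ of $T$ is an expanded signature set for $x_1$ with $\beta_{1,\tilde T}\ge 0.9\sigma d$ and, crucially (this is proved inside Lemma~\ref{lem:expandsignature}), $\beta_{k,\tilde T}\le\Lambda\kappa+d\tau=O_\theta(d/\log n)$ for every $k\ne 1$. Bernstein's inequality (Theorem~\ref{thm:bernstein_ineq}) then shows $\sum_{k\ne 1}\beta_{k,\tilde T}x_k$ lies within $0.05\sigma d$ of its mean $(1-o(1))d$ except with probability $n^{-2C}$, so conditioned on $x_1=1$ (probability $\rho$) we get $\beta_{\tilde T}\ge\E[\beta_{\tilde T}]+0.8\sigma d$. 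Passing to the $N=\poly(n)$ samples by a Chernoff bound (and $\EExp[\beta_{\tilde T}]\approx\E[\beta_{\tilde T}]$ whp), at least $\rho N/2$ samples exceed $\EExp[\beta_{\tilde T}]+0.8\sigma d$, so $\bias_{\tilde T^*}\ge\bias_{\tilde T}\ge 0.8\sigma d$ with high probability.

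\textbf{Upper bound (half ii).} It suffices to show that with high probability every expanded set $\tilde T$ with $\bias_{\tilde T}\ge 0.8\sigma d$ is an expanded signature set. For a fixed $\tilde T$, if the population probability $p_{\tilde T}:=\Pr[\beta_{\tilde T}\ge\E[\beta_{\tilde T}]+0.7\sigma d]$ is below $\rho/4$, then a Chernoff bound forces the empirical count of elevated samples below $\rho N/2$ except with probability $e^{-\Omega(\rho N)}=n^{-\omega(\log^2 n)}$, which survives a union bound over the at most $\binom{m}{t}=n^{O(\log^2 n)}$ expanded sets (here $N=\poly(n)$ is large enough that $\rho N\gg\log^3 n$). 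So the task becomes the deterministic claim: if $\tilde T$ is not an expanded signature set then $p_{\tilde T}<\rho/4$. The key structural input is that a fixed $d$‑element pixel set is hit heavily by few features: since $\beta_{k,\tilde T}\le\Lambda|\tilde T\cap N_\tau(k)|+d\tau$, a feature with $\beta_{k,\tilde T}\ge c$ (for $c\gg d\tau$) puts $\Omega(c/\Lambda)$ of its $G_\tau$‑neighbors into $\tilde T$, and a double‑counting argument against the pairwise‑intersection bound $\kappa$ of Assumption~2$'$ shows only $O_\theta(1)$ features can have $\beta_{k,\tilde T}\ge 0.3\sigma d$ and only $O_\theta(\log\log n)$ can have $\beta_{k,\tilde T}\ge c_1 d$ with $c_1=\Theta(\sigma^2/\log\log n)$. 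Splitting features into ``large'' ($\beta_{k,\tilde T}\ge c_1 d$) and ``small'', the small part has variance $\le c_1 d^2=o(\sigma^2 d^2)$ and concentrates to within $0.1\sigma d$ of its mean except with probability $\rho/8$ by Bernstein, so $\beta_{\tilde T}\ge\E[\beta_{\tilde T}]+0.7\sigma d$ essentially forces the few large coefficients to be switched on to total $\ge 0.5\sigma d$. If no coefficient reaches $0.7\sigma d$, that has probability $<\rho/4$ (one large coefficient alone falls short; two being on simultaneously costs $\rho^2$; the $O_\theta(\log\log n)$ ``medium'' coefficients are on with total probability $o(\rho)$ and alone cannot reach $0.5\sigma d$). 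If two coefficients reach $0.3\sigma d$ I instead compare $\tilde T$ with $\tilde T'=\{\text{top }d\text{ coordinates of }A_j\}$ for the dominant feature $j$: by Lemma~\ref{lem:extend} $\tilde T'$ is an expanded signature set with $\beta_{j,\tilde T'}\ge\beta_{j,\tilde T}$, and since the $\rho N/2$‑quantile of $\beta_{\tilde T}$ is governed by the event ``only feature $j$ is on'' ($\rho^2$‑events do not reach the quantile) we get $\bias_{\tilde T}\le\beta_{j,\tilde T}+o(\sigma d)\le\bias_{\tilde T'}+o(\sigma d)$, so such a $\tilde T$ is never strictly better than an expanded signature set — while a second feature $k$ with $\beta_{k,\tilde T}\ge 0.3\sigma d$ would make $N_\tau(j)\cap N_\tau(k)$ much larger than $\kappa$, contradicting Assumption~2$'$.

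\textbf{Main obstacle.} The delicate part is the deterministic bound $p_{\tilde T}<\rho/4$ when no single coefficient dominates: one must simultaneously make the genuinely small coefficients concentrate tightly — this is exactly where the precise scalings of $\tau$, $\kappa$, $c_1$ and $d\ge\Delta\Lambda\log^2 n/\sigma^2$ enter, to push the Bernstein exponent past $\log(1/\rho)=O(\log\log n)$ — and rule out the $O_\theta(\log\log n)$ ``medium'' features from conspiring to lift $\beta_{\tilde T}$, which needs the structural degree bound with a threshold tuned so that ``medium'' features are both rare (unlikely that any is on) and individually too weak. Turning the case‑(b) comparison ``$\tilde T'$ is at least as good'' into a clean statement about the argmax (tie‑breaking toward expanded signature sets, or a strict‑improvement argument) is a minor but necessary wrinkle.
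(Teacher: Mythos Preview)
Your overall architecture matches the paper's: establish a lower bound on $\bias_{\tilde T^*}$ via an expanded signature set coming from Lemma~\ref{lem:exists_sign} and Lemma~\ref{lem:expandsignature}, then use a comparison argument against the ``top-$d$ coordinates of $A_j$'' set to rule out a second large coefficient. But two steps in your half~(ii) do not go through.

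\medskip
\textbf{The small/medium split does not give the concentration you claim.} You take the cutoff $c_1=\Theta(\sigma^2/\log\log n)$ so that only $O_\theta(\log\log n)$ features are ``large'', and then assert that the ``small'' part has variance $\le c_1 d^2$ and hence concentrates to within $0.1\sigma d$ except with probability $\rho/8$. But the Bernstein exponent you get is only
\[
\frac{(0.1\sigma d)^2}{2c_1 d^2 + O(c_1 d\cdot\sigma d)} \;=\; \Theta\!\left(\frac{\sigma^2}{c_1}\right)\;=\;\Theta(\log\log n),
\]
so the failure probability is $(\log n)^{-\Theta(1)}$, nowhere near $\rho/8=o(1/\log^{2.5}n)$. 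The paper avoids this tension by taking the cutoff at $\Theta_\theta(d/\log n)$ (Claim~\ref{claim:largeedges}), which makes the small part concentrate with probability $1-n^{-2}$ but yields $|K|=O_\theta(\log n)$ large features. It then does \emph{not} try to make all large features off; it only needs that at most one is on, which costs $O(\rho^2|K|^2)\ll\rho$. The payoff is the two-sided estimate $\bigl|\bias_{\tilde T}-\max_k\beta_{k,\tilde T}\bigr|\le 0.1\sigma^2 d/\Lambda$ (Claim~\ref{claim:biasmaxclose}), which replaces your entire case~A: if $\bias_{\tilde T^*}\ge 0.8\sigma d$ then automatically $\max_k\beta_{k,\tilde T^*}\ge 0.7\sigma d$.

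\medskip
\textbf{The strictness of the comparison is not a ``wrinkle''; it is the content, and your fix is incorrect.} You end case~B with ``a second feature $k$ with $\beta_{k,\tilde T}\ge 0.3\sigma d$ would make $N_\tau(j)\cap N_\tau(k)$ much larger than $\kappa$''. That is false: $\tilde T$ can contain many pixels from $N_\tau(j)$ and many from $N_\tau(k)$ with the two neighborhoods nearly disjoint (e.g.\ $\tilde T$ is roughly a union of pieces of each). What the second large coefficient \emph{does} buy you is that $|Q_k|\ge (0.3\sigma d - d\tau)/\Lambda$ while $|Q_j\cap Q_k|\le\kappa$, so by inclusion--exclusion inside the size-$d$ set $\tilde T^*$,
\[
|Q_j|\;\le\; d - 0.3\sigma d/\Lambda + \kappa .
\]
Hence $\beta_{j,\tilde T^*}$ misses at least $0.2\sigma d/\Lambda$ pixels of $A_j$ of weight $\ge\sigma$, so $\beta_{j,\tilde T'}-\beta_{j,\tilde T^*}\ge 0.2\sigma^2 d/\Lambda$. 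Combined with the two-sided bias estimate above, this gives the strict inequality $\bias_{\tilde T'}>\bias_{\tilde T^*}$ and the contradiction. Your non-strict comparison $\bias_{\tilde T}\le\bias_{\tilde T'}+o(\sigma d)$ is not enough precisely because the bias estimate has slack $\Theta(\sigma^2 d/\Lambda)$, and the quantitative gap from the $|Q_j|$ bound is what beats that slack.
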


We build this lemma in several steps. First of all, we show that the bias of $\tilde{T}$ is almost equal
to the largest $\beta_{j,\tilde{T}}$:
if $\beta_{\tilde{T}}$ contains a large term $\beta_{j,\tilde{T}}x_j$, then certainly this term will contribute to the bias $\bias_{\tilde{T}}$; on the other hand, suppose in some extreme case $\beta_{\tilde{T}}$ only has two non-zero terms $\beta_{j,\tilde{T}}x_j + \beta_{k,\tilde{T}}x_k$. Then they cannot contribute more than $\max\{\beta_{j,\tilde{T}}, \beta_{k,\tilde{T}}\}$ to the bias, because otherwise both $x_k$ and $x_j$ have to be 1 to make the sum larger than $\max\{\beta_{j,\tilde{T}}, \beta_{k,\tilde{T}}\}$, and this only happens with small probability $\rho^2 \ll \rho$. 

The intuitive argument above 
is not far from true: basically we could show that a) There are indeed very few large coefficients $\beta_{k,\tilde{T}}$'s  (see Claim~\ref{claim:largeedges} for the precise statement) b) the sum of  those small $\beta_{k,\tilde{T}}x_k$ concentrates around its mean, thus won't contribute much to the bias.  

After relating the bias of $\tilde{T}$ to the largest coefficients $\max_j \beta_{j,\tilde{T}}$, we further argue that taking the set $\tilde{T}^*$ with largest bias among all the $\tilde{T}$, we not only see a large coefficient $\beta_{j,\tilde{T}}$, but also we observe a gap between the the top $\beta_{j,\tilde{T}}$ and all other $\beta_{k,T}$'s, and hence $\tilde{T}$ is an expanded signature set for $x_j$. 



We make the arguments above precise by the following claims. First, we shall show there cannot be too many large coefficients $\beta_{j,D}$ for any set $D$ of size $d$ (although we only apply the claim on expanded sets).

\begin{claim}
\label{claim:largeedges}
For any set $\tilde{T}$ of size $d$, the number of $k$'s such that $\beta_{k,\tilde{T}}$'s is larger than 
$\sfrac{d\sigma^4}{\Delta \Lambda^2 \log n}$
 is at most 
$O(\Delta\Lambda^3\log n/\sigma^4)$. 
\end{claim}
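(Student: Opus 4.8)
The plan is to bound the number of features $k$ with $\beta_{k,\tilde T}$ large by a counting/volume argument on the total weight of the submatrix of $A$ restricted to rows in $\tilde T$. First I would fix a set $\tilde T$ of size $d$ and recall that $\beta_{k,\tilde T} = \sum_{i\in\tilde T}\mat{A}{i}{k}$, so $\beta_{k,\tilde T}$ is just the $\ell_1$ weight of column $k$ restricted to the $d$ rows in $\tilde T$ (nonnegativity is used here, so there are no cancellations). The key global constraint is the normalization: $\Exp[\beta_{\tilde T}] = \sum_{i\in\tilde T}\Exp[y_i] = d$ (since each pixel has mean $1$), and $\Exp[\beta_{\tilde T}] = \rho\sum_k \beta_{k,\tilde T}$, hence $\sum_{k=1}^m \beta_{k,\tilde T} = d/\rho$. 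So the total weight in the submatrix is exactly $d/\rho$.

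A naive averaging against $d/\rho$ would allow roughly $\rho m = n d\rho^2 /$(something) features to have weight $\Omega(\sigma^4 d/\log n)$, which is far too many. So the real work is to show that most of the weight $d/\rho$ comes from the small entries (those $\le \tau$), not from features genuinely ``present'' in many of the $d$ rows. The next step is therefore to split each column's restricted weight as $\beta_{k,\tilde T} = \sum_{i\in\tilde T,\,(i,k)\in G_\tau}\mat{A}{i}{k} + \sum_{i\in\tilde T,\,(i,k)\notin G_\tau}\mat{A}{i}{k} \le \Lambda f_{k,\tilde T} + d\tau$, where $f_{k,\tilde T}$ is the number of rows of $\tilde T$ that are $G_\tau$-neighbors of $k$. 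Since $\tau = O_\theta(1/\log n)$, the term $d\tau$ is at most $O(\sigma^4 d/(\Delta\Lambda^2\log n))$ by the choice of $\tau$, i.e.\ below the threshold in the claim up to constants; so for $\beta_{k,\tilde T}$ to exceed $d\sigma^4/(\Delta\Lambda^2\log n)$ we need $\Lambda f_{k,\tilde T} = \Omega(d\sigma^4/(\Delta\Lambda^2\log n))$, i.e.\ $f_{k,\tilde T} = \Omega(d\sigma^4/(\Delta\Lambda^3\log n))$. It then remains to bound the number of features $k$ whose $G_\tau$-neighborhood meets $\tilde T$ in that many rows. For this I would count edges of $G_\tau$ incident to $\tilde T$: by Assumption~1/normalization the total number of $G_\tau$-edges in the whole graph is $O(md\rho \cdot) \le$ roughly $nd/$(weight lower bound $\tau$)$\cdot$const; more cleanly, $\sum_k f_{k,\tilde T} = $ (number of $G_\tau$-edges incident to the $d$ rows of $\tilde T$) $\le d\cdot(\text{max degree of a pixel in }G_\tau)$. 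Bounding a pixel's $G_\tau$-degree by $\Exp[y_i]/\tau = 1/\tau = O_\theta(\log n)$ gives $\sum_k f_{k,\tilde T} = O_\theta(d\log n)$, and dividing by the per-feature lower bound $\Omega(d\sigma^4/(\Delta\Lambda^3\log n))$ yields $O(\Delta\Lambda^3\log^2 n/\sigma^4)$ features — close to the claimed $O(\Delta\Lambda^3\log n/\sigma^4)$ but off by a $\log n$ factor.

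Closing that last $\log n$ gap is the step I expect to be the main obstacle, and it is where I'd look more carefully at the exact parameter choices rather than crude bounds. The fix is presumably to not throw away the $d\tau$ term against the threshold so wastefully: with $\tau = O(\sigma^4/(\Delta\Lambda^2\log n))$ and the threshold being $d\sigma^4/(\Delta\Lambda^2\log n)$, the slack between them is a constant factor, so the required $f_{k,\tilde T}$ is actually $\Omega(d\sigma^4/(\Delta\Lambda^3\log n))$ as above — this does not gain a log. The genuine saving must come from bounding the pixel $G_\tau$-degree more tightly, or from charging against $\sum_k \beta_{k,\tilde T} = d/\rho$ together with $\rho = o_\theta(1/\log^{2.5}n)$ and $md\rho = \Theta(n)$ to turn the $1/\tau$ degree bound into something $\log n$ smaller. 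Concretely, I would bound $\sum_k f_{k,\tilde T}\Lambda \le \sum_k\beta_{k,\tilde T} = d/\rho$ so $\sum_k f_{k,\tilde T} \le d/(\rho\Lambda)$, but that is far worse; so the right route is the per-pixel-degree bound, and the statement of the claim should be read as using the tighter available degree bound (each pixel has $G_\tau$-degree $O_\theta(1)$ on average, or one uses that the relevant $\tilde T$ are expanded sets where Lemma~\ref{lem:extend} controls the structure). I would therefore either (i) carry the computation with the exact constants and accept the bound in whatever form it comes out, matching it to the $\Delta$ in the hypotheses, or (ii) restrict, as the claim's parenthetical hints, to expanded sets $\tilde T$ and use their extra structure to get the sharper count. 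Modulo this constant-versus-$\log$ bookkeeping, the argument is: total restricted weight is $d/\rho$, almost all of it is below-$\tau$ ``noise'' bounded by $d\tau$ per column plus an edge-counting bound on how many columns can have many above-$\tau$ entries inside a $d$-row window, and the edge count is controlled by the normalization $\Exp[y_i]=1$ limiting pixel degrees in $G_\tau$.
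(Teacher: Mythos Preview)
Your decomposition $\beta_{k,\tilde T}\le \Lambda f_{k,\tilde T}+d\tau$ and the conclusion that any ``large'' $k$ must have $f_{k,\tilde T}=\Omega\!\big(d\sigma^4/(\Delta\Lambda^3\log n)\big)$ match the paper exactly (the paper calls this set $Q_k$). The gap is in the next step, and it is not just a $\log n$ bookkeeping issue.

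First, your pixel-degree bound is off by a factor $1/\rho$: from $\Exp[y_i]=1$ you get $\sum_j \mat{A}{i}{j}=1/\rho$, not $1$, so a pixel's $G_\tau$-degree is at most $1/(\rho\tau)$, not $1/\tau$. With $\rho=o_\theta(1/\log^{2.5}n)$ this is far larger than $O(\log n)$, and the edge-count $\sum_k f_{k,\tilde T}\le d/(\rho\tau)$ gives a useless bound on the number of large features. None of the global volume arguments you try (total weight $d/\rho$, per-pixel degree, etc.) can work here, because nothing prevents many distinct features from each hitting the \emph{same} $d$ pixels heavily.

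The missing idea is Assumption~2$'$: the sets $Q_k=\{i\in\tilde T:\mat{A}{i}{k}\ge\tau\}$ have pairwise intersections at most $\kappa$. The paper's proof packs the $Q_k$'s into $\tilde T$ via inclusion--exclusion:
\[
d\;\ge\;\Big|\bigcup_{k\in K_{\mathrm{large}}}Q_k\Big|\;\ge\;\sum_{k}|Q_k|\;-\;\sum_{k\ne k'}|Q_k\cap Q_{k'}|\;\ge\;|K_{\mathrm{large}}|\cdot\frac{\sigma^4 d}{2\Delta\Lambda^3\log n}\;-\;\binom{|K_{\mathrm{large}}|}{2}\kappa,
\]
and with $\kappa=O\!\big(\sigma^8 d/(\Delta^2\Lambda^6\log^2 n)\big)$ this quadratic forces $|K_{\mathrm{large}}|=O(\Delta\Lambda^3\log n/\sigma^4)$. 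So the claim really does use the low-pairwise-intersection hypothesis, not just normalization; your route via edge-counting cannot close the gap without it.
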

\newcommand{\KL}{K_{large}}

%
%
%
%

\begin{proof}
For the ease of exposition, we define $\KL = \{k: \beta_{k,\tilT}\ge \sfrac{d\sigma^4}{\Delta\Lambda ^2\log n}\}$.  
Hence the goal is to prove that $|\KL| \le O(\Delta\Lambda^3\log n/\sigma^4)$. 
Recall that $\beta_{k,\tilde{T}} = \sum_{i\in \tilde{T}} \mat{A}{i}{k}$. Let $Q_k = \{i \in \tilde{T} : \mat{A}{i}{k} \ge \tau\}$ be the subset of nodes in $\tilT$ that connect to $k$ with weights larger than $\tau$. We have that $\beta_{k,\tilde{T}} = \sum_{i\not \in Q_k}\mat{A}{i}{k} + \sum_{i \in Q_k}\mat{A}{i}{k}$. The first sum is upper bounded by $d\tau \le \sfrac{d\sigma^4}{2\Delta\Lambda^2\log n}$. Therefore for $k \in \KL$, the second sum is lower bounded by $\sfrac{d\sigma^4}{2\Delta\Lambda^2\log n}$. Since $\mat{A}{i}{k}\le \Lambda$, we have $|Q_k|\ge \sfrac{\sigma^4d}{2\Delta\Lambda^3\log n}$.



On the other hand, by Assumption 2 we know in graph $G_{\tau}$, any two features cannot share too many pixels:
for any $k$ and $k'$, $\card{Q_k\cap Q_{k'}} \le \kappa$. Also note that by definition, $Q_j\subset \tilde{T}$, which implies that $|\cup_{k\in \KL}Q_k| \le |\tilde{T}| = d$. 
By inclusion-exclusion we have
\begin{equation}
d \ge |\bigcup_{k\in \KL}Q_k| \ge \sum_{k\in \KL} |Q_k| - \sum_{k,k'\in \KL}\card{Q_k\cap Q_{k'}}\ge |\KL|\sfrac{\sigma^4d}{2\Delta\Lambda^3\log n} - |\KL|^2/2 \cdot\kappa \label{eqn:`in-ex}
\end{equation}
This implies that $|\KL|\le  O(\Delta\Lambda^3\log n/\sigma^4)$, when $\kappa = O(\sfrac{\sigma^8d}{ \Delta^2\Lambda^6\log^2 n})$. \footnote{Note that any subset of $\KL$ also satisfies equation~(\ref{eqn:`in-ex}), thus we don't have to worry about the other range of the solution of (\ref{eqn:`in-ex})}
\end{proof}



For simplicity, let $k^* = \arg\max_k \beta_{k,\tilde{T}}$, so $\beta_{k^*,\tilde{T}}$ is the largest coefficient in $\beta_{\tilde{T}}$. Recall that the definition of expanded signature set roughly translates to a constant factor gap between $\beta_{k^*,\tilde{T}}$ to any other coefficient $\beta_{k,\tilde{T}}$. 

The next claim shows that the empirical bias $\bias_{\tilde{T}}$ is a good estimate of $\beta_{k^*, \tilde{T}}$ when $\beta_{k^*, \tilde{T}}$ is large. 

\begin{claim}
\label{claim:biasmaxclose}
For any expanded $\tilde{T}$ of size $d$, with high probability over the choices of all the $N$ samples, the empirical bias $\bias_{\tilde{T}}$ is within $0.1d\sigma^2/\Lambda$ to $\beta_{k^*,\tilde{T}} = \max_{k} \beta_{k,\tilde{T}}$ when $\beta_{k^*,\tilde{T}}$ is at least $0.5d\sigma$.
\end{claim}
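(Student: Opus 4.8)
The plan is to show, for each fixed size-$d$ set $\tilT$ with $\beta_{k^*,\tilT} := \max_k\beta_{k,\tilT}\ge 0.5d\sigma$, that on all but an $o(\rho)$-fraction of the $N$ samples $\beta_{\tilT}$ equals $\EExp[\beta_{\tilT}]+\beta_{k^*,\tilT}$ (up to $\pm 0.07 d\sigma^2/\Lambda$) when $x_{k^*}=1$ and is at most $\EExp[\beta_{\tilT}]+\beta_{k^*,\tilT}+0.07 d\sigma^2/\Lambda$ when $x_{k^*}=0$; the two inequalities defining $\bias_{\tilT}$ then drop out by reading off the $(\rho N/2)$-th largest value of $\beta_{\tilT}$. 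There are at most $m^t$ (quasipolynomially many) expanded sets and each failure event below will have probability $\exp(-\poly(n))$, so a union bound covers all $\tilT$ at once; the dependence of $\tilT$ on the samples is harmless and can be removed by a sample split (estimate $\tilT$ on one half of the samples, compute $\bias_{\tilT}$ on the other) or by passing to the deterministic limit of the estimation step in Algorithm~\ref{alg:expand}.

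First I would partition the features into $k^*$, the set $S := \{k:\beta_{k,\tilT}\ge d\sigma^4/(\Delta\Lambda^2\log n)\}$ of ``moderately large'' features, and the rest. By Claim~\ref{claim:largeedges}, $|S| = O(\Delta\Lambda^3\log n/\sigma^4)$, and since $0.5d\sigma$ exceeds the threshold defining $S$ the hypothesis forces $k^*\in S$. Write $\beta_{\tilT} = \beta_{k^*,\tilT}x_{k^*} + \sum_{k\in S\setminus\{k^*\}}\beta_{k,\tilT}x_k + \sum_{k\notin S}\beta_{k,\tilT}x_k$. The last (``small'') sum has variance at most $\bigl(\max_{k\notin S}\beta_{k,\tilT}\bigr)\,\rho\sum_k\beta_{k,\tilT}\le \frac{d\sigma^4}{\Delta\Lambda^2\log n}\cdot d$, using the normalization $\rho\sum_k\beta_{k,\tilT}=\sum_{i\in\tilT}\E[y_i]=d$. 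By Bernstein's inequality the small sum lies within $0.05 d\sigma^2/\Lambda$ of its mean $\rho\sum_{k\notin S}\beta_{k,\tilT}$ except with probability $n^{-\Omega(\Delta)}$, and that mean lies in $[\,d-0.01 d\sigma^2/\Lambda,\ d\,]$ because $\rho\sum_{k\in S}\beta_{k,\tilT}\le\rho|S|\cdot d\Lambda\le 0.01 d\sigma^2/\Lambda$ in the stated regime ($\Delta$ a large constant, $\rho = o_\theta(1/\log^{2.5}n)$). As this holds per sample, a Chernoff bound makes the number of violating samples at most $2n^{-\Omega(\Delta)}N\le\rho N/100$, except with probability $\exp(-\poly(n))$.

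Next I would handle the features in $S$: each is active with probability $\rho$ and $|S|\rho=o(1)$, so the expected number of samples in which two or more coordinates of $S$ are active is at most $\binom{|S|}{2}\rho^2 N = o(\rho N)$, and a Poisson/Chernoff tail bound makes the actual count at most $\rho N/10$ except with probability $\exp(-\poly(n))$. On all remaining samples at most one coordinate of $S$ is active, so combining with the previous step: on all but fewer than $\rho N/2$ samples we have $\beta_{\tilT}\in[\,d+\beta_{k^*,\tilT}-0.06 d\sigma^2/\Lambda,\ d+\beta_{k^*,\tilT}+0.06 d\sigma^2/\Lambda\,]$ if $x_{k^*}=1$, and $\beta_{\tilT}\le d+\beta_{k^*,\tilT}+0.06 d\sigma^2/\Lambda$ if $x_{k^*}=0$ (in the latter any single active $S$-coordinate has coefficient $\le\beta_{k^*,\tilT}$; in the former lower bound the nonnegative inactive-$S$ terms are simply dropped). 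A final Chernoff bound gives that at least $0.9\rho N$ samples have $x_{k^*}=1$ and that $\EExp[\beta_{\tilT}]$ is within $0.01 d\sigma^2/\Lambda$ of $\E[\beta_{\tilT}]=d$. Hence at least $0.9\rho N-\rho N/2 > \rho N/2$ samples satisfy $\beta_{\tilT}-\EExp[\beta_{\tilT}]\ge\beta_{k^*,\tilT}-0.1 d\sigma^2/\Lambda$, giving $\bias_{\tilT}\ge\beta_{k^*,\tilT}-0.1 d\sigma^2/\Lambda$; and at most $\rho N/2$ samples can satisfy $\beta_{\tilT}-\EExp[\beta_{\tilT}]\ge\beta_{k^*,\tilT}+0.1 d\sigma^2/\Lambda$, giving $\bias_{\tilT}\le\beta_{k^*,\tilT}+0.1 d\sigma^2/\Lambda$.

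I expect the middle step to be the crux. The features in $S\setminus\{k^*\}$ are exactly what could push $\bias_{\tilT}$ above $\beta_{k^*,\tilT}$, yet they are simultaneously too large for the Bernstein argument and potentially too numerous ($\Theta_\theta(\log n)$ of them) to dispatch one at a time. The resolution is to pair Claim~\ref{claim:largeedges}, which caps $|S|$ at $O_\theta(\log n)$, with the elementary fact $|S|\rho=o(1)$, so that ``two coordinates of $S$ active in one sample'' occurs in only an $o(\rho)$-fraction of samples and is therefore invisible at the $(\rho N/2)$-th order statistic. Everything else is routine Bernstein/Chernoff bookkeeping plus checking that the parameter regime of Theorem~\ref{thm:nonneg:main} makes the handful of $o(d\sigma^2/\Lambda)$ slack terms genuinely negligible.
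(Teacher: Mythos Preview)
Your proposal is correct and follows essentially the same approach as the paper: decompose $\beta_{\tilT}$ into the contribution of $k^*$, the remaining ``large'' features $S=\KL$ from Claim~\ref{claim:largeedges}, and the ``small'' features; bound the small part by Bernstein using the variance estimate $\max_{k\notin S}\beta_{k,\tilT}\cdot\rho\sum_k\beta_{k,\tilT}\le d^2\sigma^4/(\Delta\Lambda^2\log n)$; and control the large part via the observation that $|S|^2\rho^2=o(\rho)$ so at most one large feature is active in all but an $o(\rho)$-fraction of samples. Your treatment of the union bound over the $m^t$ expanded sets and the sample-dependence of $\tilT$ is a bit more explicit than the paper's, but the argument is the same.
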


\begin{proof}
Let $\KL' = \KL\setminus\set{k^*}$ \footnote{$\KL$ is defined in proof of Claim~\ref{claim:largeedges}}, 
and $\beta_{small, \tilde{T}} = \sum_{k\not\in \KL} \beta_{k,\tilde{T}} x_k$, and $\beta_{large, \tilde{T}} = \sum_{k\in \KL'} \beta_{k,\tilde{T}} x_k$. 

First of all, the variance of $\beta_{small,\tilT}$ is bounded by $\rho\sum_{k\not\in \KL}\beta^2_{k,\tilT} \le \sfrac{d\sigma^4}{\Delta\Lambda^2\log n}\cdot \left(\rho\sum_{k\not\in \KL}\beta_{k,\tilT}\right) \le \sfrac{d^2\sigma^4}{\Delta\Lambda^2\log n}$. By Bernstein's inequality, for sufficiently large $\Delta$, with probability at most $1/n^2$ over the choice of $x$, the value $|\beta_{small, \tilde{T}} - \E[\beta_{small, \tilde{T}}]|$ is larger than $0.05d\sigma^2/\Lambda$, that is, $\beta_{small, \tilde{T}}$ nicely concentrates around its mean. Secondly, with probability at most $\rho$ we have $x_{k^*} = 1$ ,  and then $\beta_{k^*, \tilde{T}}x_{k^*}$ is elevated above its mean by roughly $\beta_{k^*, \tilde{T}}$.  Thirdly, the mean of $\beta_{large, \tilde{T}}$ is at most $\rho \sum_{k\in \KL'} \beta_{k,\tilde{T}}  \le \rho |K|d$, which is $o(\sigma d)$ by Claim~\ref{claim:largeedges}.  These three points altogether imply that with probability at least $\rho - n^{-2}$, $\beta_{\tilde{T}}$ is above its mean by $\beta_{k^*, \tilde{T}} - 0.1\sigma^2 d/\Lambda$. Also note that the empirical mean $\EExp[\beta_{\tilde{T}}]$ is sufficiently close to the $\beta_{\tilde{T}}$ with probability $1-\exp(-\Omega(n))$ over the choices of $N$ samples, when $N = poly(n)$. Therefore with probability $1-\exp(-\Omega(n))$ over the choices of $N$ samples, $\bias_{\tilde{T}} > \beta_{k^*, \tilde{T}} - 0.1\sigma^2 d/\Lambda$.  

It remains to prove the other side of the inequality, that is, $\bias_{\tilde{T}} \le \beta_{k^*, \tilde{T}} + 0.1\sigma^2 d/\Lambda$. 

Note that $|\KL|=O(\log n)$, thus with probability at least $1-2\rho^2|K|^2$, at most one of the $x_k,(k\in \KL)$ is equal to 1.   
Then with probability at least $1-2\rho^2|K|^2$ over the choices of $x$, $\beta_{large, \tilde{T}}+ \beta_{k^*,\tilde{T}}$ is elevated above its mean by at most $\beta_{k^*, \tilde{T}}$. Also with probability $1-n^{-2}$ over the choices of $x$, $\beta_{small, \tilde{T}}$ is above its mean by at most $0.1\sigma^2 d/\Lambda$. Therefore with probability at least $1-3\rho^2|K|^2$ over the choices of $x$, $\beta_{\tilde{T}}$ is above its mean by at most $\beta_{k^*, \tilde{T}} + 0.1\sigma d/\Lambda$. Hence when $3\rho^2|K|^2 \le \rho/3$, with probability at least $1 - \exp(-\Omega(n))$ over the choice of the $N$ samples, $\bias_{\tilde{T}} \le \beta_{k^*, \tilde{T}} + 0.1\sigma^2 d/\Lambda$. The condition is satisfied when $\rho \le c/\log^2 n$ for a small enough constant $c$. 
\end{proof}

Now we are ready to prove Lemma~\ref{lem:maxbias}.

\begin{proof}[of Lemma~\ref{lem:maxbias}]
By Claim~\ref{claim:biasmaxclose} and the existence of good expanded signature sets (Lemma~\ref{lem:expandsignature}), we know the maximum bias is at least $0.8\sigma d$. Apply Claim~\ref{claim:biasmaxclose} again, we know for the set $\tilde{T}^*$ that has largest bias, there must be a feature $j$ with $\beta_{j,\tilde{T}^*} \ge 0.7\sigma d$.

For the sake of contradiction, now we assume that the set $\tilde{T}^*$ with largest bias is not an expanded signature set. Then there must be some $k\ne j$ where $\beta_{k,\tilde{T}^*} \ge 0.3\sigma d$. Let $Q_j$ and $Q_{k}$ be the set of nodes in $\tilde{T}^*$ that are connected to $j$ and $k$ in $G_{\tau}$ (these are the same $Q$'s as in the proof of Claim~\ref{claim:largeedges}). We know $\card{Q_j\cap Q_{k}} \le \kappa$ by assumption, and $\card{Q_{k}} \ge 0.3\sigma d/\Lambda$. This means $\card{Q_j} \le d - 0.3\sigma d/\Lambda + \kappa$ by inclusion-exclusion.

Now let $T'$ be a signature set for $x_j$, and let $\tilde{T'}$ be its expanded set, from Lemma~\ref{lem:extend} we know $\beta_{j,\tilde{T'}}$ is almost equal to the sum of the $d$ largest entries in $A_j$, which is at least $0.2\sigma^2 d/\Lambda$ larger than $\beta_{j,\tilde{T}^*}$, since $|Q_j| \le d - 0.2\sigma d/\Lambda$. By Claim~\ref{claim:biasmaxclose} we know $\bias(\tilde{T'})\ge \beta_{j,\tilde{T'}}-0.1\sigma^2 d/\Lambda > \beta_{j,\tilde{T}^*}+0.1\sigma^2 d/\Lambda \ge \bias(\tilde{T})$, which contradict with the assumption that $\tilde{T}^*$ is the set with largest bias.
\end{proof}

Now we have found expanded signature sets, we can then apply Algorithm~\ref{alg:expand} (but with threshold $0.6\sigma d$ instead of $0.9\sigma d$) on that to get an estimation.


\begin{Lem}
\label{lem:expandexpandedset}
If $\tilde{T}$ is an expanded signature set for $x_j$, and $\tilde{A}_{\tilde{T}}$ is the corresponding column output by Algorithm~\ref{alg:expand}, then with high probability $\|\tilde{A}_{\tilde{T}} - A_j\|_\infty \le O(\rho(\Lambda^3\log n/\sigma^2)^2\sqrt{\Lambda\log n}) = o(\sigma)$.
\end{Lem}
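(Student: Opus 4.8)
The plan is to follow the template of Lemma~\ref{lem:extend}. Let $E_1$ be the event $x_j=1$ and let $E_2$ be the event $\beta_{\tilde{T}}\ge\Exp[\beta_{\tilde{T}}]+0.6\sigma d$ that defines the set $L$ in the Recovery step of Algorithm~\ref{alg:expand} (note $\Exp[\beta_{\tilde{T}}]=d$ since $|\tilde{T}|=d$ and each $\Exp[y_i]=1$, and when $N=\poly(n)$ the empirical mean $\EExp[\beta_{\tilde{T}}]$ agrees with $d$ up to $\exp(-\Omega(n))$ error, as in Claim~\ref{claim:biasmaxclose}). I would introduce the idealized estimator $\Exp[\tilde{A}_{\tilde{T}}]\triangleq(\Exp[y\mid E_2]-\mathbf{1})/(1-\rho)$. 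Since $\Exp[y_i\mid E_1]=A_j^{(i)}+\rho\sum_{k\ne j}A_k^{(i)}=1+(1-\rho)A_j^{(i)}$ by the normalization, we would get $\Exp[\tilde{A}_{\tilde{T}}]=A_j$ exactly once $\Exp[y\mid E_2]$ is replaced by $\Exp[y\mid E_1]$; so it suffices to (i) bound $\|\Exp[y\mid E_2]-\Exp[y\mid E_1]\|_\infty$ and (ii) add the sampling error $\|\tilde{A}_{\tilde{T}}-\Exp[\tilde{A}_{\tilde{T}}]\|_\infty$. Step (ii) is handled exactly as in Lemma~\ref{lem:extend}: $\Var(y_i\mid E_2)=O(\Lambda)$, so for $N=\poly(n)$ this error is $n^{-\Omega(1)}$ with probability $1-\exp(-\Omega(n^{c}))$, high enough to union bound over the $n^{O(t)}$ expanded sets. (The truncation at $0$ in the Estimation step is harmless since $A_j^{(i)}\ge 0$ and $\Exp[y_i\mid E_2]\ge 1-o(\sigma)$.)

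The core of the argument is a constant-gap analogue of Lemma~\ref{lem:symmetric_diff} for $\tilde{T}$. I would split $\beta_{\tilde{T}}=\beta_{j,\tilde{T}}x_j+\beta_{large,\tilde{T}}+\beta_{small,\tilde{T}}$ with $\beta_{large,\tilde{T}}=\sum_{k\in\KL\setminus\{j\}}\beta_{k,\tilde{T}}x_k$ and $\beta_{small,\tilde{T}}=\sum_{k\notin\KL}\beta_{k,\tilde{T}}x_k$, where $\KL=\{k:\beta_{k,\tilde{T}}\ge d\sigma^4/(\Delta\Lambda^2\log n)\}$ is the bounded set from Claim~\ref{claim:largeedges}, so $|\KL|=O(\Delta\Lambda^3\log n/\sigma^4)$. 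Exactly as in Claim~\ref{claim:biasmaxclose}, Bernstein's inequality shows that $\beta_{small,\tilde{T}}$ lies within $0.05\sigma^2 d/\Lambda$ of its mean except with probability $n^{-2C}$, and that mean is $d-o(\sigma d)$ because $\Exp[\beta_{\tilde{T}}]=d$ while $\rho(\beta_{j,\tilde{T}}+\sum_{k\in\KL}\beta_{k,\tilde{T}})\le\rho(\Lambda+|\KL|\sigma)d=o(\sigma d)$ by the hypothesis on $\rho$. Consequently: if $x_j=1$ then, using $\beta_{large,\tilde{T}}\ge 0$ (nonnegativity) and $\beta_{j,\tilde{T}}\ge 0.7\sigma d$, we get $\beta_{\tilde{T}}\ge 0.7\sigma d+d-o(\sigma d)>\Exp[\beta_{\tilde{T}}]+0.6\sigma d$, so $\Pr[E_2\mid E_1]\ge 1-n^{-2C}$; and if $x_j=0$ and at most one coordinate of $\KL\setminus\{j\}$ is $1$, then, using $\beta_{k,\tilde{T}}\le 0.3\sigma d$ from the definition of expanded signature set, $\beta_{\tilde{T}}\le d+0.3\sigma d+0.05\sigma^2 d/\Lambda<\Exp[\beta_{\tilde{T}}]+0.6\sigma d$. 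Hence $E_2\setminus E_1$ forces at least two coordinates of $\KL\setminus\{j\}$ to be simultaneously $1$ (or the $n^{-2C}$-probability Bernstein failure), so $\Pr[E_2\mid E_1^c]\le n^{-2C}+O(|\KL|^2\rho^2)$ and, using $|\KL|^2\rho=o(1)$, $\Pr[E_2]=\rho(1\pm o(1))$.

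Finally I would turn these probability estimates into the $\ell_\infty$ bound by the same elementary computation as at the end of the proof of Lemma~\ref{lem:extend}: write $\Exp[y_i\mathbf{1}_{E_2}]=\Exp[y_i\mathbf{1}_{E_1}]-\Exp[y_i\mathbf{1}_{E_1\setminus E_2}]+\Exp[y_i\mathbf{1}_{E_2\setminus E_1}]$ and use $0\le y_i\le 1/\rho$. The middle term is at most $\tfrac1\rho\Pr[E_1\setminus E_2]\le n^{-2C}$. For the last term the point is that the crude bound $y_i\le 1/\rho$ loses a factor $1/\rho$ and would only give an $O(1)$ error; instead, on the event that two fixed features $k_1,k_2$ are $1$ while $x_j=0$, the conditional expectation of $y_i$ is only $A_{k_1}^{(i)}+A_{k_2}^{(i)}+\rho\sum_{k}A_k^{(i)}=O(\Lambda)$, so summing over the $\binom{|\KL|}{2}$ pairs gives $\Exp[y_i\mathbf{1}_{E_2\setminus E_1}]=O(\Lambda|\KL|^2\rho^2)+n^{-2C}/\rho$. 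Dividing by $\Pr[E_2]=\Theta(\rho)$ and using $\Exp[y_i\mid E_1]=O(\Lambda)$ yields $|\Exp[y_i\mid E_2]-\Exp[y_i\mid E_1]|=O(\Lambda|\KL|^2\rho)+n^{-\Omega(C)}$, hence $\|\Exp[\tilde{A}_{\tilde{T}}]-A_j\|_\infty=O(\Lambda|\KL|^2\rho)=O(\rho\cdot\poly(\Lambda,1/\sigma,\log n))$, which is $o(\sigma)$ by $\rho=o(\sigma^5/\Lambda^{6.5}\log^{2.5}n)$; together with the negligible sampling error of step (ii) this proves the lemma. The main obstacle is exactly this last step — one must exploit that a false-positive sample $y$ is a superposition of only a handful of columns of $A$, so its coordinates stay polylogarithmic rather than as large as $1/\rho$; without this observation the recovered column would only be correct up to $\Omega(1)$ rather than $o(\sigma)$.
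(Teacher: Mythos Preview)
Your proposal is correct and follows the paper's outline closely: split $\beta_{\tilde{T}}$ using the set $\KL$ of Claim~\ref{claim:largeedges}, use the Bernstein bound on $\beta_{small,\tilde{T}}$ from Claim~\ref{claim:biasmaxclose}, and conclude that $E_2\setminus E_1$ essentially forces two features in $\KL$ to fire simultaneously, an event of probability $O(|\KL|^2\rho^2)$.

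The one genuine difference is in the final step converting this into an $\ell_\infty$ bound on $\tilde{A}_{\tilde{T}}-A_j$. You work in expectation: for each pair $k_1,k_2\in\KL$ you use $\Exp[y_i\mid x_{k_1}=x_{k_2}=1,x_j=0]=O(\Lambda)$, sum over the $\binom{|\KL|}{2}$ pairs, and divide by $\Pr[E_2]=\Theta(\rho)$ to get $O(\Lambda|\KL|^2\rho)$. The paper instead establishes a \emph{samplewise} high-probability bound: by Bernstein on each pixel (variance $\le\Lambda$), every coordinate of every sample satisfies $y_i=O(\sqrt{\Lambda\log n})$ with probability $1-n^{-C}$, and this single global event is independent of $\tilde{T}$; then it multiplies this uniform bound on $y_i$ by the empirical fraction $o(\sigma/\sqrt{\Lambda\log n})$ of samples in $E_1\oplus E_2$ to get $o(\sigma)$ directly via Lemma~\ref{lem:prob_ineq}. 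The paper's route is slightly slicker because the $y_i$-bound needs no union bound over the $n^{O(t)}$ expanded sets, whereas you rely on that union bound for the sampling error in your step~(ii); on the other hand your conditional-expectation computation is tighter per pair (bound $O(\Lambda)$ rather than $O(\sqrt{\Lambda\log n})$) and yields a polynomial in $\Lambda,\sigma,\log n$ that differs from the paper's stated exponent but is equally $o(\sigma)$ under the hypothesis on $\rho$.
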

\begin{proof}
Define $E_1$ to be the event that $x_j = 1$, and $E_2$ to be the event that $\beta_{\tilde{T}} \ge 0.6 d\sigma$.

When $E_1$ happens, event $E_2$ always happen unless $\beta_{\tilde{T},small}$ is far from its expectation. In the proof of Claim~\ref{claim:biasmaxclose} we've already shown the number of such samples is at most $n$ with very high probability.

Suppose $E_2$ happens, and $E_1$ does not happen. Then either $\beta_{\tilde{T},small}$ is far from its expectation, or at least two $x_j$'s with large coefficients $\beta{j,\tilde{T}}$'s are on. Recall by Claim~\ref{claim:largeedges} the number of $x_j$'s with large coefficients is $\card{K} \le O(\Lambda^3\log n/\sigma^2)$, so the probability that at least two large coefficient is ``on'' (with $x_j = 1$) is bounded by $O(\rho^2 \cdot \card{K}^2) = \rho \cdot O(\rho \Lambda^6\log ^2 n/\sigma^4) = \rho \cdot o(\sigma/\sqrt{\Lambda \log n}).$ With very high probability the number of such samples is bounded by $\rho N \cdot o(\sigma/\sqrt{\Lambda \log n})$.

Combining the two parts, we know the number of samples that is in $E_1\oplus E_2$ (the symmetric difference between $E_1$ and $E_2$) is bounded by $\rho N \cdot o(\sigma/\sqrt{\Lambda \log n})$. Also, with high probability $(1-n^{-C})$ all the samples have entries bounded by $O(\sqrt{\Lambda \log n})$ by Bernstein's inequality (variance of $y_i$ is bounded by $\sum_j \rho (\mat{A}{i}{j})^2 \le \max_j \mat{A}{i}{j} \sum_j \rho \mat{A}{i}{j}\le \Lambda$). Notice that this is a statement of the entire sample independent of the set $T$, so we do not need to apply union bound over all expanded signature sets.

Therefore by Lemma~\ref{lem:prob_ineq} $$\|\tilde{A}_{\tilde{T}} - A_j\|_\infty \le o(\sigma/\sqrt{\Lambda \log n})\cdot O(\sqrt{\Lambda \log n}) = o(\sigma).$$
\end{proof}

The previous lemma looks very similar to the lemma for signature sets, however, the benefit is we know how to find a set that is guaranteed to be expanded signature set! So we can iteratively find all expanded signature sets.

After identifying $\tilde{T}_1$, $\tilde{T}_2$, ..., $\tilde{T}_k$ (reorder the columns of $A$ to make them correspond to the first $k$ columns), we can estimate the corresponding columns $\tilde{A}_{\tilde{T}_1},\dots\tilde{A}_{\tilde{T}_k}$. Since these are close to the true columns $A_1, A_2, ..., A_k$ (wlog. we reorder columns so $\tilde{A}_{\tilde{T}_j}$ correspond to $A_j$ for $1\le j \le k$), we can in fact compute $\hat{\beta}_{j,\tilde{T}} = \sum_{i\in \tilde{T}} \tilde{A}_{\tilde{T}_j}(i)$. By Lemma~\ref{lem:expandexpandedset} we know $|\hat{\beta}_{j,\tilde{T}} - \beta_{j,\tilde{T}}| = o(\sigma d)$.

\begin{Lem}
Having found $\tilde{T}_i$ (and hence also $\tilde{A}_{\tilde{T}_i}$) for $i\le k$, let $\tilde{T}$ be the set with largest empirical bias among the expanded sets that have $\hat{\beta}_{j,\tilde{T}} < 0.2\sigma d$ for all $j \le k$. Then $\tilde{T}$ is an expanded signature set for new $x_j$ where $j > k$.  
\end{Lem}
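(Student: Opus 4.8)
The plan is to reuse the argument of Lemma~\ref{lem:maxbias} essentially verbatim, but keeping track of the extra filtering condition ``$\hat\beta_{j,\tilde T}<0.2\sigma d$ for all $j\le k$'' that shrinks the pool of candidate sets. Two points are new relative to Lemma~\ref{lem:maxbias}: first, one must check that the filtered pool still contains a set of large empirical bias (so that the maximizer $\tilde T$ over the pool already has $\bias_{\tilde T}\ge 0.8\sigma d$); second, when running the contradiction step one must check that the competitor set used there is also in the filtered pool. Throughout we assume $k<m$, so that some feature is not yet recovered, and we union-bound the high-probability statements of Claim~\ref{claim:biasmaxclose}, Lemma~\ref{lem:extend} and Lemma~\ref{lem:expandexpandedset} over all $\binom{n}{t}$ expanded sets, which is harmless since each fails with probability $\exp(-\Omega(n))$. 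Recall from the text preceding the statement that $|\hat\beta_{j,\tilde T}-\beta_{j,\tilde T}|=o(\sigma d)$ for each already-recovered $j\le k$, by Lemma~\ref{lem:expandexpandedset}.

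First I would show the filtered pool contains a high-bias set. Pick any feature $j'>k$ (indices are reordered so the first $k$ columns are the recovered ones). By Lemma~\ref{lem:exists_sign} there is a signature set $T'$ for $x_{j'}$, and by Lemma~\ref{lem:expandsignature} its expansion $\tilde{T'}$ satisfies $\beta_{j',\tilde{T'}}\ge 0.9\sigma d$ while for every $i\ne j'$ one has $\beta_{i,\tilde{T'}}\le \Lambda\kappa+d\tau=o(\sigma d)$ (this is precisely the bound from the proof of Lemma~\ref{lem:expandsignature}). Since $j'\ne i$ for all $i\le k$ and $|\hat\beta_{i,\tilde{T'}}-\beta_{i,\tilde{T'}}|=o(\sigma d)$, we get $\hat\beta_{i,\tilde{T'}}<0.2\sigma d$ for all $i\le k$, so $\tilde{T'}$ is in the pool; and by Claim~\ref{claim:biasmaxclose}, $\bias_{\tilde{T'}}\ge \beta_{j',\tilde{T'}}-0.1\sigma^2 d/\Lambda\ge 0.8\sigma d$. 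Hence the pool-maximizer $\tilde T$ has $\bias_{\tilde T}\ge 0.8\sigma d$, and applying Claim~\ref{claim:biasmaxclose} the other way produces a feature $k^*=\arg\max_k\beta_{k,\tilde T}$ with $\beta_{k^*,\tilde T}\ge 0.7\sigma d$. Since $\tilde T$ lies in the pool, $\hat\beta_{i,\tilde T}<0.2\sigma d$, hence $\beta_{i,\tilde T}<0.2\sigma d+o(\sigma d)<0.3\sigma d$ for every $i\le k$; as $\beta_{k^*,\tilde T}\ge 0.7\sigma d$, this forces $k^*>k$, i.e.\ $x_{k^*}$ is a new feature.

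It remains to show $\tilde T$ is an expanded signature set for $x_{k^*}$, which I would do exactly as in the proof of Lemma~\ref{lem:maxbias}. Suppose not: since $\beta_{k^*,\tilde T}\ge 0.7\sigma d$, there is some $k\ne k^*$ with $\beta_{k,\tilde T}\ge 0.3\sigma d$. Writing $Q_{k^*},Q_k$ for the sets of pixels in $\tilde T$ joined to $k^*,k$ by edges of weight $\ge\tau$, Assumption~2' gives $|Q_{k^*}\cap Q_k|\le\kappa$ and $|Q_k|\ge 0.3\sigma d/\Lambda$, so $|Q_{k^*}|\le d-0.2\sigma d/\Lambda$, which (as in Lemma~\ref{lem:maxbias}) makes $\beta_{k^*,\tilde T}$ at least $0.2\sigma^2 d/\Lambda$ smaller than the sum of the $d$ largest entries of $A_{k^*}$. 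Now take a signature set $T''$ for $x_{k^*}$ (Lemma~\ref{lem:exists_sign}) and its expansion $\tilde{T''}$: by Lemma~\ref{lem:extend}, $\beta_{k^*,\tilde{T''}}$ is within $d/n$ of that same sum of $d$ largest entries, so $\beta_{k^*,\tilde{T''}}>\beta_{k^*,\tilde T}+0.1\sigma^2 d/\Lambda$, and since $\beta_{k^*,\tilde T}\ge 0.7\sigma d>0.5\sigma d$ we may apply Claim~\ref{claim:biasmaxclose} on both sides to get $\bias_{\tilde{T''}}\ge \beta_{k^*,\tilde{T''}}-0.1\sigma^2 d/\Lambda>\beta_{k^*,\tilde T}+0.1\sigma^2 d/\Lambda\ge \bias_{\tilde T}$. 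Finally, just as for $\tilde{T'}$ above, $\tilde{T''}$ is the expansion of a signature set for $x_{k^*}$ with $k^*>k$, so $\beta_{i,\tilde{T''}}=o(\sigma d)$ and hence $\hat\beta_{i,\tilde{T''}}<0.2\sigma d$ for all $i\le k$, i.e.\ $\tilde{T''}$ is in the pool; this contradicts the maximality of $\bias_{\tilde T}$. Therefore $\tilde T$ is an expanded signature set for the new feature $x_{k^*}$.

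The main obstacle is exactly the single new ingredient over Lemma~\ref{lem:maxbias}: one must verify that the better competitor $\tilde{T''}$ (and the witness $\tilde{T'}$ certifying that the maximum bias in the pool is large) survive the filtering step. This is where we use that expansions of genuine signature sets have off-feature coefficients $\Lambda\kappa+d\tau=o(\sigma d)$, comfortably below the $0.2\sigma d$ threshold, together with the $o(\sigma d)$ accuracy of $\hat\beta$ from Lemma~\ref{lem:expandexpandedset}; everything else is a transcription of the earlier proof. Two secondary points to get right are the implicit hypothesis $k<m$ (otherwise there is no new feature to find), and the fact that the filter uses the \emph{estimated} quantities $\hat\beta$ rather than the true $\beta$, which is why the slack between the $0.3\sigma d$ and $0.2\sigma d$ thresholds in the relevant definitions is needed.
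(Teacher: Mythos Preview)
Your proposal is correct and follows essentially the same approach as the paper: the paper's proof simply notes that (i) expansions of signature sets for unfound features pass the filter and hence compete, and (ii) the filter forces the dominant coefficient of the maximizer to be a new feature, after which one re-runs the argument of Lemma~\ref{lem:maxbias}. You have spelled out both of these points in detail, and in particular you make explicit something the paper leaves implicit, namely that the competitor $\tilde{T''}$ constructed in the contradiction step also survives the filter (since it too is the expansion of a signature set for $x_{k^*}$ with $k^*>k$); this is exactly the right check to add.
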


\begin{proof}
The proof is almost identical to Lemma~\ref{lem:maxbias}. 

First, if $T$ is a signature set of $x_j$ where $j>k$, then by Lemma~\ref{lem:expandsignature} $\tilde{T}$ must satisfy $\hat{\beta}_{j,\tilde{T}} < 0.2\sigma d$, so it will compete for the set with largest empirical bias.

Also, since $\hat{\beta}_{j,\tilde{T}} < 0.2\sigma d$, we know the coefficients in $\beta_{j,\tilde{T}}$ must have $j > k$. Leveraging this observation in the proof of Lemma~\ref{lem:maxbias} gives the result.
\end{proof}

\subsection{Getting an Equivalent Dictionary}
\label{subsec:refine}
After finding expanded signature sets, we already have an estimation $\tilde{A}_{\tilde{T}_j}$ of $A_j$ that is entry-wise $o(\sigma)$ close. However, this alone does not imply that the two dictionaries are $\epsilon$-equivalent for very small $\epsilon$.

In the final step, we look at {\em all} the large entries in the column $A_j$, and use them to identify whether feature $x_j$ is 1 or 0. The ability to do this justifies the individually recoverable property of the dictionary.

\begin{Lem}\label{lem:refinement}
Let $S_j$ be the set of all entries larger than $\sigma/2$ in $\tilde{A}_{\tilde{T}_j}$, then $\card{S_j} \ge d$, $\beta_{j,S_j} \ge (0.5-o(1))\card{S_j}\sigma$, and for all $k\ne j$ $\beta_{k,S_j} \le \sigma^2\card{S_j}/\Delta \log n$ where $\Delta$ is a large enough constant.
\end{Lem}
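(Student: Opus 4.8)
The plan is to transfer the signature-set structure of the true column $A_j$ onto $S_j$ using only the entrywise guarantee of Lemma~\ref{lem:expandexpandedset}. Write $\eta = \|\tilde{A}_{\tilde{T}_j}-A_j\|_\infty = o(\sigma)$; this is the only error term we carry, and since $\eta\ll\sigma$ for large $n$ all the constant-factor slacks below are absorbed. Part (1): by Assumption 1 the column $A_j$ has at least $d$ coordinates $i$ with $A_j^{(i)}\ge\sigma$, and for each of them $\tilde{A}_{\tilde{T}_j}(i)\ge\sigma-\eta>\sigma/2$, so all of these coordinates lie in $S_j$ and hence $|S_j|\ge d$. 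Part (2): every $i\in S_j$ has $\tilde{A}_{\tilde{T}_j}(i)>\sigma/2$, hence $A_j^{(i)}>\sigma/2-\eta=(1/2-o(1))\sigma$, and summing over $i\in S_j$ gives $\beta_{j,S_j}=\sum_{i\in S_j}A_j^{(i)}\ge(1/2-o(1))|S_j|\sigma$.

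The real work is part (3). The structural point is that $S_j$ sits inside the $G_\tau$-neighborhood of feature $j$: for $i\in S_j$ we just saw $A_j^{(i)}>(1/2-o(1))\sigma$, which is far above $\tau=O_\theta(1/\log n)$, so $(i,j)\in G_\tau$. Now fix $k\ne j$ and split $\beta_{k,S_j}=\sum_{i\in S_j}A_k^{(i)}$ according to whether $A_k^{(i)}<\tau$ or $A_k^{(i)}\ge\tau$ (no cancellation issues since $A$ is nonnegative here). The small part contributes at most $|S_j|\tau$. A coordinate $i$ with $A_k^{(i)}\ge\tau$ belongs to the intersection of the $G_\tau$-neighborhoods of $j$ and $k$, which has size at most $\kappa$ by Assumption 2', and each such entry is at most $\Lambda$; so the large part contributes at most $\kappa\Lambda$. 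Thus $\beta_{k,S_j}\le|S_j|\tau+\kappa\Lambda$, and it remains to check this is at most $\sigma^2|S_j|/(\Delta\log n)$. Since $\tau=O(\sigma^4/(\Delta\Lambda^2\log n))$ the first term is at most $\sigma^2|S_j|/(2\Delta\log n)$, and since $|S_j|\ge d\ge\Delta\Lambda\log^2 n/\sigma^2$ while $\kappa=O(\sigma^8 d/(\Delta^2\Lambda^6\log^2 n))$ we get $\kappa\Lambda\le\sigma^2 d/(2\Delta\log n)\le\sigma^2|S_j|/(2\Delta\log n)$; adding the two bounds gives the claim after renaming the implicit constant $\Delta$.

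I do not foresee a genuine obstacle; the only care needed is the parameter bookkeeping — confirming that $\eta$ from Lemma~\ref{lem:expandexpandedset} really is $o(\sigma)$, so that a coordinate above $\sigma/2$ in $\tilde{A}_{\tilde{T}_j}$ is a genuinely large (hence $G_\tau$-present) entry of $A_j$, and that the chosen $\tau,\kappa,d$ satisfy the two numerical inequalities above. It is also worth noting that the argument uses Assumption 2' rather than the weaker Assumption 2, consistent with the convention of this section; under Assumption 2 the displayed bound on $\beta_{k,S_j}$ would hold only for all but $o(1/\sqrt\rho)$ of the features $k$, and the refinement step would need a corresponding adjustment.
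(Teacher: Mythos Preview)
Your proof is correct and follows essentially the same argument as the paper: use the entrywise closeness $\|\tilde A_{\tilde T_j}-A_j\|_\infty=o(\sigma)$ to push the $\ge d$ large entries of $A_j$ into $S_j$ and to show every $i\in S_j$ has $A_j^{(i)}\ge(0.5-o(1))\sigma$ (hence lies in $G_\tau$), then bound $\beta_{k,S_j}$ by $|S_j|\tau+\kappa\Lambda$ via the pairwise-intersection assumption. Your version is in fact a bit more explicit than the paper's in spelling out the parameter checks and in correctly flagging that the argument as written uses Assumption~2$'$ (the paper's proof says ``Assumption~2'' but applies the $\kappa$-intersection bound of 2$'$, deferring the weaker case to Section~\ref{subsec:assumption2}).
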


\begin{proof}
This follows directly from the assumptions. By Assumption 1, there are at least $d$ entries in $A_j$ that are larger than $\sigma$, all these entries will be at least $(1-o(1))\sigma$ in $\tilde{A}_{\tilde{T}_j}$, so $\card{S_j} \ge d$.

Also, since for all $i\in S_j$, $\tilde{A}_{\tilde{T}_j}(i) \ge 0.5\sigma$, we know $A_j(i) \ge 0.5\sigma - o(\sigma)$, hence $\beta_{j,S_j}\ge (0.5-o(1))\card{S_j}\sigma$.

By Assumption 2, for any $k\ne j$, the number of edges in $G_{\tau}$ between $k$ and $S_j$ is bounded by $\kappa$, so $\beta_{k,S_j} \le \tau\card{S_j}+\kappa \Lambda \le \sigma^2\card{S_j}/\Delta \log n$.
\end{proof}

Since $S_j$ has a unique large coefficient $\beta_{j,S_j}$, and the rest of the coefficients are much smaller, when $\Delta$ is large enough, and $N \ge n^{4C+\delta}/\rho^3$ we know $\hat{A}_j$ is entry-wise $n^{-2C}/\log n$ close to $A_j$ (this is using the same argument as in Lemma~\ref{lem:extend}). We shall show this is enough to proof $n^{-C}$-equivalence between $\hat{A}$ and $A$.


\begin{Lem}
\label{lem:finalset}
Let $A, \hat{A}$ be dictionaries with rows having $\ell_1$-norm $O(1/\rho)$  
and all entries in $A -
\hat{A}$ have magnitude at most $\delta$. Then $\hat{A}$ and $A$ are $O(\sqrt{\delta \log n})$-equivalent.
\end{Lem}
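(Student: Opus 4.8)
By the definition of $\epsilon$-equivalence it suffices to show that for a random $x$ with independent $\rho$-Bernoulli coordinates, with high probability $|(Ax)_i - (\hat A x)_i| \le O(\sqrt{\delta\log n})$ simultaneously for every pixel $i\in[n]$. Fix $i$ and let $e\in\R^m$ be the $i$-th row of $A-\hat A$, so that $(Ax)_i-(\hat A x)_i = \langle e, x\rangle = \sum_{j=1}^m e_j x_j$. The hypotheses give exactly the two facts we will use: $|e_j|\le\delta$ for every $j$, and $\|e\|_1 \le \|A^{(i)}\|_1 + \|\hat A^{(i)}\|_1 = O(1/\rho)$.

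The plan is a one-line second-moment/Bernstein estimate once the mean is isolated. Write $\langle e, x\rangle = \mu + Z$ with $\mu = \rho\sum_j e_j$ and $Z = \sum_j e_j(x_j-\rho)$. The variable $Z$ is a sum of independent terms, each of magnitude at most $\delta$, and
\[
\Var[Z] \;=\; \rho(1-\rho)\,\|e\|_2^2 \;\le\; \rho\,\|e\|_\infty\,\|e\|_1 \;=\; O(\delta),
\]
the crucial observation being that $e$ is simultaneously flat ($\ell_\infty$-small) and $\ell_1$-small, so its $\ell_2$ energy is only $O(\delta/\rho)$ and $Z$ has standard deviation merely $O(\sqrt\delta)$. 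Bernstein's inequality (Theorem~\ref{thm:bernstein_ineq}) then gives $\Pr[\,|Z| > K\sqrt{\delta\log n}\,] \le 2\exp(-\Omega(K^2\log n))$ for a constant $K$ of our choosing (for $\delta$ below a constant the linear term $\delta\cdot K\sqrt{\delta\log n}$ in the Bernstein denominator is dominated by the variance term $O(\delta)$). Taking $K$ a large enough constant makes the failure probability at most $n^{-3}$, so a union bound over the $n$ pixels yields $\max_i |Z| \le O(\sqrt{\delta\log n})$ with high probability.

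It remains to dispose of the mean $\mu = \rho\sum_j e_j$. Using the entrywise bound, $|\mu| \le \rho\|e\|_1 \le \rho m\delta$, which is $o(\sqrt{\delta\log n})$ whenever $\rho m\sqrt\delta = o(\sqrt{\log n})$; in every application of the lemma $\delta \le n^{-\Omega(1)}$ (indeed $\delta = n^{-2C}/\log n$ in Theorem~\ref{thm:nonneg:main}) while $\rho m = \poly(n)$ (since $md\rho = \Theta(n)$ up to logarithmic and $\theta$-factors), so this holds with a large polynomial margin once $C$ is large. Alternatively, if both $A$ and $\hat A$ are taken in the normalized convention so that every pixel has the same expectation, then $\sum_j A^{(i)}_j = \sum_j \hat A^{(i)}_j$ and $\mu = 0$ exactly. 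Either way $|\langle e, x\rangle| \le |\mu| + |Z| = O(\sqrt{\delta\log n})$ for all $i$ with high probability, which is the claim.

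\textbf{Where the work is.} Everything reduces to the inequality $\|e\|_2^2 \le \|e\|_\infty\|e\|_1 = O(\delta/\rho)$; after that, Bernstein plus a union bound over pixels is routine. The only point that needs care is that a verbatim reading of the two hypotheses permits $\mu = \rho\sum_j e_j$ to be as large as $\Theta(1)$, so one must invoke either the normalization of the dictionaries or the smallness of $\delta$ relative to $m$ in order to discard it; without one of these the stated $O(\sqrt{\delta\log n})$ bound cannot hold.
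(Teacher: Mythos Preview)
Your proposal is correct and takes essentially the same approach as the paper: fix a row $w$ of $A-\hat A$, bound the variance of $\langle w,x\rangle$ via the key inequality $\rho\sum_j w_j^2 \le \delta\rho\sum_j|w_j| = O(\delta)$, then apply Bernstein and union bound over pixels. If anything you are more careful than the paper, which applies Bernstein to $Z-\E Z$ but never explicitly addresses why the mean $\E Z=\rho\sum_j w_j$ is negligible; you correctly observe that this requires either the shared normalization of $A$ and $\hat A$ or the polynomial smallness of $\delta$ used in the application.
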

The proof is an easy application of Bernstein's inequality (see Appendix~\ref{sec:lem:finalset}).

\paragraph{Remark:} Notice that when $C \ge 1$ it is clear why $\hat{A}_j$ should have $\ell_1$ norm $1/\rho$ (because it is very close to $A_j$); when $C$ is smaller we need to truncate the entries of $\hat{A}_j$ that are smaller than $n^{-2C}/\log n$.

We now formally write down the steps in the algorithm.

\begin{algorithm}
\caption{Nonnegative Dictionary Learning} \label{alg:main}
\begin{algorithmic}[1]
\REQUIRE $N$ samples $\set{y^1,\dots,y^N}$ generated by $y^i = Ax^i$. Unknown dictionary $A$ satisfies Assumptions 1 and 2.
\ENSURE  $\hat{A}$ that is $n^{-C}$ close to $A$
\STATE Enumerate all sets of size $t = O(\Lambda \log^2 n/\sigma^4)$, keep the sets that are correlated.
\STATE Expand all correlated sets $T$, $\tilde{T} = Expand(T, 0.9\sigma t)$.
\FOR{$j = 1$ TO $m$}
\STATE Let $\tilde{T}_j$ be the set with largest empirical bias, and for all $k < j$, $\hat{\beta}_{k,\tilde{T}} = \sum_{i\in T} \tilde{A}_{\tilde{T}_k}(i) \le 2d\sigma$.
\STATE Let $\tilde{A}_{\tilde{T}_k}$ be the result of estimation step in $Expand(\tilde{T}, 0.6\sigma d)$.
\ENDFOR
\FOR{$j = 1$ TO $m$}
\STATE Let $S_j$ be the set of entries that are larger than $\sigma/2$ in $\tilde{A}_{\tilde{T}_j}$
\STATE Let $\hat{A}_i$ be the result of estimation step in $Expand(S_j, 0.4\sigma \card{S_j})$
\ENDFOR
\end{algorithmic}
\end{algorithm}

\subsection{Working with Assumption 2}
\label{subsec:assumption2}
In order to assume Assumption 2 instead of 2', we need to change the definition of signature sets to allow $o(1/\sqrt{\rho})$ ``moderately large'' ($\sigma t/10$) entries. This makes the definition look similar to expanded signature sets. Such signature sets still exist by similar probabilistic argument as in Lemma~\ref{lem:exists_sign}. Lemma~\ref{lem:expandsignature} and Claims~\ref{claim:largeedges} and \ref{claim:biasmaxclose} can also be adapted.

Finally, for Lemma~\ref{lem:finalset}, the guarantee will be weaker (there can be $o(1/\sqrt{\rho})$ moderately large coefficients). The algorithm will only estimate $x_j$ incorrectly if at least $6$ such coefficients are ``on'' (has the corresponding $x_j$ being 1), which happens with less than $o(\rho^3)$ probability. By argument similar to Lemma~\ref{lem:extend} and Lemma~\ref{lem:finalset} we get the first part of Theorem~\ref{thm:nonneg:main}.
\section{General Case}\label{sec:general}

With minor modifications, our algorithm and its analysis can be adapted to the general case in which the matrix $A$ can have both positive and negative entries.

We follow the outline from the non-negative case, and look at sets $T$ of size $t$.  The quantities $\beta_T$ and $\beta_{j,T}$ are defined exactly the same as in Section~\ref{subsec:signaturesets}.  Additionally, let $\nu_T$ be the standard deviation of $\beta_T$, and let $\nu_{-j,T}$ be the standard deviation of $\beta_T - \beta_{j,T} x_j$. That is,
\[\nu_{-j,T}^2 = \Var[\beta_T - \beta_{j,T} x_j] = \rho \sum_{k\neq j}^{} \beta^2_{k,T}.\]

The definition of signature sets requires an additional condition to take into account the standard deviations. 

\begin{Def}[(General) Signature Set]\label{def:sign-general}
A set $T$ of size $t$ is  a {\em signature set} for $x_j$, if for some large constant $\Delta$, we have: (a) $|\beta_{j,T}| \ge \sigma t$, (b) for all $k\ne j$, the contribution $|\beta_{k,T}| \le \sigma^2 t/(\Delta\log n)$, and additionally, (c) $\nu_{-j,T} \le \sigma t/\sqrt{\Delta \log n}$.
\end{Def}

In the nonnegative case the additional condition $\nu_{-j,T} \le \sigma t/\sqrt{\Delta \log n}$ was automatically implied by nonnegativity and scaling.  Now we use Assumption G3 to show there exist $T$ in which (c) is true along with the other properties.  To do that, we prove a simple lemma which lets us bound the variance (the same lemma is also used in other places).

\begin{Lem}\label{lem:variance-bound-general}
Let $T$ be a set of size $t$ and $S$ be an arbitrary subset of features, and consider the sum $\beta_{S,T} = \sum_{j\in S} \beta_{j,T}x_j$. Suppose for each $j\in S$, the number of edges from $j$ to $T$ in graph $G_{\tau}$ is bounded by $W$. Then the variance of $\beta_{S,T}$ is bounded by $2tW + 2t^2\gamma$.
\end{Lem}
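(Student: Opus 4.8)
The plan is to decompose $\beta_{S,T}$ according to which edges are ``large'' (in $G_\tau$) and which are ``small'' (below $\tau$), bound the variance contribution of each part separately, and then combine via the elementary inequality $\Var[U+V] \le 2\Var[U] + 2\Var[V]$. Write $\beta_{j,T} = \beta_{j,T}^{>\tau} + \beta_{j,T}^{\le \tau}$, where $\beta_{j,T}^{>\tau} = \sum_{i\in T: |\mat{A}{i}{j}| \ge \tau} \mat{A}{i}{j}$ collects the large entries and $\beta_{j,T}^{\le\tau}$ the rest. Then $\beta_{S,T} = \sum_{j\in S}\beta_{j,T}^{>\tau}x_j + \sum_{j\in S}\beta_{j,T}^{\le\tau}x_j =: U + V$, and by pairwise independence of the $x_j$ and $\Var[x_j]=\rho(1-\rho)\le\rho$, we have $\Var[U] = \rho\sum_{j\in S}(\beta_{j,T}^{>\tau})^2$ and likewise for $V$.

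First I would bound $\Var[V]$. Since every edge contributing to $\beta_{j,T}^{\le\tau}$ has weight at most $\tau$ and there are at most $t$ of them, $|\beta_{j,T}^{\le\tau}| \le t\tau$, hence actually I want a bound scaling like $t^2\gamma$: here is where Assumption G3 enters. For a fixed pixel $i\in T$, the contribution $\sum_{j\in S} (\mat{A}{i}{j})^2 \cdot \One[|\mat{A}{i}{j}|\le\tau] \le \norm{A^{(i)}_{\le\tau}}_2^2 \le \gamma/\rho$ by G3. Summing the crude bound $\rho\sum_{j\in S}(\beta_{j,T}^{\le\tau})^2 \le \rho\, t \sum_{j\in S}\sum_{i\in T}(\mat{A}{i}{j})^2\One[|\mat{A}{i}{j}|\le\tau]$ (Cauchy--Schwarz on the inner sum over the $\le t$ pixels in $T$) and then swapping the order of summation gives $\rho\, t \sum_{i\in T}(\gamma/\rho) = t^2\gamma$. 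So $\Var[V] \le t^2\gamma$.

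Next I would bound $\Var[U] = \rho\sum_{j\in S}(\beta_{j,T}^{>\tau})^2$. By hypothesis, for each $j\in S$ at most $W$ pixels of $T$ connect to $j$ with a large edge, and each such entry has magnitude at most $\Lambda$; but to get the clean bound $2tW$ I would instead argue: $(\beta_{j,T}^{>\tau})^2 \le W \sum_{i\in T}(\mat{A}{i}{j})^2\One[|\mat{A}{i}{j}|\ge\tau]$ by Cauchy--Schwarz (only $W$ terms are nonzero), and each such $(\mat{A}{i}{j})^2 \le \Lambda^2$; more carefully, using $\rho\,(\mat{A}{i}{j})^2 \le \Lambda \cdot \rho|\mat{A}{i}{j}|$ and that the column sums / scaling control $\rho\sum_i|\mat{A}{i}{j}|$ is not immediately what I need here — so the honest route is $\Var[U]\le \rho \sum_{j\in S} W \sum_{i\in T}(\mat{A}{i}{j})^2 \One[\text{large}]$; swapping sums, $\sum_{j\in S}\One[(i,j)\text{ large}] $ need not be small, so instead I bound $(\mat{A}{i}{j})^2 \le \Lambda^2$ and the number of large edges into $j$ is $\le W$, giving $\rho\sum_{j\in S}W\cdot W\Lambda^2$, which is too weak. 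The clean statement must therefore be using $\rho\le 1/t$ type normalization or absorbing constants; I expect the intended bound uses $(\beta_{j,T}^{>\tau})^2 \le \Lambda^2 W^2$ together with $|S|\rho$ controlled, OR simply $\Var[U] \le \rho t \cdot \Lambda^2 W$ after noting $\sum_{j}(\text{large edges into }j \text{ from }T) \le$ something. The main obstacle, and the step I would think hardest about, is exactly pinning down which normalization makes $\Var[U] \le 2tW$ (as opposed to a $\Lambda$-dependent or $W^2$-dependent bound); I suspect the factor $2$ and the precise form come from absorbing $\Lambda = O(1)$ into constants and from $\rho \sum_{i\in T}\sum_{j: (i,j)\text{ large}} 1 \le$ (total large edges incident to $T$), which the low-intersection Assumption~G2' caps. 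Once both pieces are in hand, $\Var[\beta_{S,T}] \le 2\Var[U] + 2\Var[V] \le 2tW + 2t^2\gamma$ follows immediately.
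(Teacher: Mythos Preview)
Your decomposition into large and small edges, the use of $(a+b)^2 \le 2a^2 + 2b^2$ (equivalently $\Var[U+V]\le 2\Var[U]+2\Var[V]$), and your treatment of the small-edge part via Cauchy--Schwarz followed by Assumption~G3 are exactly the paper's argument; that half is correct and complete.

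The genuine gap is in bounding $\Var[U]$. You correctly reach
\[
\Var[U] \;\le\; \rho \sum_{j\in S} W \sum_{i\in T}\bigl(\mat{A}{i}{j}\bigr)^2 \One\!\bigl[(i,j)\in G_\tau\bigr]
\;=\; W \sum_{i\in T} \rho \sum_{j\in S} \bigl(\mat{A}{i}{j}\bigr)^2 \One\!\bigl[(i,j)\in G_\tau\bigr],
\]
but then you search for the wrong ingredient: neither $(\mat{A}{i}{j})^2\le\Lambda^2$, nor a bound on $\rho|S|$, nor Assumption~G2' is what closes this. The missing fact is the \emph{pixel variance normalization} stated at the start of Section~\ref{sec:general}: in the general case the paper normalizes so that $\Var[y_i]=1$, which gives $\rho\sum_{j\in[m]}(\mat{A}{i}{j})^2 \le 1$ for every pixel $i$. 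Dropping the indicator and the restriction to $S$ only enlarges the inner sum, so the right-hand side above is at most $W\sum_{i\in T}1 = tW$, and hence $2\Var[U]\le 2tW$ as required. Without invoking this normalization your argument cannot produce the clean $tW$ term; the detours through $\Lambda^2 W^2$ or through G2' do not work and are not needed.
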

\begin{proof}
The idea is to split the weights $\mat{A}{i}{j}$ into the {\em big} and {\em small} ones (threshold being $\tau$). Intuitively, on one hand, the contribution to the variance from large weights is bounded above because the number of such large edges in bounded by $W$. On the other hand, by assumption (3), the total variance of small weights is less than $\gamma$, which implies that the contribution of small weight to the variance is also bounded. Formally, we have 
\begin{eqnarray*}
\Var[\beta_{S,T}] = \rho \sum_{j\in S} \beta_{j,T}^2&=&  \rho \sum_{j\in S} \left(\sum_{i\in T} \mat{A}{i}{j}\right)^2 \\
&=&  \rho \sum_{j\in S} \left(\sum_{i: i\in T, (i,j)\in G_{\tau }} \mat{A}{i}{j} + \sum_{i: i\in T, (i,j) \not \in G_{\tau}} \mat{A}{i}{j}\right)^2 \\
&\le &  2\rho \sum_{j\in S}\left[ \left(\sum_{i: i\in T, (i,j)\in G_{\tau}} \mat{A}{i}{j}\right)^2+\left(\sum_{i: i\in T, (i,j)\not\in G_{\tau}} \mat{A}{i}{j}\right)^2 \right]\\
&\le&  2\rho \sum_{j\in S} \left[W\left(\sum_{i: i\in T, (i,j)\in G_{\tau}} \left(\mat{A}{i}{j}\right)^2\right)
+ t\left(\sum_{i: i\in T, (i,j)\not\in G_{\tau}} \left(\mat{A}{i}{j}\right)^2\right)\right]\\
&=& 2\rho W \sum_{i\in T}\sum_{j\in S}\left(\mat{A}{i}{j}\right)^2 + 2\rho t\sum_{i\in T}\sum_{j: (i,j)\not\in G_{\tau }}\left(\mat{A}{i}{j}\right)^2\\
&\le& 2 t W + 2t^2 \gamma.
\end{eqnarray*}
In the fourth line we used Cauchy-Schwarz inequality and in the last step, we used Assumption G3 about the total variance due to small terms being small, as well as the normalization of the variance in each pixel.
\end{proof}


\begin{Lem}\label{lem:exists_sign_neg}
Suppose $A$ satisfies our assumptions for general dictionaries, and let $t = \Omega(\Lambda\Delta\log^2 n/\sigma^2)$. Then for any $j \in [n]$, there exists a general signature set of size $t$ for node $x_j$ (as in Definition~\ref{def:sign-general}).
\end{Lem}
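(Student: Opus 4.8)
The plan is to follow the probabilistic-method argument of Lemma~\ref{lem:exists_sign} almost verbatim, with two modifications forced by the presence of negative entries: a \emph{sign-bucketing} step to secure property (a), and one application of Lemma~\ref{lem:variance-bound-general} to secure the new variance property (c). First I would invoke Assumption G1: among the $\ge 2d$ pixels $i$ with $|\mat{A}{i}{j}|\ge\sigma$, by pigeonhole at least $d$ share the same sign; without loss of generality call these the positive neighbors of $x_j$, and let $T$ be a uniformly random size-$t$ subset of this collection. Since every $i\in T$ has $\mat{A}{i}{j}\ge\sigma$ with no possibility of cancellation, $\beta_{j,T}=\sum_{i\in T}\mat{A}{i}{j}\ge\sigma t$, which is exactly property (a). (This is why Assumption G1 is stated with the factor $2$.)

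Property (b) is proved exactly as in Lemma~\ref{lem:exists_sign}. For $k\ne j$ let $f_{k,T}$ be the number of edges from $x_k$ into $T$ in $G_\tau$; then $|\beta_{k,T}|\le t\tau+f_{k,T}\Lambda$ by the triangle inequality, regardless of signs. Our $d$ candidate pixels all lie in the $G_\tau$-neighborhood of $j$ (since $\tau<\sigma$ implies $G_\sigma\subseteq G_\tau$), so by Assumption G2' at most $\kappa$ of them are also $G_\tau$-neighbors of $k$; hence $f_{k,T}$ is stochastically dominated by a hypergeometric variable with population size $d$, at most $\kappa$ successes, and $t$ draws, whose mean $t\kappa/d=O_\theta(1)$ is in fact $\ll 1$ for the stated parameter choices. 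A Chernoff/Hoeffding tail bound then gives $\Pr[f_{k,T}\ge 4\log n]\le n^{-3}$, and a union bound over the $m=\poly(n)$ features shows that with probability $\ge 1-1/n$ we have $f_{k,T}\le 4\log n$ simultaneously for all $k\ne j$. On this event $|\beta_{k,T}|\le t\tau+4\Lambda\log n\le \sigma^2 t/(\Delta\log n)$, using $\tau=O_\theta(1/\log n)$ small enough and $t=\Omega(\Lambda\Delta\log^2 n/\sigma^2)$.

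The only genuinely new ingredient is property (c), namely $\nu_{-j,T}^2=\rho\sum_{k\ne j}\beta_{k,T}^2\le \sigma^2 t^2/(\Delta\log n)$. On the same good event (all $f_{k,T}\le 4\log n$) I would apply Lemma~\ref{lem:variance-bound-general} with $S=[m]\setminus\{j\}$ and $W=4\log n$: since $\beta_{S,T}=\beta_T-\beta_{j,T}x_j$, this gives directly $\nu_{-j,T}^2=\Var[\beta_{S,T}]\le 2tW+2t^2\gamma=8t\log n+2t^2\gamma$. It then remains to check that each summand is at most half the target. The term $8t\log n$ is, because $t=\Omega(\Lambda\Delta\log^2 n/\sigma^2)$ exceeds $16\Delta\log^2 n/\sigma^2$; the term $2t^2\gamma$ is, because $\gamma=\sigma^4/(2\Delta\Lambda^2\log n)$ and $\sigma\le 1\le\Lambda$ make $2\gamma$ at most $\sigma^2/(2\Delta\log n)$ (with a little slack from $\sigma$ being a sufficiently small constant). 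Combining the three parts, with probability at least $1-1/n$ the random set $T$ satisfies (a), (b) and (c) simultaneously, so a general signature set of size $t$ for $x_j$ exists.

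I expect the only real obstacle to be the parameter bookkeeping: one must verify that the quoted bounds on $\Delta$, $\tau$, $\kappa$ and $\gamma$ from Section~\ref{sec:prelim} are simultaneously strong enough to make $t\kappa/d\ll 1$, $t\tau+4\Lambda\log n\le\sigma^2 t/(\Delta\log n)$, and $8t\log n+2t^2\gamma\le\sigma^2 t^2/(\Delta\log n)$. All the probabilistic content is inherited from Lemma~\ref{lem:exists_sign} (the hypergeometric concentration of $f_{k,T}$) and from Lemma~\ref{lem:variance-bound-general} (the variance split into big and small edges), so no new inequality is needed.
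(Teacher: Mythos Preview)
Your proposal is correct and follows essentially the same route as the paper's own proof: the same pigeonhole on signs (from Assumption~G1) to get property~(a), the same Chernoff bound on $f_{k,T}$ for property~(b), and the same invocation of Lemma~\ref{lem:variance-bound-general} with $S=[m]\setminus\{j\}$ and $W=4\log n$ for property~(c). Your write-up is in fact more explicit than the paper's in the final parameter checking for $2tW+2t^2\gamma\le\sigma^2t^2/(\Delta\log n)$.
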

\begin{proof}
As before, we use the probabilistic method. Suppose we fix some $j$.  By Assumption G1, in $G_{\sigma}$, node $x_j$ has either at least $d$ positive neighbors or $d$ negative ones. W.l.o.g., let us assume there are $d$ negative neighbors. Let $T$ be uniformly random subset of size $t$ of these negative neighbors. By definition of $G_{\sigma}$, we have $\beta_{j,T} \le -\sigma t$. 

For $k \ne j$, let $f_{k,T}$ be the number of edges from $x_k$ to $T$ in graph $G_{\tau}$. Using the same argument as in the proof of Lemma~\ref{lem:exists_sign}, we have $f_{k,T}\le 4\log n$ w.h.p. for all such $k\neq j$. Thus $|\beta_{k,T}|\le t\tau + f_{k,T}\Lambda \le \sigma^2t/(\Delta \log n)$. Thus it remains to bound $\nu_{-j,T}$.

We could apply Lemma~\ref{lem:variance-bound-general} with $W = 4\log n\ge f_{k,T}$, and $S = [m]\setminus \{j\}$ on set $T$: we get $\nu_{-j,T}^2 \le 2tW + 2t^2\gamma 
$. Recall that $\gamma = \sfrac{\sigma^2}{3\Delta^2\log n}$ and thus $\nu_{-j,T} \le \sfrac{\sigma t}{\sqrt{\Delta\log n}}$.
\end{proof}
%
%
%
%
%

The proof of Lemma~\ref{lem:symmetric_diff} now follows in the general case (here we will use the variance bound (c) in the general definition of signature sets), except that we need to redefine event $E_2$ to handle the negative case. For completeness, we state the general version of Lemma~\ref{lem:symmetric_diff} in Appendix~\ref{sec:app:symmetric_diff}. As before, signature sets give a great idea of whether $x_j=1$.

Let us now define correlated sets: here we need to consider both positive and negative bias
 
\begin{Def}[(General) Correlated Set]\label{def:correlated_set_neg}
A set $T$ of size $t$ is {\em correlated}, if either with probability at least $\rho - 1/n^2$ over the choice of $x$'s, 
$\beta_T \ge 
\Exp[\beta_T] + 0.8\sigma t $, or with probability at least $\rho - 1/n^2$, 
$\beta_T \le 
\Exp[\beta_T] - 0.8\sigma t $.
\end{Def}

Starting with a correlated set (a potential signature set), we expand it similar to (Definition~\ref{def:expand}), except that we find $\tilT$ as follows: 

\[ \tilde{T}_{temp} = \{\textrm{$2d$ coordinates of largest {\em magnitude} in $\hat{A}_T$} \}, \tilT_1 = \{i\in \tilT_{temp}: \hat{A}_T \ge 0\}\]
\[ \tilde{T} = \left\{\begin{array}{cc} \tilT_1 & \textrm{ if } |T_1|\ge d \\ \tilT_{temp}\setminus \tilT_1 & \textrm{otherwise}\end{array}\right.\]

Our earlier definitions of expanded signature sets and bias can also be adapted naturally:
\begin{Def}[(General) Expanded Signature Set]
An expanded set $\tilde{T}$ is an expanded signature set for $x_j$ if $|\beta_{j,\tilde{T}}| \ge 0.7\sigma d$ and for all $k\ne j$, $|\beta_{k,\tilde{T}}| \le 0.3\sigma d$.
\end{Def}

Since Lemma~\ref{lem:extend} still holds, Lemma~\ref{lem:expandsignature} follows straightforwardly. That is, there always exists a general expanded signature set $\tilT$ that is produced by a set $T$ of size $t = O_\theta(\log n^2)$. (Note that this is why in the general case we assume that $G_{\sigma}$ has degree at least $2d$ in Assumption G1. We want to make the size of good expanded set to be $d$ instead of $d/2$ so that all the lemmas can be adapted without change of notation).  

\begin{Def}[(General) Empirical Bias] The empirical bias $\bias_{\tilde{T}}$ of an expanded set $\tilde{T}$ of size $d$ is defined to be the largest $B$ that satisfies
\[ \card{\set{k\in [p]:~\left| \beta^k_{\tilde{T}}- \EExp[\beta_{\tilde{T}}]\right| \ge B}} \ge \rho N/2.\]

In other words, $\bias_{\tilde{T}}$ is the difference between the $\rho N/2$-th largest $\beta^k_{\tilde{T}}$ in the samples and $\EExp[\beta_{\tilde{T}}]$.
\end{Def}

Let us now intuitively describe why the analog of Lemma~\ref{lem:maxbias} holds in the general case. We provides the formal statement and the proof in Appendix~\ref{sec:app:symmetric_diff}
\begin{enumerate}
\item The first step, Claim~\ref{claim:largeedges} is a statement purely about the magnitudes of the edges (in fact, cancellations in $\beta_{k, \tilde{T}}$ for $k \ne j$ only help our case).
\item The second step, Claim~\ref{claim:biasmaxclose} essentially argues that the small $\beta_{k, \tilde{T}}$ do not contribute much to the bias (a concentration bound, which still holds due to Lemma~\ref{lem:variance-bound-general}), and that the probability of {\em two} ``large'' features $j, j'$ being on simultaneously is very small. The latter holds even if the $\beta_{j,\tilde{T}}$  have different signs.
\item The final step in the proof of Lemma~\ref{lem:maxbias} is an argument which uses the assumption on the overlap between features to contradict the maximality of bias, when the case where $\beta_{j,\tilde{T}}$ and $\beta_{j',\tilde{T}}$ are both ``large''. This only uses the magnitudes of the entries in $A$, and thus also follows.
\end{enumerate}

\paragraph{Recovering an equivalent dictionary.}  The main lemma in the nonnegative case, which shows that Algorithm~\ref{alg:expand} {\em roughly} recovers a column, is Lemma~\ref{lem:expandexpandedset}. The proof uses the property that signature sets are elevated ``almost iff'' the $x_j=1$ to conclude that we get a good approximation to one of the columns. We have seen that this also holds in the general case, and since the rest of the argument deals only with the magnitudes of the entries, we conclude that we can roughly recover a column also in the general case.  Let us state this formally.
\begin{Lem}
\label{lem:expandexpandedset:gen}
If $\tilde{T}$ is an expanded signature set for $x_j$, and $\tilde{A}_{\tilde{T}}$ is the corresponding column output by Algorithm~\ref{alg:expand}, then with high probability $\|\tilde{A}_{\tilde{T}} - A_j\|_\infty \le O(\rho(\Lambda^3\log n/\sigma^2)^2\sqrt{\Lambda\log n}) = o(\sigma)$.
\end{Lem}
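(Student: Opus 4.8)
The plan is to run the proof of Lemma~\ref{lem:expandexpandedset} essentially verbatim, substituting three ``general'' ingredients for their nonnegative counterparts: the variance control of Lemma~\ref{lem:variance-bound-general} in place of the crude bound $\rho\sum_{k}\beta_{k,\tilde T}^2\le(\max_k\beta_{k,\tilde T})\cdot\rho\sum_k\beta_{k,\tilde T}$; the general version of Claim~\ref{claim:largeedges} (which, as noted, only gets easier, since cancellations shrink $\beta_{k,\tilde T}$ for $k\ne j$); and the general version of Lemma~\ref{lem:symmetric_diff}. Throughout I track the sign of $\beta_{j,\tilde T}$: since $\tilde T$ is an expanded signature set for $x_j$ we have $|\beta_{j,\tilde T}|\ge 0.7\sigma d$ (indeed $\ge 0.9\sigma d$ when $\tilde T$ is the expansion of a genuine signature set, by Lemma~\ref{lem:expandsignature}). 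Without loss of generality assume $\beta_{j,\tilde T}\ge 0.7\sigma d$ and set $E_1=\{x_j=1\}$ and $E_2=\{\beta_{\tilde T}\ge \EExp[\beta_{\tilde T}]+0.6\sigma d\}$; the case $\beta_{j,\tilde T}\le -0.7\sigma d$ is identical after flipping the inequality in $E_2$, matching the sign convention of the Recovery step of Algorithm~\ref{alg:expand}.

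Next I decompose $\beta_{\tilde T}=\beta_{j,\tilde T}x_j+\beta_{large,\tilde T}+\beta_{small,\tilde T}$, where $\beta_{large,\tilde T}=\sum_{k\in K\setminus\{j\}}\beta_{k,\tilde T}x_k$, $\beta_{small,\tilde T}=\sum_{k\notin K}\beta_{k,\tilde T}x_k$, and $K$ is the set of features whose $|\beta_{k,\tilde T}|$ exceeds the threshold of (the general version of) Claim~\ref{claim:largeedges}, so that $|K|=O_\theta(\log n)$, precisely $|K|=O(\Delta\Lambda^3\log n/\sigma^4)$. The heart of the argument is the concentration of $\beta_{small,\tilde T}$: applying Lemma~\ref{lem:variance-bound-general} with $S=[m]\setminus K$ and $T=\tilde T$ (of size $d$) bounds $\Var[\beta_{small,\tilde T}]\le 2dW+2d^2\gamma$, where $W$ caps the number of $G_\tau$-edges from any one feature into $\tilde T$. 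Here one uses the structural fact that $\tilde T$ is, up to $o(\sigma)$ entrywise error, supported on the large-magnitude coordinates of a single column $A_{j'}$ (which is how the expanded sets fed to Algorithm~\ref{alg:expand} arise, via the general form of Lemma~\ref{lem:extend}), so that $N_{G_\tau}(k)\cap\tilde T\subseteq N_{G_\tau}(k)\cap N_{G_\tau}(j')$ has size below $\kappa$ by Assumption G2'; hence $W=O_\theta(d/\log^2 n)$ and, with $\gamma=O_\theta(1/\log n)$, $\Var[\beta_{small,\tilde T}]=O_\theta(d^2/\log n)$. Bernstein's inequality (with the $|\beta_{k,\tilde T}|$, $k\notin K$, as bounded increments) then gives $|\beta_{small,\tilde T}-\E[\beta_{small,\tilde T}]|>0.1\sigma^2 d/\Lambda$ only with probability $\exp(-\Omega_\theta(\log^2 n))$, exactly the role played by the corresponding step inside the proof of Claim~\ref{claim:biasmaxclose}.

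With this in hand the symmetric-difference bookkeeping is identical to that in Lemma~\ref{lem:expandexpandedset}. On $E_1$: $\beta_{j,\tilde T}x_j\ge 0.7\sigma d$, $\E[\beta_{large,\tilde T}]\le\rho|K|d=o(\sigma d)$, and $\beta_{large,\tilde T},\beta_{small,\tilde T}$ concentrate, so $E_2$ fails only when $\beta_{small,\tilde T}$ deviates, which happens for at most $n$ of the $N=\poly(n)$ samples with very high probability. On $E_1^c$: since $|\beta_{k,\tilde T}|\le 0.3\sigma d<0.6\sigma d$ for all $k\ne j$, pushing $\beta_{\tilde T}$ up by $0.6\sigma d$ requires either the same deviation of $\beta_{small,\tilde T}$ or at least two features of $K\setminus\{j\}$ simultaneously on, the latter of probability $O(\rho^2|K|^2)=\rho\cdot o(\sigma/\sqrt{\Lambda\log n})$ (signs are irrelevant here: opposite signs only help). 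Thus the $E_1\oplus E_2$ mass is $o(\rho/\sqrt{\Lambda\log n})$ up to lower-order terms; combining with the sample-wide event $\|y^k\|_\infty=O(\sqrt{\Lambda\log n})$ (which holds with probability $1-n^{-C}$ by Bernstein, the pixel variances being normalized to $1$, and which needs no union bound over $\tilde T$), Lemma~\ref{lem:prob_ineq} yields $\|\tilde A_{\tilde T}-A_j\|_\infty\le o(\sigma/\sqrt{\Lambda\log n})\cdot O(\sqrt{\Lambda\log n})=O(\rho(\Lambda^3\log n/\sigma^2)^2\sqrt{\Lambda\log n})=o(\sigma)$, as claimed.

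The step I expect to be the real obstacle is the variance bound for $\beta_{small,\tilde T}$: in the nonnegative case it was essentially free because $\rho\sum_k\beta_{k,\tilde T}=\E[\beta_{\tilde T}]=d$, whereas with signed entries a feature can contribute a large $\beta_{k,\tilde T}^2$ through many cancelling $G_\tau$-edges, and the only way to rule this out is to combine Assumption G2' (to cap $W$, which itself needs the structural fact that $\tilde T$ tracks a single column) with Assumption G3 (to cap the small-weight contribution $2d^2\gamma$) -- exactly the two inputs packaged by Lemma~\ref{lem:variance-bound-general}. Everything else -- the general Claim~\ref{claim:largeedges}, the count $|K|$, and the probability computations used to bound $E_1\oplus E_2$ -- depends only on magnitudes of entries of $A$ and transfers without change, which is why Lemma~\ref{lem:maxbias} and its successors already go through in the general case.
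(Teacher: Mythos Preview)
Your overall strategy matches the paper's: the paper gives no detailed proof of this lemma, merely asserting that the argument of Lemma~\ref{lem:expandexpandedset} transfers once the general ingredients (Lemma~\ref{lem:variance-bound-general}, the general Claim~\ref{claim:largeedges}, and the general Lemma~\ref{lem:symmetric_diff}) are in place. The symmetric-difference bookkeeping, the bound $\|y^k-\E[y]\|_\infty=O(\sqrt{\log n})$ via Bernstein with the normalized pixel variances, and the final appeal to Lemma~\ref{lem:prob_ineq} are all fine.

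There is, however, a real gap in your variance step. You define $K$ as $\{k:|\beta_{k,\tilde T}|\text{ large}\}$ (the nonnegative definition), and then, to feed a value of $W$ into Lemma~\ref{lem:variance-bound-general} for $S=[m]\setminus K$, you invoke a ``structural fact'' that $\tilde T$ lies inside $N_{G_\tau}(j')$ for some column $j'$, citing Lemma~\ref{lem:extend}. But Lemma~\ref{lem:extend} applies only when the originating small set $T$ was a \emph{genuine signature set}; the expanded signature sets the algorithm actually hands to Algorithm~\ref{alg:expand} are those of maximum bias among expansions of \emph{arbitrary correlated sets}, so nothing in the hypothesis forces $\tilde T\subseteq N_{G_\tau}(j')$. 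And once entries are signed, $|\beta_{k,\tilde T}|$ small no longer implies few $G_\tau$-edges from $k$ into $\tilde T$: there can be many, cancelling. So your bound $W=\kappa$ is unjustified.

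The paper closes this gap (in the appendix proof of the general Lemma~\ref{lem:maxbias}) by \emph{redefining} $K_{large}$ via edge count rather than via $|\beta_{k,\tilde T}|$: set $W=\sigma^4 d/(2\Delta\Lambda^3\log n)$ and $K_{large}=\{k:|\{i\in\tilde T:|A^{(i)}_k|\ge\tau\}|\ge W\}$. Then for $k\notin K_{large}$ the edge count is $<W$ \emph{by definition}, so Lemma~\ref{lem:variance-bound-general} applies with this $W$ and gives $\Var[\beta_{small,\tilde T}]\le 2dW+2d^2\gamma=O_\theta(d^2/\log n)$ with no structural assumption on $\tilde T$. One still has $|\beta_{k,\tilde T}|\le d\tau+W\Lambda$ small for $k\notin K_{large}$, and the same inclusion--exclusion using Assumption~G2' yields $|K_{large}|=O_\theta(\log n)$. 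With this one substitution the remainder of your argument goes through as written.
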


Once we have all the entries which are $> \sigma/2$ in magnitude, we can use the `refinement' trick of Lemma~\ref{lem:refinement} to conclude that we can recover the entries.

\begin{Lem}
When the number of samples is at least $n^{4C+3}m$, the matrices $A$ and $\hat{A}$ are entry-wise $n^{-2C}m^{-1/2}$ close. Further, the two dictionaries are $n^{-C}$-equivalent.
\end{Lem}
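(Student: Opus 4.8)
The plan is to replay, in the general (signed) setting, the three-step refinement of Section~\ref{subsec:refine}: (i) show that the set $S_j$ of large-magnitude coordinates of the rough estimate $\tilde A_{\tilde T_j}$ produced by Lemma~\ref{lem:expandexpandedset:gen} is a genuine \emph{general} signature set for $x_j$ in the sense of Definition~\ref{def:sign-general}; (ii) re-run the (general) estimation step of Algorithm~\ref{alg:expand} on $S_j$ to get a column $\hat A_j$ that is entry-wise $O(n^{-2C}m^{-1/2})$-close to $A_j$; (iii) turn this entry-wise accuracy into $n^{-C}$-equivalence via the general analog of Lemma~\ref{lem:finalset}. Steps (i) and (iii) are the only places where the signed case differs from the nonnegative one; step (ii) is the argument of Lemma~\ref{lem:extend} verbatim, using the general version of Lemma~\ref{lem:symmetric_diff}.

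For (i): by Lemma~\ref{lem:expandexpandedset:gen}, $\|\tilde A_{\tilde T_j} - A_j\|_\infty = o(\sigma)$, so put $S_j = \{i : |\tilde A_{\tilde T_j}(i)| > \sigma/2\}$. Assumption G1 gives at least $2d$ entries of $A_j$ of magnitude $\ge \sigma$; each such entry keeps magnitude $> \sigma/2$ and the correct sign in $\tilde A_{\tilde T_j}$, so $|S_j| \ge d$, there is no cancellation, and $|\beta_{j,S_j}| \ge (\tfrac12 - o(1))\sigma|S_j|$. For $k \ne j$, Assumption G2' bounds the number of $G_{\tau}$-edges from $k$ into $S_j$ by $\kappa$, hence $|\beta_{k,S_j}| \le \tau|S_j| + \kappa\Lambda \le \sigma^2|S_j|/(\Delta\log n)$, which is conditions (a),(b) of Definition~\ref{def:sign-general}. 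For condition (c), apply Lemma~\ref{lem:variance-bound-general} with $T = S_j$, $S = [m]\setminus\{j\}$ and $W = \kappa$: this gives $\nu_{-j,S_j}^2 \le 2|S_j|\kappa + 2|S_j|^2\gamma \le \sigma^2|S_j|^2/(\Delta\log n)$, using $\kappa = O_\theta(d/\log^2 n)$, the smallness of $\gamma$ from Assumption G3, and $|S_j| \ge d$. Thus $S_j$, with $|S_j|$ playing the role of $t$, is a general signature set for $x_j$.

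For (ii): run the estimation step of Algorithm~\ref{alg:expand} on $S_j$ with the appropriately scaled threshold (as in Algorithm~\ref{alg:main}, step 9). As in Lemma~\ref{lem:extend}, express $\hat A_j$ as the empirical version of $(\E[y \mid E_2] - \mathbf{1})/(1-\rho)$, where $E_2$ is the event of the general Lemma~\ref{lem:symmetric_diff}; that lemma shows $E_2$ and $\{x_j = 1\}$ disagree with probability at most $n^{-2C'}$ for a $C'$ growing with $\Delta$. The only new point is that now $y_i$ is not bounded by $1/\rho$; instead $\var[y_i] = 1$ by the normalization, so the conditioning bias on coordinate $i$ is at most $\E[\,|y_i - \E y_i|\,\mathbf{1}_{E_1 \oplus E_2}\,]/\Pr[E_2] \le \sqrt{\var[y_i]\,\Pr[E_1 \oplus E_2]}/(\rho(1-o(1))) = O(n^{-C'}/\rho)$, which is $\ll n^{-2C}m^{-1/2}$ once $\Delta$ (hence $C'$) is large, since $1/\rho = \poly\log n$ and $m = \poly(n)$. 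The sampling error of the empirical mean over the $\Theta(\rho N)$ samples in $L$ is, by Bernstein applied to the subexponential, variance-$\le\Lambda$ coordinates of $y$, at most $n^{-2C}m^{-1/2}$ with failure probability $\exp(-\Omega(n^3/\poly\log n))$ once $N \ge n^{4C+3}m$; this is small enough to union-bound over the $m$ columns and over all $n^{O(d)}$ candidate sets $S_j$. This gives the first assertion.

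For (iii): set $\delta = O(n^{-2C}m^{-1/2})$ and redo the estimate behind Lemma~\ref{lem:finalset} under the variance normalization. Fix a pixel $i$ and write $((A-\hat A)x)_i = \sum_j c_j x_j$ with $c_j = (A-\hat A)^{(i)}_j$, $|c_j| \le \delta$. Unlike the nonnegative case the two row $\ell_1$-norms no longer coincide, so the mean of $\sum_j c_j x_j$ does not cancel automatically; one first truncates to zero the entries of $\hat A_j$ below $O(\delta)$ (the remark after Lemma~\ref{lem:finalset}), after which $c_j \ne 0$ forces $A^{(i)}_j \ne 0$, so at most $m$ terms are nonzero, the mean $|\rho\sum_j c_j| \le \rho m\delta \le \sqrt{m}\,n^{-2C}$, and the variance $\rho\sum_j c_j^2 \le \rho\delta\sum_j |c_j| = O(\delta\sqrt{\rho m})$ (using $\sum_j (A^{(i)}_j)^2 = O(1/\rho)$ and Cauchy--Schwarz). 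Bernstein then yields $|((A-\hat A)x)_i| = O(\delta\sqrt{m\log n} + \sqrt{m}\,n^{-2C}) = O(n^{-C}\sqrt{\log n})$ with high probability, so $A$ and $\hat A$ are $n^{-C}$-equivalent for $n$ large, taking $C$ (equivalently $\Delta$) large enough to beat the fixed exponent of $m$. The main obstacle is not conceptual but is precisely this bookkeeping: checking condition (c) for $S_j$ through Lemma~\ref{lem:variance-bound-general}, and re-deriving the concentration and conditioning-bias estimates of Lemmas~\ref{lem:extend} and \ref{lem:finalset} with unbounded (but subexponential, variance-$1$) pixels and with the variance normalization replacing the $\ell_1$ normalization.
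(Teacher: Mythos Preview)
Your proposal is correct and follows the same three-step outline as the paper: verify that the large-entry set $S_j$ is a (general) signature set, rerun the estimation step to get entry-wise $n^{-2C}m^{-1/2}$ accuracy, and deduce $n^{-C}$-equivalence by a concentration bound on $(A-\hat A)x$. In fact in steps (i)--(ii) you fill in two details the paper waves at: you check condition (c) of Definition~\ref{def:sign-general} for $S_j$ via Lemma~\ref{lem:variance-bound-general}, and you replace the crude bound $|y_i|\le 1/\rho$ used in Lemma~\ref{lem:extend} (which no longer holds under the variance normalization) by a Cauchy--Schwarz estimate on the conditioning bias.

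Where you diverge is step (iii). The paper's argument is a one-liner: with $\delta = n^{-2C}m^{-1/2}$ one has, for each pixel $i$, $\Var\big[\sum_j c_j x_j\big] \le \rho\sum_j c_j^2 \le \rho\, m\,\delta^2 = \rho\, n^{-4C} < n^{-4C}$, and Bernstein then gives $|(A-\hat A)x|_i = O(n^{-C})$ with overwhelming probability. The paper explicitly remarks that the more delicate machinery of Lemma~\ref{lem:finalset} (bounding $\sum_j c_j^2$ via the row $\ell_1$-norm, and the truncation in the subsequent remark) is \emph{not} needed here precisely because the entry-wise error is inverse-polynomial rather than inverse-polylog. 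Your route through truncation and Cauchy--Schwarz on $\sum_j (A^{(i)}_j)^2$ is valid but unnecessary; the crude $m\delta^2$ bound already suffices, and this is what justifies the specific choice $\delta = n^{-2C}m^{-1/2}$ in the statement. One small slip: you do not need to union-bound over $n^{O(d)}$ candidate sets $S_j$; there are only $m$ of them, one per column, and with fresh samples the usual dependence issues disappear.
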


The first part of the proof (showing entry-wise closeness) is very similar to Lemma~\ref{lem:extend}. In order to show $n^{-C}$ equivalent, notice when the entries are very close this just follows from Bernstein's inequality, with variance bounded by $n^{-4C}m^{-1} \cdot m$. In Section 3 we do not just use this bound, because we want to be able to also handle the case when the entrywise error is only inverse polylog (for Assumption 2).

\bibliography{ref}

\newcommand{\etalchar}[1]{$^{#1}$}
\begin{thebibliography}{YWHM08}

\bibitem[AAN13]{DBLP:journals/corr/AgarwalAN13}
Alekh Agarwal, Animashree Anandkumar, and Praneeth Netrapalli.
\newblock Exact recovery of sparsely used overcomplete dictionaries.
\newblock {\em CoRR}, abs/1309.1952, 2013.

\bibitem[ABGM13]{DBLP:journals/corr/AroraBGM13}
Sanjeev Arora, Aditya Bhaskara, Rong Ge, and Tengyu Ma.
\newblock Provable bounds for learning some deep representations.
\newblock {\em CoRR}, abs/1310.6343, 2013.

\bibitem[AEB05]{aharon2005k}
Michal Aharon, Michael Elad, and Alfred~M Bruckstein.
\newblock K-svd and its non-negative variant for dictionary design.
\newblock In {\em Optics \& Photonics 2005}, pages 591411--591411.
  International Society for Optics and Photonics, 2005.

\bibitem[AEB06]{aharon2006img}
Michal Aharon, Michael Elad, and Alfred Bruckstein.
\newblock K-svd: An algorithm for designing overcomplete dictionaries for
  sparse representation.
\newblock {\em Signal Processing, IEEE Transactions on}, 54(11):4311--4322,
  2006.

\bibitem[AEP06]{DBLP:conf/nips/ArgyriouEP06}
Andreas Argyriou, Theodoros Evgeniou, and Massimiliano Pontil.
\newblock Multi-task feature learning.
\newblock In {\em NIPS}, pages 41--48, 2006.

\bibitem[AGM13]{AGM}
Sanjeev Arora, Rong Ge, and Ankur Moitra.
\newblock New algorithms for learning incoherent and overcomplete dictionaries.
\newblock {\em ArXiv}, 1308.6273, 2013.

\bibitem[Aha06]{AharonThesis}
Michal Aharon.
\newblock {\em Overcomplete Dictionaries for Sparse Representation of Signals}.
\newblock PhD thesis, Technion - Israel Institute of Technology, 2006.

\bibitem[BC{\etalchar{+}}07]{boureau2007sparse}
Y-lan Boureau, Yann~L Cun, et~al.
\newblock Sparse feature learning for deep belief networks.
\newblock In {\em Advances in neural information processing systems}, pages
  1185--1192, 2007.

\bibitem[Ben62]{Bernnett62}
George Bennett.
\newblock Probability inequalities for the sum of independent random variables.
\newblock {\em Journal of the American Statistical Association}, 57(297):pp.
  33--45, 1962.

\bibitem[Ber27]{Bernstein}
S.~Bernstein.
\newblock {\em Theory of Probability}, 1927.

\bibitem[BGI{\etalchar{+}}08]{IndykStrauss}
R.~Berinde, A.C. Gilbert, P.~Indyk, H.~Karloff, and M.J. Strauss.
\newblock Combining geometry and combinatorics: a unified approach to sparse
  signal recovery.
\newblock In {\em 46th Annual Allerton Conference on Communication, Control,
  and Computing}, pages 798--805, 2008.

\bibitem[CRT06]{candes2006robust}
Emmanuel~J Cand{\`e}s, Justin Romberg, and Terence Tao.
\newblock Robust uncertainty principles: Exact signal reconstruction from
  highly incomplete frequency information.
\newblock {\em Information Theory, IEEE Transactions on}, 52(2):489--509, 2006.

\bibitem[Das99]{DBLP:conf/focs/Dasgupta99}
Sanjoy Dasgupta.
\newblock Learning mixtures of gaussians.
\newblock In {\em FOCS}, pages 634--644. IEEE Computer Society, 1999.

\bibitem[DH01]{donoho2001uncertainty}
David~L Donoho and Xiaoming Huo.
\newblock Uncertainty principles and ideal atomic decomposition.
\newblock {\em Information Theory, IEEE Transactions on}, 47(7):2845--2862,
  2001.

\bibitem[DMA97]{davis1997adaptive}
Geoff Davis, Stephane Mallat, and Marco Avellaneda.
\newblock Adaptive greedy approximations.
\newblock {\em Constructive approximation}, 13(1):57--98, 1997.

\bibitem[EA06]{elad2006image}
Michael Elad and Michal Aharon.
\newblock Image denoising via sparse and redundant representations over learned
  dictionaries.
\newblock {\em Image Processing, IEEE Transactions on}, 15(12):3736--3745,
  2006.

\bibitem[EAHH99]{engan1999method}
Kjersti Engan, Sven~Ole Aase, and J~Hakon~Husoy.
\newblock Method of optimal directions for frame design.
\newblock In {\em Acoustics, Speech, and Signal Processing, 1999. Proceedings.,
  1999 IEEE International Conference on}, volume~5, pages 2443--2446. IEEE,
  1999.

\bibitem[Hoy02]{hoyer2002non}
Patrik~O Hoyer.
\newblock Non-negative sparse coding.
\newblock In {\em Neural Networks for Signal Processing, 2002. Proceedings of
  the 2002 12th IEEE Workshop on}, pages 557--565. IEEE, 2002.

\bibitem[Ind08]{DBLP:conf/soda/Indyk08}
Piotr Indyk.
\newblock Explicit constructions for compressed sensing of sparse signals.
\newblock In Shang-Hua Teng, editor, {\em SODA}, pages 30--33. SIAM, 2008.

\bibitem[JXHC09]{DBLP:journals/tit/JafarpourXHC09}
Sina Jafarpour, Weiyu Xu, Babak Hassibi, and A.~Robert Calderbank.
\newblock Efficient and robust compressed sensing using optimized expander
  graphs.
\newblock {\em IEEE Transactions on Information Theory}, 55(9):4299--4308,
  2009.

\bibitem[LS99]{lee1999learning}
Daniel~D Lee and H~Sebastian Seung.
\newblock Learning the parts of objects by non-negative matrix factorization.
\newblock {\em Nature}, 401(6755):788--791, 1999.

\bibitem[LS00]{lewicki2000learning}
Michael~S Lewicki and Terrence~J Sejnowski.
\newblock Learning overcomplete representations.
\newblock {\em Neural computation}, 12(2):337--365, 2000.

\bibitem[MLB{\etalchar{+}}08]{mairal2008discriminative}
Julien Mairal, Marius Leordeanu, Francis Bach, Martial Hebert, and Jean Ponce.
\newblock Discriminative sparse image models for class-specific edge detection
  and image interpretation.
\newblock In {\em Computer Vision--ECCV 2008}, pages 43--56. Springer, 2008.

\bibitem[OF97]{olshausen1997sparse}
Bruno~A Olshausen and David~J Field.
\newblock Sparse coding with an overcomplete basis set: A strategy employed by
  v1?
\newblock {\em Vision research}, 37(23):3311--3325, 1997.

\bibitem[SWW12]{DBLP:journals/jmlr/SpielmanWW12}
Daniel~A. Spielman, Huan Wang, and John Wright.
\newblock Exact recovery of sparsely-used dictionaries.
\newblock {\em Journal of Machine Learning Research - Proceedings Track},
  23:37.1--37.18, 2012.

\bibitem[YWHM08]{yang2008image}
Jianchao Yang, John Wright, Thomas Huang, and Yi~Ma.
\newblock Image super-resolution as sparse representation of raw image patches.
\newblock In {\em Computer Vision and Pattern Recognition, 2008. CVPR 2008.
  IEEE Conference on}, pages 1--8. IEEE, 2008.

\end{thebibliography}
\bibliographystyle{alpha}

\appendix
\section{Full Proofs}

In this section we give the omitted proofs.

\subsection{Proof of Lemma~\ref{lem:finalset}}
\label{sec:lem:finalset}
\begin{proof}
Let us focus on the $i$th row of $A-\hat{A}$ and denote it by $w$.  Then we have $\norm{w}_1 \le \norm{A}_1+ \norm{\hat{A}}_1 \le O(1/\rho)$.  Now consider the random variable $Z = \sum_j w_j x_j$, where $x_j$ are i.i.d. Bernoulli r.v.s with probability $\rho$ of being $1$. Then by Bernstein's inequality (Theorem~\ref{thm:bernstein_ineq}), we have
\[ \Pr[ Z - \E Z > \epsilon ] \le e^{-\frac{\epsilon^2}{\rho \sigma + \rho \sum_j w_j^2}}. \]
Since $|w_j| < \delta$ for all $j$, we can bound the variance as
$\rho \cdot \sum_j w_j^2 \le \delta \rho \cdot \sum_j |w_j| \le 2\delta$. 

Thus setting $t= (4\delta \log n)^{1/2}$ (notice that this is the $t$ in Bernstein's inequality, not the same as the size of signature sets), we obtain an upper bound of $1/\text{poly}(n)$ on the probability.
\end{proof}

\subsection{Missing Lemmas and Proofs of Section~\ref{sec:general}}\label{sec:app:symmetric_diff}

\begin{Lem}[General Version of Lemma~\ref{lem:symmetric_diff}]\label{lem:gen_symmetric_diff}
Suppose $T$ of size $t$ is a general signature set for $x_j$ with $t =\omega(\sqrt{\log n})$. Let $E_1$ be the event that $x_j =1$ and $E_2$ be the event that $\beta_T \ge \Exp[\beta_T]+ 0.9\sigma t$ if $\beta_{j,T} \ge \sigma t$, and the event $\beta_T \le \Exp[\beta_T] - 0.9\sigma t$ if $\beta_{j,T} \le - \sigma t$.  Then for large constant $C$ (depending on $\Delta$) 

\begin{enumerate}
\item $\Pr[E_1] + n^{-2C}\ge \Pr[E_2] \ge \Pr[E_1] - n^{-2C}$. 
\item $\Pr[E_2|E_1] \ge 1 - n^{-2C}$, and $\Pr[E_2|E_1^c] \le n^{-2C}$.
\item $\Pr[E_1|E_2]\ge 1 - n^{-C}$.
\end{enumerate} 
\end{Lem}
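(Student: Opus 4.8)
The plan is to follow the proof of Lemma~\ref{lem:symmetric_diff} essentially verbatim, with the only change being to accommodate the sign of $\beta_{j,T}$ and to replace the ad hoc nonnegative variance bound by condition (c) in the general definition of signature sets. I would treat the two cases $\beta_{j,T} \ge \sigma t$ and $\beta_{j,T} \le -\sigma t$ symmetrically; by replacing $\beta_T$ with $-\beta_T$ (equivalently $T$ with $-A|_T$) the second case reduces to the first, so it suffices to handle $\beta_{j,T} \ge \sigma t$ and the event $E_2 = \{\beta_T \ge \E[\beta_T] + 0.9\sigma t\}$.

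\textbf{Step 1 (decompose and control the off-feature sum).} Write $\beta_T = \beta_{j,T} x_j + \sum_{k\neq j}\beta_{k,T}x_k$, and let $R = \sum_{k\neq j}\beta_{k,T}x_k$. As in the nonnegative proof, $\E[\beta_{j,T}x_j] = \rho\beta_{j,T}$, and since $|\beta_{j,T}| \le \Lambda t$ we get $|\E[\beta_{j,T}x_j]| = o(\sigma t)$; hence $\E[R] = \E[\beta_T] - \rho\beta_{j,T}$, which is within $o(\sigma t)$ of $\E[\beta_T]$. The key difference from the nonnegative case: I bound $\Var[R] = \nu_{-j,T}^2 \le \sigma^2 t^2/(\Delta\log n)$ directly from property (c) of Definition~\ref{def:sign-general}, rather than deriving it from $\beta_{k,T}\le M$ and nonnegativity. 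Each summand $\beta_{k,T}x_k$ still has magnitude at most $M = \sigma^2 t/(\Delta\log n)$ by property (b), so I can apply Bernstein's inequality (Theorem~\ref{thm:bernstein_ineq}) with variance proxy $\sigma^2 t^2/(\Delta\log n)$ and bound $M$ on the increments, obtaining
\[
\Pr\!\left[\, |R - \E[R]| > \sigma t/20 \,\right] \;\le\; 2\exp\!\left(-\,\Omega\!\left(\frac{\sigma^2 t^2/400}{\sigma^2 t^2/(\Delta\log n) + M\sigma t/20}\right)\right)\;\le\; n^{-2C},
\]
for $t = \omega(\sqrt{\log n})$ and $C$ a suitably large constant depending on $\Delta$. (Here the dominant denominator term is $\sigma^2 t^2/(\Delta\log n)$, which gives the $\Omega(\Delta\log n)$ exponent.)

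\textbf{Step 2 (deduce parts 1--3).} Conditioned on $E_1$ ($x_j = 1$), we have $\beta_T = \beta_{j,T} + R \ge \sigma t + \E[R] - \sigma t/20$ off the bad event, so $\beta_T - \E[\beta_T] \ge \sigma t - o(\sigma t) - \sigma t/20 \ge 0.9\sigma t$; thus $\Pr[E_2\mid E_1] \ge 1 - n^{-2C}$. Conditioned on $E_1^c$ ($x_j = 0$), $\beta_T = R$, and off the bad event $\beta_T - \E[\beta_T] \le \E[R] - \E[\beta_T] + \sigma t/20 = -\rho\beta_{j,T} + \sigma t/20 < 0.9\sigma t$, so $\Pr[E_2\mid E_1^c] \le n^{-2C}$. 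This is part (2). Part (1) follows since $\Pr[E_1] = \rho$ and $\Pr[E_2] = \Pr[E_1]\Pr[E_2\mid E_1] + \Pr[E_1^c]\Pr[E_2\mid E_1^c]$ lies within $n^{-2C}$ of $\rho$. Part (3) follows from Bayes' rule, $\Pr[E_1\mid E_2] = \Pr[E_2\mid E_1]\Pr[E_1]/\Pr[E_2] \ge \rho(1-n^{-2C})/(\rho + n^{-2C}) \ge 1 - n^{-2C}/\rho \ge 1 - n^{-C}$, using $\rho \gg n^{-C}$.

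The only genuine subtlety---and the step I would be most careful about---is making sure property (c) really is the right substitute for the nonnegative argument's automatic variance bound, i.e., that Bernstein's inequality still yields the $n^{-2C}$ tail with the same $t$. Since (c) gives exactly $\nu_{-j,T}^2 \le \sigma^2 t^2/(\Delta\log n)$ and (b) gives the same per-term magnitude bound $M$ as before, the Bernstein computation is word-for-word identical to the one in the proof of Lemma~\ref{lem:symmetric_diff}; there is no new obstacle, just the bookkeeping of signs. Everything downstream (the Bayes step, the choice of $C$) is unchanged. I would therefore present this as a short lemma proof that points to the proof of Lemma~\ref{lem:symmetric_diff} for the shared computation and only spells out the sign handling and the invocation of property (c).
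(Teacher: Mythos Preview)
Your proposal is correct and follows essentially the same approach as the paper's own proof, which is even terser: the paper simply notes that $|\E[\beta_{j,T}x_j]| = o(\sigma t)$ so the mean of the off-feature sum is close to $\E[\beta_T]$, observes that properties (b) and (c) of the general signature set supply exactly the per-term bound and the variance bound needed for Bernstein's inequality, and then defers to the proof of Lemma~\ref{lem:symmetric_diff}. Your write-up just spells out these steps and the sign symmetry more explicitly; there is no substantive difference.
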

\begin{proof}
It is a straightforward modification of the proof of Lemma~\ref{lem:symmetric_diff}. First of all, $|\Exp[\beta_{j,T}x_j]| = o(\sigma t)$,and thus mean of $\sum_{k\neq j} \beta_{k,T}x_k$ only differs from that of $\beta_{T}$ by at most $o(\sigma t)$. Secondly, Bernstein inequality requires the largest coefficients and the total variance be bounded, which correspond to exactly property (b) and (c) of a general signature set. The rest of the proof follows as in that for Lemma~\ref{lem:symmetric_diff}. 
\end{proof}

\begin{Lem}[General version of Lemma~\ref{lem:maxbias}]\label{lem:gen_maxbias}
Let $\tilde{T}^*$ be the set with largest {\em general} empirical bias $\bias_{\tilde{T}^*}$ among all the expanded sets $\tilde{T}$. The set $\tilde{T}^*$ is an expanded signature set for some $x_j$.
\end{Lem}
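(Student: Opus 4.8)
The plan is to mirror the proof of Lemma~\ref{lem:maxbias} step by step, invoking the general analogues of the three ingredients as enumerated in the discussion preceding this statement. First I would establish that the maximum general empirical bias among all expanded sets is at least $0.8\sigma d$: by Lemma~\ref{lem:exists_sign_neg} a general signature set $T$ of size $t$ exists for every $x_j$, by the general version of Lemma~\ref{lem:expandsignature} its expanded set $\tilde T$ has $|\beta_{j,\tilde T}| \ge 0.9\sigma d$, and then the general analogue of Claim~\ref{claim:biasmaxclose} (using Lemma~\ref{lem:variance-bound-general} in place of the ad-hoc variance bound) shows $\bias_{\tilde T}$ is within $0.1\sigma^2 d/\Lambda$ of $\max_k |\beta_{k,\tilde T}|$, hence at least $0.8\sigma d$.

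Next I would apply the general version of Claim~\ref{claim:biasmaxclose} to the maximizer $\tilde T^*$: since $\bias_{\tilde T^*} \ge 0.8\sigma d$, there is some feature $j$ with $|\beta_{j,\tilde T^*}| \ge 0.7\sigma d$. The key point in the general setting is that the empirical bias now tracks $\max_k |\beta_{k,\tilde T}|$ (absolute value), which is exactly what the $|\cdot|$ in the general definition of expanded signature set requires; cancellations among $\beta_{k,\tilde T}$ for $k\ne j$ only reduce their magnitudes and hence only help. So I would then argue for contradiction: suppose $\tilde T^*$ is not a general expanded signature set for $x_j$, i.e.\ there is $k\ne j$ with $|\beta_{k,\tilde T^*}| \ge 0.3\sigma d$. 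Let $Q_j, Q_k \subseteq \tilde T^*$ be the sets of coordinates connected to $j$, resp.\ $k$, with weight at least $\tau$ in $G_\tau$. Since each entry has magnitude at most $\Lambda$, we get $|Q_k| \ge 0.3\sigma d/\Lambda$, and the small-weight contribution to $\beta_{k,\tilde T^*}$ is at most $d\tau = o(\sigma d)$, so this holds with a slightly loosened constant. By Assumption G2', $|Q_j \cap Q_k| \le \kappa = o(d)$, so $|Q_j| \le d - 0.3\sigma d/\Lambda + \kappa \le d - 0.2\sigma d/\Lambda$.

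Finally I would exhibit a competitor with strictly larger bias. Take $T'$ a general signature set for $x_j$ (exists by Lemma~\ref{lem:exists_sign_neg}) and let $\tilde{T'}$ be its expanded set; by Lemma~\ref{lem:extend}/Lemma~\ref{lem:expandexpandedset:gen} the coordinate estimate $\tilde A_{T'}$ is entrywise $o(\sigma)$-close to $A_j$, so $\tilde{T'}$ captures (essentially) the $d$ largest-magnitude entries of $A_j$ with the correct dominant sign, giving $|\beta_{j,\tilde{T'}}| \ge$ (sum of $d$ largest $|A_j^{(i)}|$) $- o(d)$, which exceeds $|\beta_{j,\tilde T^*}|$ by at least $0.2\sigma^2 d/\Lambda$ since $\tilde T^*$ uses only $|Q_j| \le d - 0.2\sigma d/\Lambda$ heavy coordinates of $A_j$. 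Then by the general Claim~\ref{claim:biasmaxclose}, $\bias_{\tilde{T'}} \ge |\beta_{j,\tilde{T'}}| - 0.1\sigma^2 d/\Lambda > |\beta_{j,\tilde T^*}| + 0.1\sigma^2 d/\Lambda \ge \bias_{\tilde T^*}$, contradicting maximality. I expect the main obstacle to be verifying the general version of Claim~\ref{claim:biasmaxclose} cleanly in both directions: the lower bound on $\bias_{\tilde T}$ requires that when $x_{k^*}=1$ the sum is elevated \emph{in magnitude} by roughly $|\beta_{k^*,\tilde T}|$ and that the small terms concentrate (now via Lemma~\ref{lem:variance-bound-general}, which is where Assumption G3 enters), and the upper bound requires ruling out two large-magnitude features firing simultaneously — and here one must check that opposite signs cannot conspire to amplify rather than cancel, but the union bound over $|K|=O_\theta(\log n)$ large features makes the probability of any two being on $O(\rho^2 \log^2 n) = o(\rho)$ regardless of signs, so the argument goes through verbatim.
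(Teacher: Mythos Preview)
Your proposal is correct and follows essentially the same approach as the paper: establish the general analogs of Claims~\ref{claim:largeedges} and~\ref{claim:biasmaxclose} (the latter using Lemma~\ref{lem:variance-bound-general}, which is indeed where Assumption~G3 enters), then run the contradiction argument from Lemma~\ref{lem:maxbias} with absolute values throughout. The paper's write-up differs only cosmetically: it redefines $K_{large}$ via edge-counts in $G_\tau$ rather than via $|\beta_{k,\tilde T}|$ (which slightly streamlines the general Claim~\ref{claim:largeedges}), and in the contradiction step it phrases the competitor as ``remove $\tilde T^* \setminus Q_j$, add neighbors of $j$ in $G_\sigma$'' rather than explicitly invoking the expanded set $\tilde{T'}$ of a signature set for $x_j$ as you do---your formulation is actually cleaner, since the maximum is taken only over expanded sets.
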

\newcommand{\thr}{\theta}
\newcommand{\KLL}{\KL}
%
%
%

\begin{proof}
We first prove an analog of Claim~\ref{claim:largeedges}. Let $W = \sfrac{\sigma^4d}{2\Delta\Lambda^3\log n}$. Let's redefine $\KLL := \{k\in [m]: \card{\{i \in \tilT: |\mat{A}{i}{k}| \ge \tau \}} \ge W\}$ be the subset of nodes in $[m]$ which connect to at least $W$ nodes in $\tilT$ in the subgraph $G_{\tau}$.Note that this implies that if $k\not\in \KLL$, then $|\beta_{k,\tilT}|\le d\tau + W\Lambda\le \sfrac{d\sigma^4}{\Delta\Lambda^2\log n}$. Let $Q_k =  \{i\in \tilT: |\mat{A}{i}{k}| \ge \tau\}$. By definition, we have for $k\in \KLL$, $|Q_K|\ge W$. Then similarly as in the proof of Claim~\ref{claim:largeedges}, using the fact that $|Q_{k}\cap Q_{k'}|\le \kappa$, and inclusion-exclusion, we have that $|\KLL|\le O(\Delta\Lambda^3\log n/\sigma^4)$.

Then we prove an analog of Claim~\ref{claim:biasmaxclose}. Let $\beta_{small,\tilT}$ and $\beta_{large,\tilT}$ be defined as in the proof of Claim~\ref{claim:biasmaxclose} (with the new definition of $\KLL$). By Lemma~\ref{lem:variance-bound-general}, the variance of $\beta_{small,\tilT}$ is bounded by $2dW+2d^2\gamma\le 2\sfrac{d^2\sigma^4}{\Delta\Lambda^2\log n}$. Therefore by Bernstein's inequality we have that for sufficiently large $\Delta$, with probability at least $1-n^{-2}$ over the choice of $x$, $|\beta_{small, \tilde{T}} - \E[\beta_{small, \tilde{T}}]|\le 0.05d\sigma^2/\Lambda$. It follows from the same argument of Claim~\ref{claim:biasmaxclose} that with high probability over the choice of $N$ samples, $|\bias_T - \max_k\beta_{k,\tilT}|\le 0.1d\sigma^2/\Lambda$ holds when $\max_k\beta_{k,\tilT} \ge 0.5d\sigma$.

We apply almost the same argument as in the proof of Lemma~\ref{lem:maxbias}. By Lemma~\ref{lem:expandexpandedset:gen} we know that our algorithm must produce an expanded signature set of size $d$ with bias at least $0.8\sigma d$, and thus the set $\tilT^*$ with largest bias must has a large coefficient $j$ with $\beta_{j,\tilT^*}\ge 0.7\sigma d$. If there is some other $k$ such that $\beta_{k,\tilT^*}\ge 0.3\sigma d$, then $|Q_k|\ge 0.3\sigma d/\Lambda$ and therefore we could remove those elements in $\tilT^* - Q_j$, which has size larger than $0.3\sigma d/\Lambda-\kappa$ by Assumption G2. Then by adding some other elements which are in the neighborhood of $j$ in $G_{\sigma}$ into the set $Q_j$ we get a set with bias larger than $\tilT^*$, which contradicts our assumption that there exists $k$ with $\beta_{k,\tilT^*} \ge 0.3\sigma d$. Hence $\tilT^*$ is indeed an expanded signature set and the proof is complete.

\end{proof}


\section{Probability Inequalities}
\begin{Lem}\label{lem:prob_ineq}
Suppose $X$ is a bounded random variable in a normed vector space with $||X|| \le M$. If event $E$ happens with probability $1-\delta$ for some $\delta < 1$, then 
$||\Exp[X|E] - \Exp[X]|| \le 2\delta M$
\end{Lem}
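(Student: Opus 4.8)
The plan is to expand $\Exp[X]$ via the law of total expectation and then isolate the quantity we want to bound. Since $\delta < 1$ we have $\Pr[E] = 1-\delta > 0$, so both conditional expectations $\Exp[X\mid E]$ and $\Exp[X\mid E^c]$ are well-defined, and we may write $\Exp[X] = \Pr[E]\,\Exp[X\mid E] + \Pr[E^c]\,\Exp[X\mid E^c] = (1-\delta)\,\Exp[X\mid E] + \delta\,\Exp[X\mid E^c]$. Rearranging this identity gives $\Exp[X\mid E] - \Exp[X] = \delta\big(\Exp[X\mid E] - \Exp[X\mid E^c]\big)$.

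Next I would take norms on both sides and apply the triangle inequality, obtaining $\|\Exp[X\mid E] - \Exp[X]\| \le \delta\big(\|\Exp[X\mid E]\| + \|\Exp[X\mid E^c]\|\big)$. To finish, bound each conditional expectation using convexity of the norm (Jensen's inequality): for any event $F$ with $\Pr[F] > 0$ we have $\|\Exp[X\mid F]\| \le \Exp[\|X\|\mid F] \le M$. Substituting this for $F = E$ and $F = E^c$ yields $\|\Exp[X\mid E] - \Exp[X]\| \le \delta(M + M) = 2\delta M$, as claimed.

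There is no real obstacle here; the only points requiring a word of care are that conditioning is legitimate (ensured by $\delta < 1$) and that the inequality $\|\Exp[X]\| \le \Exp[\|X\|]$ is valid in a general normed vector space — this follows from convexity of $\|\cdot\|$, and in any case is immediate for the finitely-supported distributions that actually occur throughout this paper. Everything else is a one-line rearrangement plus the triangle inequality.
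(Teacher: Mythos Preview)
Your proof is correct and follows essentially the same route as the paper's: expand $\Exp[X]$ by the law of total expectation, rearrange to get $\Exp[X\mid E]-\Exp[X]=\delta(\Exp[X\mid E]-\Exp[X\mid E^c])$, then bound via the triangle inequality and $\|\Exp[X\mid F]\|\le M$. The paper's version is simply terser, omitting the Jensen/convexity justification you spelled out.
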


\begin{proof}
We have $\Exp[X] = \Exp[X | E]\Pr[E] + \Exp[X | E^c]\Pr[E^c] = \Exp[X|E] + (\Exp[X | E^c]- \Exp[X | E])\Pr[E^c]$, and therefore  $ ||\Exp[X|E] - \Exp[X]|| \le 2\delta M$. 
\end{proof}

\begin{Lem}
Suppose $X$ is a bounded random variable in a normed vector space with $||X|| \le M$. If events $E_1$ and $E_2$ have small symmetrical differences in the sense that $\Pr[E_1|E_2] \le \delta$ and  $\Pr[E_2|E_1] \le \delta$. Then $||\Exp[X|E_1] - \Exp[X|E_2]||\le 4\delta M$. 
\end{Lem}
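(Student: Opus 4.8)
The plan is to deduce this from Lemma~\ref{lem:prob_ineq} by a triangle‑inequality argument routed through the common event $E_1\cap E_2$. First I would read the hypothesis in the way that makes the statement true, namely that the two events almost coincide: $\Pr[E_2\mid E_1]\ge 1-\delta$ and $\Pr[E_1\mid E_2]\ge 1-\delta$ (equivalently $\Pr[E_2^c\mid E_1]\le\delta$ and $\Pr[E_1^c\mid E_2]\le\delta$, which is presumably the intended form — the complements seem to have dropped out of the printed statement). We may assume $\delta<1$ and $\Pr[E_1],\Pr[E_2]>0$, since otherwise either the conditional expectations are undefined and the statement is vacuous, or $\delta\ge 1$ and the bound is trivial because $\norm{\Exp[X\mid E_1]-\Exp[X\mid E_2]}\le\norm{\Exp[X\mid E_1]}+\norm{\Exp[X\mid E_2]}\le 2M\le 4\delta M$. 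Under these assumptions $\Pr[E_1\cap E_2]\ge(1-\delta)\Pr[E_1]>0$, so $\Exp[X\mid E_1\cap E_2]$ is well defined.

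Next I would carry out the two symmetric estimates. View $E_2$ inside the conditional probability space $\Pr[\,\cdot\mid E_1]$: there it has probability $\ge 1-\delta$ and $X$ still satisfies $\norm{X}\le M$, so applying Lemma~\ref{lem:prob_ineq} in this space (with its ``$E$'' taken to be $E_2$) gives $\norm{\Exp[X\mid E_1\cap E_2]-\Exp[X\mid E_1]}\le 2\delta M$. By the same reasoning with the roles of $E_1,E_2$ exchanged, $\norm{\Exp[X\mid E_1\cap E_2]-\Exp[X\mid E_2]}\le 2\delta M$. If one prefers not to invoke Lemma~\ref{lem:prob_ineq} inside a conditioned space, each of these bounds follows directly — exactly as in the proof of Lemma~\ref{lem:prob_ineq} — from the splitting $\Exp[X\mid E_1]=\Pr[E_2\mid E_1]\,\Exp[X\mid E_1\cap E_2]+\Pr[E_2^c\mid E_1]\,\Exp[X\mid E_1\cap E_2^c]$ together with $\norm{X}\le M$. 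Adding the two estimates and using the triangle inequality in the normed space yields $\norm{\Exp[X\mid E_1]-\Exp[X\mid E_2]}\le 4\delta M$, as claimed.

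I do not expect any real obstacle here: the argument is essentially a two‑step application of the preceding lemma. The only points that need care are bookkeeping ones — getting the direction of the conditionals right so that $E_1\cap E_2$ is genuinely a high‑probability event under both $\Pr[\,\cdot\mid E_1]$ and $\Pr[\,\cdot\mid E_2]$, and handling the degenerate cases $\delta\ge 1$ or $\Pr[E_1]\Pr[E_2]=0$ as above.
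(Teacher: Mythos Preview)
Your proposal is correct and is essentially identical to the paper's own proof: the paper also routes through $E_1\cap E_2$, applies Lemma~\ref{lem:prob_ineq} in each conditioned space to get $\norm{\Exp[X\mid E_1E_2]-\Exp[X\mid E_i]}\le 2\delta M$ for $i=1,2$, and then uses the triangle inequality. Your remark that the hypothesis is intended as $\Pr[E_1^c\mid E_2]\le\delta$ and $\Pr[E_2^c\mid E_1]\le\delta$ is also correct --- the complements were dropped in the printed statement.
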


\begin{proof}
Let $Y = X|E_2$, by Lemma~\ref{lem:prob_ineq}, we have $||\Exp[Y|E_1] - \Exp[Y]||\le 2\delta M$, that is, $||\Exp[X|E_1E_2] - \Exp[X|E_2]||\le 2\delta M$. Similarly $||\Exp[X|E_1E_2] - \Exp[X|E_1]||\le 2\delta M$, and hence $||\Exp[X|E_1] - \Exp[X|E_2]||\le 4\delta M$.
\end{proof}

\begin{theorem}[Bernstein Inequality\cite{Bernstein} cf.~\cite{Bernnett62}]\label{thm:bernstein_ineq}
Let $x_1,\dots, x_n$ be independent variables with finite variance $\sigma_i^2 = \Var[x_i]$ and bounded by $M$ so that $|x_i - \Exp[x_i]|\le M$. Let $\sigma^2 = \sum_i \sigma_i^2$. Then we have
\[\Pr\left[\left|\sum_{i=1}^n x_i - \Exp[\sum_{i=1}^n x_i] \right| > t\right] \le 2\exp(- \frac{t^2}{2\sigma^2 + \frac{2}{3} M t})\]
\end{theorem}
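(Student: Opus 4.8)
The plan is to prove this by the standard Chernoff / exponential-moment method. First I would recenter the variables: set $z_i = x_i - \E[x_i]$, so that $\E[z_i]=0$, $\Var[z_i]=\sigma_i^2$, $|z_i|\le M$, and, writing $S = \sum_{i=1}^n z_i$, the claim becomes $\Pr[|S|>t]\le 2\exp\!\big(-t^2/(2\sigma^2+\frac{2}{3} Mt)\big)$. By a union bound it suffices to prove the one-sided estimate $\Pr[S>t]\le \exp\!\big(-t^2/(2\sigma^2+\frac{2}{3} Mt)\big)$; the bound on $\Pr[S<-t]$ follows by applying the one-sided estimate to the variables $-z_i$, and this is exactly where the leading factor $2$ comes from.

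Next, for any $\lambda>0$, Markov's inequality applied to $e^{\lambda S}$ together with independence gives $\Pr[S>t]\le e^{-\lambda t}\prod_{i}\E[e^{\lambda z_i}]$. The key step is bounding each moment generating factor. Expanding the exponential in a power series (absolute convergence is immediate from $|z_i|\le M$) and using $\E[z_i]=0$ together with the crude moment bound $|\E[z_i^k]|\le M^{k-2}\E[z_i^2]=M^{k-2}\sigma_i^2$ for $k\ge 2$, I get $\E[e^{\lambda z_i}]\le 1+\frac{\sigma_i^2}{M^2}\big(e^{\lambda M}-1-\lambda M\big)\le \exp\!\big(\frac{\sigma_i^2}{M^2}(e^{\lambda M}-1-\lambda M)\big)$, using $1+u\le e^u$. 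Taking the product over $i$ and writing $\sigma^2=\sum_i\sigma_i^2$ yields $\Pr[S>t]\le \exp\!\big(-\lambda t+\frac{\sigma^2}{M^2}(e^{\lambda M}-1-\lambda M)\big)$.

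It then remains to estimate $e^u-1-u$ and optimize over $\lambda$. Since $(j+2)!\ge 2\cdot 3^j$, for $0\le u<3$ we have $e^u-1-u=\sum_{k\ge 2}u^k/k!\le \frac{u^2}{2}\sum_{j\ge 0}(u/3)^j=\frac{u^2/2}{1-u/3}$; with $u=\lambda M$ this gives, for $0<\lambda<3/M$, the bound $\Pr[S>t]\le\exp\!\big(-\lambda t+\frac{\sigma^2\lambda^2/2}{1-\lambda M/3}\big)$. Finally I would plug in the near-optimal choice $\lambda = t/(\sigma^2+Mt/3)$, which lies in $(0,3/M)$ and satisfies $1-\lambda M/3=\sigma^2/(\sigma^2+Mt/3)$; then the second term in the exponent equals $\lambda^2(\sigma^2+Mt/3)/2=\lambda t/2$, so the exponent collapses to $-\lambda t/2=-\frac{t^2}{2(\sigma^2+Mt/3)}=-\frac{t^2}{2\sigma^2+\frac{2}{3} Mt}$, as desired.

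This is a classical inequality, so there is no genuine obstacle; the one place that needs care is the per-variable MGF bound, where one must exploit $|z_i|\le M$ to dominate all higher moments by the variance, recognize the resulting power series as $\frac{\sigma_i^2}{M^2}(e^{\lambda M}-1-\lambda M)$, and then apply the sharp estimate $e^u-1-u\le\frac{u^2/2}{1-u/3}$ — it is precisely this estimate (rather than a cruder one such as $e^u-1-u\le\frac{u^2}{2}e^u$) that produces the constant $\frac{2}{3}$ in the denominator. The choice of $\lambda$ is the other place to be slightly careful, but it is simply the exact minimizer of the convenient upper bound, not of the true exponent.
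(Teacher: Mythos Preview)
Your proof is correct and is the standard Chernoff--Cram\'er argument for Bernstein's inequality; the MGF bound via $|\E[z_i^k]|\le M^{k-2}\sigma_i^2$, the estimate $e^u-1-u\le\frac{u^2/2}{1-u/3}$, and the choice $\lambda=t/(\sigma^2+Mt/3)$ are all carried out cleanly. There is nothing to compare against, however: the paper does not prove this theorem but merely states it with citations to Bernstein and Bennett and uses it as a black box, so your write-up simply supplies the classical proof the paper omits.
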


\end{document}